\theoremstyle{theorem}
\newtheorem{theorem}{Theorem}
\newtheorem{claim}{Claim}
\newtheorem{fact}{Fact}
\newtheorem{lemma}{Lemma}
\newtheorem{corollary}{Corollary}
\newtheorem{remark}{Remark}
\newenvironment{reminder}[1]{\bigskip
	\noindent {\bf Reminder of #1.}\em}{\smallskip}
\theoremstyle{definition}
\newtheorem{definition}{Definition}
\newenvironment{proofof}[1]{\begin{proof}[{\textit{Proof of #1}}]}{\end{proof}}
\newcommand{\IT}{\mathbf{IT}}
\newcommand{\Ber}{\mathbf{Ber}}
\newcommand{\Bin}{\mathbf{Bin}}
\newcommand{\Lap}{\mathbf{Lap}}
\newcommand{\TLap}{\mathbf{TLap}}
\newcommand{\calA}{\mathcal{A}}
\newcommand{\calM}{\mathcal{M}}
\newcommand{\calB}{{\mathcal{B}}}
\newcommand{\calD}{\mathcal{D}}
\newcommand{\calE}{{\mathcal{E}}}
\newcommand{\calF}{\mathcal{F}}
\newcommand{\calX}{{\mathcal{X}}}
\newcommand{\calY}{{\mathcal{Y}}}
\newcommand{\Ex}{\mathbb{E}}
\newcommand{\eps}{\varepsilon}
\newcommand{\tamas}[1]{}
\newcommand{\edith}[1]{}
\newcommand{\xin}[1]{}
\newcommand{\uri}[1]{}
\newcommand{\ignore}[1]{}
\title{Generalized Private Selection and Testing with High Confidence}
\author{{\normalfont Edith Cohen}\thanks{Google Research and Tel Aviv University. \texttt{edith@cohenwang.com}.} \and Xin Lyu\thanks{UC Berkeley and Google Research. \texttt{lyuxin1999@gmail.com}. Most of the work was done while interning at Google in Summer 2022.} \and Jelani Nelson\thanks{UC Berkeley and Google Research. \texttt{minilek@alum.mit.edu}.} \and Tam\'{a}s Sarl\'{o}s\thanks{Google Research. \texttt{stamas@google.com}.} \and Uri Stemmer\thanks{Tel Aviv University and Google Research. \texttt{u@uri.co.il}.}}
\begin{document}

\maketitle

\begin{abstract}
Composition theorems are general and powerful tools that facilitate privacy accounting across multiple data accesses from per-access privacy bounds. However they often result in weaker bounds compared with end-to-end analysis.
Two popular tools that mitigate that are the exponential mechanism (or report noisy max) and the sparse vector technique. They were generalized in a couple of recent private selection/test frameworks, including the work by Liu and Talwar (STOC 2019), and Papernot and Steinke (ICLR 2022).

In this work, we first present an alternative framework for private selection and testing with a simpler privacy proof and equally-good utility guarantee. Second, we observe that the private selection framework (both previous ones and ours) can be applied to improve the accuracy/confidence trade-off for many fundamental privacy-preserving data-analysis tasks, including query releasing, top-$k$ selection, and stable selection.

Finally, for online settings, we apply the private testing to design a mechanism for adaptive query releasing, which improves the sample complexity dependence on the confidence parameter for the celebrated private multiplicative weights algorithm of Hardt and Rothblum (FOCS 2010). 
\end{abstract}

\section{Introduction}

Computations over large datasets often involve multiple intermediate accesses to the data by different algorithms with the final output being some function or aggregate of the outputs of these individual accesses.
When we are interested in differential privacy~\cite{DworkMNS16}, we typically have privacy or stability guarantees for each intermediate access which we would like to use to bound the end-to-end privacy cost.

One approach to do that is to use {\em composition} theorems~\cite{DBLP:conf/focs/DworkRV10}.  Compositions theorems yield overall privacy cost that scales linearly or sometimes even with square-root dependence in the number of accesses.  The disadvantage, however, is that compositions theorems are designed for a scenario where all the intermediate outputs are released and do not benefit in their privacy accounting from the 
particular way in which the results of the intermediate accesses are aggregated.
Powerful tools from the literature for mitigating this generally fall under the two categories of {\em private testing} and {\em private selection}.
Private testing includes 
the sparse vector technique~\cite{DNRRV:STOC2009,DBLP:conf/stoc/RothR10,DBLP:conf/focs/HardtR10}, where each access tests a hypothesis over the data, and only accesses with positive output incur a privacy ``charge.''  Private selection includes the exponential mechanism~\cite{DBLP:conf/focs/McSherryT07} that allows to select an approximate best solution (out of a large  set of possible solutions) %over multiple accesses by different mechanisms 
while incurring a privacy cost close to that of estimating the quality of a single solution.  These tools have many variations and extensions~\cite{DBLP:journals/pvldb/LyuSL17, DBLP:conf/colt/KaplanMS21,DBLP:conf/iclr/Papernot022}, most prominently, a fairly recent framework by Liu and Talwar \cite{LiuT19-private-select}, and its subsequent work by Papernot and Steinke \cite{DBLP:conf/iclr/Papernot022}.

\SetKwFunction{FSelect}{Selection}
\SetKwFunction{FTest}{Test}
\SetKwFunction{FBest}{Best}
    
Following \cite{LiuT19-private-select,DBLP:conf/iclr/Papernot022}, we propose an alternative and simpler template for private selection \emph{and} testing that is described in Algorithm~\ref{algo:sync}\footnote{In the ITCS proceeding version of the paper, we described our template as more ``general'' and ``versatile'' than prior works. That was not accurate, as asymptotically our applications to data-analysis tasks (which will be introduced in Section~\ref{sec:appselect}) can be recovered by plugging in the prior framework directly. See also Section~\ref{sec:discussions} for discussions about the relations between our work and prior ones.}.  We view it as a system that provides two functions to users, $\FSelect$ and $\FTest$. The calls to these functions can be interleaved and adaptive (that is, inputs may depend on prior outputs).
Algorithm~\ref{algo:sync} is initialized by randomly sampling an internal parameter $p$ that later serves as a ``pass'' probability (where the data is not accessed if not ``passed''). The distribution of $p$ is controlled by the parameter $\gamma$, where a smaller $\gamma$ pushes the distribution closer to $0$, lowering the privacy cost, and a higher $\gamma$ pushes the distribution closer to $1$, increasing utility.  Generally, the choice of $\gamma$ provides a tradeoff between privacy and utility. We will illustrate the tradeoff in several applications later.

The function $\FSelect$ inputs a parameter $\tau$ and a collection of $k$ mechanisms $\{M_i\}_{i=1}^k$, all with outputs from the same ordered domain\footnote{For example, the output space of $M_i$ can consist of pairs $(s, q)\in X\times \mathbb{R}$, where $s$ is a solution and $q\in \mathbb{R}$ is a quality score measuring how good $s$ is. We can then order pairs in the decreasing order of $q$, and use $\FSelect$ to select a solution with the best quality score. We remark that this is also the illustrating example presented in \cite{LiuT19-private-select}.}. It generates a collection $S$ of outputs and then returns $\FBest(S):=\max_{x\in S} x$. 
The parameter $\tau$ may be set differently in each call to $\FSelect$, and its role is to amplify utility by repetition. Intuitively, a larger $\tau$ can compensate for a smaller $p$ as each of the mechanisms $M_i$ %in the input to the $\FSelect$ call
produces in expectation $\tau p$ candidates for the $\FBest$ selection.
Surprisingly, as we shall soon see, the privacy accounting allows us to gain from separating $p$ and $\tau$.
The function $\FTest$ provides a basic functionality of hypothesis testing subjected to the ``pass'' probability $p$. Utility amplification (that also compensates for small pass probability $p$) can be achieved by multiple calls to $\FTest$ with the same hypothesis.

\paragraph*{Meta Privacy Theorems.} 
We establish
meta privacy theorems for the system in 
Algorithm~\ref{algo:sync}.  We first provide separate bounds for $\FSelect$ and $\FTest$ calls.

\begin{theorem}[Privacy guarantee for $\FSelect$]\label{theo:meta-selection}
Suppose that following initialization of Algorithm~\ref{algo:sync}, there are $c\ge 1$ calls to $\FSelect$ (with possibly varying inputs $\tau, k$, and $\{M_i\}_{i=1}^k$). If the mechanisms fed into $\FSelect$ in this process satisfy $(\eps,\delta_i)$-DP with $\delta_i\ge 0$, then the list of $c$ outputs satisfies $((2c+\gamma)\eps, \tau \sum_{i}\delta_i)$-DP.
\end{theorem}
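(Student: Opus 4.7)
My plan is to first analyze a single $\FSelect$ call conditional on the initialization parameter $p$, then aggregate across the $c$ conditionally-independent calls, and finally integrate out $p$ using its $\gamma$-parameterized distribution from Algorithm~\ref{algo:sync}.

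Fix neighboring datasets $D \sim D'$. For a single $\FSelect$ call with inputs $(\tau, k, \{M_i\}_{i=1}^k)$, conditional on $p$ the output $Y = \FBest(S)$ is the maximum of a set $S$ built via $\tau$ independent $\Ber(p)$ trials that, on success, invoke the $M_i$'s. Writing $F_i^D(y) := \Pr[M_i(D) \le y]$, the CDF of $Y$ admits a closed form $\Pr[Y \le y \mid p, D] = g_{p,\tau}(F_1^D(y), \ldots, F_k^D(y))$ for a specific polynomial $g_{p,\tau}$ determined by the Bernoulli trials. I would push the $(\eps,\delta_i)$-DP bound $F_i^D(y) \le e^{\eps} F_i^{D'}(y) + \delta_i$ (and its upper-tail analog) through $g_{p,\tau}$ to obtain a two-sided ratio bound on $\Pr[Y \le y \mid p, D] / \Pr[Y \le y \mid p, D']$ of the form $e^{2\eps}\, h(p)$ with additive error proportional to $\tau \sum_i \delta_i$; the factor $2$ arises because a max is characterized by both lower-tail and upper-tail CDF behavior, and $h(p)$ is a data-independent function of $p$ alone.

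Composing the per-call conditional bounds across the $c$ conditionally-independent $\FSelect$ calls yields an overall per-$p$ ratio bound of $e^{2c\eps} \cdot H(p)$, where $H(p)$ collects the $p$-dependent factors, and an additive error of $\tau \sum_i \delta_i$ (each $M_i$ contributes $\delta_i$ multiplied by its at-most-$\tau$ invocations via a union bound). Finally, I would take expectation over $p$: the distribution of $p$ chosen in Algorithm~\ref{algo:sync} is tailored so that $\Ex_p[H(p)] \le e^{\gamma \eps}$, yielding the claimed $((2c+\gamma)\eps, \tau \sum_i \delta_i)$-DP guarantee. I expect the main obstacle to be taming the $\tau$-th-power factors inside $g_{p,\tau}$: a naive pass-through of DP would blow up the privacy loss linearly in $\tau$, so the core of the argument is showing that the $\tau$-dependence is fully absorbed into $h(p)$ and only $\gamma\eps$ survives integration against the $p$-density, in a calculation analogous to the truncated-negative-binomial argument of Liu--Talwar but adapted to our pass-probability formulation.
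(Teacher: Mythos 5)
Your high-level skeleton (condition on $p$, analyze one call, compose, integrate out $p$) matches the paper's, but two of your key steps contain genuine gaps. First, the bound you plan to prove --- a ratio bound of the form $e^{2\eps}h(p)$ between $\Pr[\cdot \mid p, D]$ and $\Pr[\cdot \mid p, D']$ at the \emph{same} $p$, with $h(p)$ data-independent --- does not exist. Look at the event that no invocation of a non-selected mechanism $M_{i'}$ beats the chosen value $s$: conditional on $p$ its probability is $(1-p\,\alpha_D)^{\tau}$ with $\alpha_D=\Pr[M_{i'}(D)\text{ beats }s]$, and the ratio of such factors across $D,D'$ is unbounded when $p\,\alpha$ is near $1$ (and it compounds over $k$ and $\tau$), so no data-free $h(p)$ can absorb it. The paper's resolution is precisely \emph{not} to compare at a common $p$: it couples $\calA^p(D)$ with $\calA^{p/e^{\eps}}(D')$, so the selected coordinate costs $e^{2\eps}$ (one $e^{\eps}$ from the DP of $M_i$, one from the downward shift of the pass probability), while the ``every other invocation is absent or worse'' event becomes \emph{monotone} --- its probability only decreases under the shift --- and costs nothing. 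The $e^{\gamma\eps}$ term then comes from a change of variables $p\mapsto p/e^{\eps}$ against the density $\gamma p^{\gamma-1}$ in the final integral, not from $\Ex_p[H(p)]\le e^{\gamma\eps}$; even if your $h(p)$ existed, you could not pull $\Ex_p[H(p)]$ out of $\int H(p)\,\Pr[\cdot\mid p,D']\,dP(p)$, since both factors depend on $p$ and are correlated.

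Second, CDF/tail-ratio bounds on $Y=\FBest(S)$ are not sufficient for DP, which quantifies over \emph{all} measurable output events (and, for $c$ adaptive calls, over events of the whole interaction transcript): two distributions can have comparable lower and upper tails at every threshold yet place mass on disjoint sets, so the tail bounds cannot be upgraded to event-wise bounds. The paper instead proves a per-outcome inequality for the pair $(i,s)$ (selected mechanism and value) via the factorization above, observes that it holds uniformly over the adaptively chosen inputs to each call, and then composes over the transcript --- your phrase ``conditionally-independent calls'' skirts this adaptivity. Finally, pushing $F_i^D(y)\le e^{\eps}F_i^{D'}(y)+\delta_i$ through products of $\tau$-th powers is delicate and tends to inflate the additive term; the clean route, used in the paper, is the mixture decomposition $M_i=(1-\delta_i)N_i+\delta_i E_i$ together with a union bound over the at most $\tau$ invocations of each $M_i$, which is exactly where the $\tau\sum_i\delta_i$ term comes from.
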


There has been previous works \cite{LiuT19-private-select,DBLP:conf/iclr/Papernot022} showing how one can do private selection while only incurring a constant factor increase in the privacy parameter. See Section~\ref{sec:discussions} for a survey of previous works and how our algorithm fits into the context.

\begin{theorem}[Privacy guarantee for $\FTest$]\label{theo:meta-test}
Suppose that following initialization of Algorithm~\ref{algo:sync}, there are repeated calls to \FTest until $c\ge 1$ $\top$ responses are received. If all the mechanisms fed into \FTest in this process satisfy $(\eps,\delta_i)$-DP with $\delta_i\ge 0$, then the collection of outputs  satisfies $((2c+\gamma)\eps, \sum_{i}\delta_i)$-DP or $\left(O\left(\eps \gamma + \eps\sqrt{c\log(1/\delta)}\right), \delta + \sum_{i}\delta_i\right)$-DP for all $\delta \in (0, 2^{-\sqrt{c}})$.
\end{theorem}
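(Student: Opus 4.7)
The plan is to mirror the analysis behind Theorem~\ref{theo:meta-selection} for $\FSelect$, with the stopping rule ``keep calling $\FTest$ until $c$ $\top$'s are seen'' replacing the ``$c$ calls to $\FSelect$'' stopping rule used there. I would first prove the pure-DP bound $((2c+\gamma)\eps, \sum_i \delta_i)$ and then upgrade to the approximate-DP bound via advanced composition.

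Fix neighboring datasets $D,D'$ and consider the distribution over transcripts, i.e.\ sequences of responses that end at the $c$-th $\top$ and record each $\top$'s accompanying mechanism output. Conditioning on the shared initialization parameter $p$, the probability of a specific transcript factors into (i) the $\bot/\top$-pattern probabilities, which are data-dependent through each invoked mechanism's pass probability, and (ii) the output densities at the $c$ $\top$ events. The $(\eps,\delta_i)$-DP of the invoked mechanisms gives that the product of the $c$ output-density factors shifts by at most $e^{c\eps}$ between $D$ and $D'$, up to an additive $\sum_i\delta_i$ term absorbed via the standard bad-set decomposition of approximate DP. It remains to control the ratio of the $p$-integrals $\int f_\gamma(p)\,\Pr[\text{pattern}\mid D,p]\,dp$. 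Since the total number of calls $N$ is random and can be arbitrarily large, a naive per-call analysis of the pattern is hopeless; instead, the randomization of $p$ under $\gamma$ is what saves us, and I would invoke the same change-of-measure lemma that drives the $\FSelect$ proof to show that multiplying each per-trial pass probability by $e^{\pm\eps}$ changes the integral by at most $e^{(c+\gamma)\eps}$, uniformly in $N$. Combined with the output bound, this produces the pure-DP conclusion.

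For the approximate-DP bound, I would reinterpret the $2c\eps$ portion as $c$ per-$\top$ contributions of $O(\eps)$ each, apply the advanced composition theorem of Dwork, Rothblum, and Vadhan to convert $c\cdot O(\eps)$ into $O(\eps\sqrt{c\log(1/\delta)})$ at the cost of an additive $\delta$, and add back the one-time $\gamma\eps$ charge from sampling $p$; the range $\delta \in (0, 2^{-\sqrt{c}})$ is the usual regime in which advanced composition improves over basic composition. The main obstacle is the $p$-integral estimate above: it must be uniform in $N$, and its proof is the quantitative heart of the argument, relying crucially on the tail behavior of the distribution of $p$ engineered by $\gamma$, exactly as in the $\FSelect$ analysis.
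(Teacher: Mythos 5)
Your pure-DP half is essentially the paper's route (couple $\calA^{p}(D)$ with $\calA^{p/e^{\eps}}(D')$, use the one-sided structure so that $\perp$ responses cost nothing, and pay $\gamma\eps$ once for the change of measure in $p$), and that part is fine. The genuine gap is in the approximate-DP half: you propose to ``reinterpret the $2c\eps$ portion as $c$ per-$\top$ contributions of $O(\eps)$ each'' and then invoke the Dwork--Rothblum--Vadhan advanced composition theorem. The paper explicitly warns that the $O(\eps\sqrt{c\log(1/\delta)})$ bound does \emph{not} follow as a direct consequence of advanced composition, and the reason is concrete: after conditioning on $p$, the natural ``per-$\top$ mechanism'' is the segment of calls from one $\top$ response to the next, and this segment is \emph{not} an $O(\eps)$-DP mechanism between the two coupled worlds. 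In the reduction to the $0$-favored coin game (Algorithm~\ref{algo:coin-flip-asymmetric}), a segment ending in a $\top$ has likelihood $\prod_{j<i}(1-p_j)\,p_i$ versus $\prod_{j<i}(1-q_j)\,q_i$; the ratio in one direction is at most $e^{O(\eps)}$ (this is Lemma~\ref{lemma:max-divergence-one-sided} and gives your pure-DP bound), but in the other direction each $\perp$ multiplies the ratio by a factor up to $e^{O(\eps)}$, and since the number of $\perp$ responses inside a segment is unbounded, the reverse max-divergence is unbounded. Advanced composition requires two-sided $(\eps,0)$-DP of each composed mechanism, so its hypotheses are simply violated; there is no way to ``convert $c\cdot O(\eps)$ into $O(\eps\sqrt{c\log(1/\delta)})$'' by citing that theorem.

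What replaces it in the paper is the actual technical heart of Theorem~\ref{theo:meta-test}: a one-sided R\'enyi-divergence bound for the coin game, $D_{\alpha}(P\|Q)\le 3k\alpha\eps^{2}$ (Lemma~\ref{lemma:divergence-one-sided}), proved by a delicate round-by-round induction (Claim~\ref{claim:induction-divergence}) that exploits the constraint $q=e^{-\lambda}p$ with $0\le\lambda\le\eps$ and a case split on whether the halting probability of the current round is large; note that even the standard ``$\eps$-DP implies $D_{\alpha}\le\alpha\eps^{2}/2$'' conversion cannot be applied per segment, because it too needs the two-sided guarantee. Only after this R\'enyi bound does one convert to approximate DP (H\"older / RDP-to-$(\eps,\delta)$ conversion) and then integrate over $p$ to collect the extra $\gamma\eps$. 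Relatedly, you locate the ``main obstacle'' in making the $p$-integral uniform in the number of calls $N$; that step is in fact handled exactly as in the $\FSelect$ proof and is not the difficulty --- the missing idea in your proposal is the substitute for advanced composition in the one-sided setting.
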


Liu and Talwar \cite{LiuT19-private-select} also studied the question of constructing a generalized $\FTest$ procedure. They showed that, when there is a perfect ``percentile oracle'' associated with the base algorithm $\calA$ (in our context, this is an oracle $p(D)$ such that $p(D) := \Pr[H(D) = 0]$), then one can design a simple $\FTest$ procedure with pure-DP guarantee. However, when such an oracle is not immediately available, \cite{LiuT19-private-select} designed a sophisticated method to estimate $p(D)$ by repeatedly running $H$. As a result, the algorithm and analysis get substantially more involved, and it only achieves approximate DP. Theorem~\ref{theo:meta-test} provides an algorithm that can achieve pure DP without assuming the percentile oracle.

The statements of Theorem~\ref{theo:meta-selection} and Theorem~\ref{theo:meta-test} provide separate bounds for the privacy cost of each type of call ($\FSelect$ or $\FTest$).  Importantly, our unified framework allows for interleaved calls where the inputs are adaptive (depending on prior outputs).  In this case, the privacy costs simply add up except for the $\gamma \eps$ factor:
If there are $c_1$ calls to $\FSelect$ and $c_2$ calls with $\top$ output to  $\FTest$, each with mechanisms that are
$(\eps,0)$-DP,  then the total privacy cost is $((2c_1 + 2c_2 +\gamma)\eps,0)$-DP.   

The proofs of the meta-privacy theorems are provided in Section~\ref{sec:metaprivacyproofs}.
Note that the $O(\eps \sqrt{c\log(1/\delta)})$ dependence in the statement of Theorem~\ref{theo:meta-test} does not follow as a direct consequence of advanced composition \cite{DBLP:conf/focs/DworkRV10}. It requires some delicate calculations and is one of the primary technical contributions of our work.

\begin{algorithm2e}[h]
\LinesNumbered
    \caption{Private Selection and Testing}
    \label{algo:sync}
    
    \SetKwProg{Fn}{Function}{:}{}
    \SetKwProg{Pg}{Program}{:}{\KwRet}
    
    \SetKwFunction{FSelect}{Selection}
    \SetKwFunction{FTest}{Test}
    
    \DontPrintSemicolon
    
    \SetKwFunction{FMain}{Main}
    \SetKwFunction{Ftest}{Test}
    
    \KwIn{
        A dataset $D = \{x_1,\dots, x_n\}$. Parameter $\gamma\in (0,\infty)$.
    }
    \SetKwProg{Init}{initialize}{:}{}
    \Init{}{
        Sample $p$ from $[0,1]$ where $\Pr[p \le x] = x^{\gamma}, \forall x\in [0,1]$.
        %$p\sim [0,1]$ \tcp*[f]{$p$ is uniform over $[0,1]$} 
    }
    
    \Fn(\tcp*[f]{$(M_i)$ are mechanisms with outputs from an ordered domain}){\FSelect{$\tau, k, M_1,\dots, M_k$}}{
        $S\gets \emptyset$\;
        \For{$i=1,\dots, k$}{
            \For{$j=1,\dots, \tau$}{
                $r_{i,j}\sim \Ber(p)$\;
                \If{$r_{i,j} = 1$}{
                    $s_{i,j}\gets M_i(D)$\;
                    $S \gets S \cup \{s_{i,j}\}$\;
                }
            }
        }
        \KwRet{$\FBest(S)$} \tcp*[f]{The top value in the ordered set $S$.} 
    }
    
    \Fn(\tcp*[f]{$H$ is a private algorithm with output $\{\top, \perp\}$}){\Ftest{$H$}}{
        $r\sim \Ber(p)$ \;
        \If{$r = 1$}{
            $\Gamma \gets H(D)$ \;  
            \KwRet{$\Gamma$}
        }
        \Else{
            \KwRet{$\perp$}
        }
    }
\end{algorithm2e}

\subsection{Relation to Prior Work}\label{sec:discussions}

Our framework and its applications can be viewed as both a natural extension of the sparse vector technique (SVT)~\cite{DNRRV:STOC2009,DBLP:conf/stoc/RothR10,DBLP:conf/focs/HardtR10} and an alternative to the private selection framework of
Liu and Talwar \cite{LiuT19-private-select}, and its subsequent work by Papernot and Steinke \cite{DBLP:conf/iclr/Papernot022}.

\paragraph{Comparison with SVT.} 
In a nutshell, SVT allows one to test hypotheses of the form ``$f_i(D) \lessapprox  t$'' (i.e., test if $f_i(D)$ is approximately below the threshold $t$ up to small additive error) while only incurring privacy loss for ``Above'' results. 
We first show that \FTest calls provide this functionality:
To test if $f_i(D) \lessapprox t$ we  specify an algorithm $H$ as follows: $H$ computes $\hat{f}_i(D) = f_i(D) + \Lap(1/\eps)$ and outputs $\top$ if and only if $\hat{f}_i(D) \ge t$. 
It is easy to see that $H$ is $(\eps,0)$-private. Given the pass parameter $p$, a single call of $\FTest(H)$ is insufficient for a reliable response and we therefore amplify the confidence by repetition. For example, when $p\sim [0,1]$ it holds that 
$\Pr[p\ge \beta] = 1-\beta$, therefore, if we 
repeat $\FTest(H)$ for $\frac{1}{\beta^2}$ times and none of these test returns $\top$, we are very confident that $f_i(D)\lessapprox t$. On the other hand, if $f_i(D) \le t - \frac{10}{\eps}\log(1/\beta)$, the probability that we observe a $\top$ response is at most $\beta$. 

Our meta-privacy proofs 
for \FSelect and \FTest turn out to be natural generalization of the proof of the standard SVT \cite{DBLP:journals/fttcs/DworkR14}. The role of the ``pass'' probability $p$ in our algorithm is very similar to the noisy threshold in SVT, and we prove the privacy property by carefully coupling the executions of Algorithm~\ref{algo:sync} on two neighboring inputs $D,D'$ with different $p$ parameters. See Section~\ref{sec:metaprivacyproofs} for the detail.

In Section~\ref{sec:apptest}, we further extend our framework to design a private \FTest procedure, which yields improved confidence-accuracy tradeoffs for SVT. In doing so, we borrow insights from several previous works. This includes the observation that SVT does not necessarily need to re-initialize the noisy threshold \cite{DBLP:journals/pvldb/LyuSL17}, the framework that uses R\'enyi-Differential Privacy \cite{DBLP:conf/csfw/Mironov17-renyi} to analyze SVT \cite{DBLP:conf/nips/ZhuW20}, and the reductions from analyzing general mechanisms to analyzing simpler coin-flipping games with binary outcomes \cite{DBLP:conf/soda/GuptaLMRT10, DBLP:journals/corr/OhV13, DBLP:journals/toc/MurtaghV18, DBLP:conf/colt/KaplanMS21}.

\paragraph{Comparison with prior private selection algorithms.} The hypothesis testing functionality $\Ftest$ improves and simplifies the previous algorithm in \cite{LiuT19-private-select}. It does not seem to have an analogue in \cite{DBLP:conf/iclr/Papernot022}. In the following, we compare our $\FSelect$ functionality with prior works.

The remarkable work of Liu and Talwar \cite{LiuT19-private-select} initiated the study of private selection from private candidates. They showed that, given an $(\eps,0)$-DP algorithm $\calA$, privately running $\calA$ for $\mathbf{T}$ rounds and selecting the best trial preserves $(3\eps,0)$-DP, as long as $\mathbf{T}$ is a random variable drawn from a Geometric distribution. They also showed that, to perform private selection with a reasonably good utility, one must suffer from a factor of $2$ in the privacy degradation. Liu and Talwar also gave a more involved version of the private selection algorithm with a factor of $(2+\alpha)$ blowup in privacy, where $\alpha\in (0,1)$ can be arbitrarily small (and the price is a larger running time).

The subsequent work by Papernot and Steinke \cite{DBLP:conf/iclr/Papernot022} presented a substantial improvement over the Liu-Talwar algorithm. They showed that, when $\mathbf{T}$ is a Poisson or a truncated negative binomial distribution, repeating the base algorithm $\calA$ for $\mathbf{T}$ times and selecting the best outcome preserve $((2+\alpha)\eps,0)$-DP where $\alpha$ can be arbitrarily small (notably, the algorithm is considerably simpler than the $((2+\alpha)\eps,0)$-DP algorithm of Liu and Talwar). They also extended the private selection framework to the R\'enyi-DP case, and conducted extensive experiments to compare different versions of private selection. 

Our result adds a class of new distributions to perform private selection. Namely, Theorem~\ref{theo:meta-selection} implies that, when $\mathbf{T}$ obeys a Beta-binomial distribution or a uniform distribution, we also obtain a private selection algorithm with $((2+\alpha)\eps,0)$-DP guarantee. In this work, we mainly consider the pure-DP setting. It is an interesting question to see if our result can generalize to the R\'enyi-DP setting. In terms of pure-DP, our algorithm offers similar trade-offs as the Papernot-Steinke algorithm. It is also interesting to evaluate and compare the practical performance of our variants of private selection.

\subsection{Applications}

We overview applications of our framework
in Section~\ref{sec:appLK}, where we demonstrate improvements of results in \cite{LiuT19-private-select}, and in Section~\ref{sec:appselect}, where we list direct applications of private selection to fundamental tasks in differential privacy, including
query releasing, top-$k$ selection and stable selection

A separate major contribution of our work is detailed in Section~\ref{sec:apptest}, where we apply our privacy accounting theorem (Theorem~\ref{theo:meta-test}) to improve the sparse vector techniqe (SVT) and through that, improve the private multiplicative weight update algorithm \cite{DBLP:conf/focs/HardtR10} and the sample complexity for adaptive data analysis with linear queries.

\subsection{Private Selection: Improving better-than-median selection} \label{sec:appLK}

Liu and Talwar~\cite{LiuT19-private-select} formulated the problem of finding a better-than-median solution of a private algorithm using oracle calls to the algorithm. The problem was also implicitly studied in \cite{DBLP:conf/soda/GuptaLMRT10}.

\begin{definition} [Private Better-than-median] \label{def:pBtM}
Let $\calA:\calX^n \to \calY\times \mathbb{R}$ be an $(\eps, \delta)$-DP algorithm, where the output of $\calA$ consists of a solution $y\in \calY$ and a score $s\in \mathbb{R}$. 

Fix $D\in \calX^n$ to be a dataset. Let $s_m = \mathrm{median}\{s(\calA(D))\}$ be the median score one gets when running $\calA$ on $D$.  A {\em better-than-median algorithm} $\calA^*$ with confidence parameter $\beta$ uses oracle calls to $\calA$ and is such that the score of the output of $\calA^*(D)$ is larger than $s_m$ with probability at least $1 - \beta$.
\end{definition}

We offer an algorithm that provides a smooth trade-off between utility and efficiency:
\begin{theorem}\label{theo:privacy-runtime-tradeoff}
For every $\alpha\in (0,\infty)$ and $\beta \in (0, 1)$, for $(\eps,\delta)$-DP $\mathcal{A}$ with $\eps\in (0,1)$ and $\delta\ge 0$, there is a better-than-median algorithm $\calA^*$ with confidence $1-\beta$ that satisfies the following: 
\begin{itemize}
\item With probability $1$, $\calA^*$ makes at most $T$ oracle calls to $\calA$, where
\[
T = \begin{cases}
~~\left\lceil \frac{2}{\beta} \right\rceil & \text{if $\alpha = 1$} \\
\left\lceil 5(\frac{2}{\beta})^{1/\alpha} \log(1/\beta)\right\rceil & \text{otherwise}
\end{cases}.
\]
\item $\calA^*$ is $((2+\alpha)\eps, T\delta)$-differentially private.
\end{itemize}
\end{theorem}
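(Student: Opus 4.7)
The plan is to realize $\calA^*$ as a single call to $\FSelect(T,1,\calA)$ after initializing Algorithm~\ref{algo:sync} with parameter $\gamma:=\alpha$. Since $\FSelect$ invokes its input mechanism only when the Bernoulli-$p$ coin comes up heads, the number of oracle calls to $\calA$ is $N\sim\Bin(T,p)\le T$ with probability one, so the first bullet is immediate. For the second bullet, Theorem~\ref{theo:meta-selection} applied with $c=1$, $k=1$, $\tau=T$ gives privacy $((2+\gamma)\eps,T\delta)=((2+\alpha)\eps,T\delta)$-DP, as required.

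The content of the proof is therefore the utility analysis: I need to show that $T$ as specified suffices so that the returned solution exceeds $s_m$ with probability $\ge 1-\beta$. Conditioned on $N$, the best of $N$ independent runs of $\calA$ has score exceeding $s_m$ with probability at least $1-(1/2)^N$ by definition of the median $s_m$. Hence the failure probability of $\calA^*$ is at most $\mathbb{E}[(1/2)^N]=\mathbb{E}_p[(1-p/2)^T]$, and it suffices to bound this by $\beta$.

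For $\alpha=1$, $p$ is uniform on $[0,1]$, and a direct integration gives
\[
\mathbb{E}_p[(1-p/2)^T]=\int_0^1 (1-x/2)^T\,dx=\tfrac{2}{T+1}\bigl(1-(1/2)^{T+1}\bigr)\le \tfrac{2}{T+1},
\]
which is $\le\beta$ for $T=\lceil 2/\beta\rceil$. For $\alpha\ne 1$, I would split the expectation at the threshold $\delta:=(\beta/2)^{1/\alpha}$: using $\Pr[p\le \delta]=\delta^{\alpha}$ and monotonicity,
\[
\mathbb{E}_p[(1-p/2)^T]\le \Pr[p\le\delta]+(1-\delta/2)^T\le \delta^{\alpha}+e^{-T\delta/2}=\tfrac{\beta}{2}+e^{-T\delta/2}.
\]
It then remains to verify $T\delta/2\ge\log(2/\beta)$, i.e.\ $T\ge 2(2/\beta)^{1/\alpha}\log(2/\beta)$, which is comfortably implied by the stated $T=\lceil 5(2/\beta)^{1/\alpha}\log(1/\beta)\rceil$ after absorbing the $\log 2$ slack into the constant $5$.

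There is essentially no technical obstacle here: the heavy lifting is done by Theorem~\ref{theo:meta-selection}, and the rest is a single Bernoulli-tail calculation. The only delicate bookkeeping points are (i) confirming that the constant $5$ is large enough to absorb the gap between $\log(1/\beta)$ and $\log(2/\beta)$ for the relevant range of $\beta$, and (ii) observing that extreme regimes (e.g.\ $\beta$ close to $1$, or $\alpha\to 0$) are handled by the slack introduced by the ceilings, so the cases $\alpha=1$ and $\alpha\ne 1$ together cover all parameter settings claimed by the theorem.
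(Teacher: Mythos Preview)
Your proposal is correct and follows essentially the same approach as the paper: initialize Algorithm~\ref{algo:sync} with $\gamma=\alpha$, make a single call to $\FSelect(T,1,\calA)$, invoke Theorem~\ref{theo:meta-selection} for privacy, and bound the failure probability by conditioning on the value of $p$. The only cosmetic difference is that for $\alpha\ne 1$ the paper phrases the second step as a binomial concentration bound (arguing $\Bin(T,p)\ge\log(4/\beta)$ with high probability once $p$ is above the threshold), whereas you compute the moment generating function $\mathbb{E}[(1/2)^N\mid p]=(1-p/2)^T$ directly and bound it by $e^{-T\delta/2}$; your version is slightly cleaner but equivalent in spirit and in the constants it requires.
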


\begin{remark}
We note that one can obtain similar (i.e., asymptotically matching) results as Theorem~\ref{theo:privacy-runtime-tradeoff} by using the Liu-Talwar algorithm or the Papernot-Steinke algorithm. Since this work mainly focuses on asymptotic behaviors of privacy-preserving data-analysis tasks, we did not attempt to compare the hidden constants in running-time/utility/privacy with the different versions of instantiation of Theorem~\ref{theo:privacy-runtime-tradeoff}.
\end{remark}

Theorem~\ref{theo:privacy-runtime-tradeoff} follows using Algorithm~\ref{algo:sync}:
Consider the case $\alpha=1$ and (arbitrary) $\beta \in (0,1)$. $\calA^*$ initializes 
Algorithm~\ref{algo:sync} with $\gamma = 1$ and
calls \FSelect with arguments $(k = 1, \tau = \lceil\frac{2}{\beta} \rceil, M_1 = \calA)$. 
It follows from Theorem~\ref{theo:meta-selection} that $\calA^*$ is $(3\eps,\tau \delta)$-DP. As for the utility, note that $\FSelect$ runs $\calA$ for $m\sim \Bin(\tau, p)$ times where $p$ is uniformly random in $[0,1]$. Equivalently, $m$ is uniformly random in $\{0,1,\dots, \tau\}$\footnote{As for all $m\in\{0,1,\ldots,\tau\}, \Ex_{p\sim U[0,1]} {\tau \choose m} p^m(1-p)^{\tau-m} = 1/(\tau+1)$.}. Therefore, the probability that $\calA^*$ fails to get a better-than-median solution is at most
\[
\frac{1}{\tau + 1} \sum_{m=0}^{\tau} 2^{-m} \le \frac{2}{\tau + 1} \le \beta.
\]

The privacy-computation tradeoffs by varying $\alpha$, as stated in Theorem~\ref{theo:privacy-runtime-tradeoff}, are obtained by varying $\gamma$ in the initialization of Algorithm~\ref{algo:sync}. We provide details in Section~\ref{sec:proof-LT}. 
%In contrast, in the regime $\alpha \in (0,1)$, the Liu-Talwar algorithm (Theorem~\ref{theo:LT-amplification}) uses estimates of the ``quantiles'' of the outputs of $\calA$ and a complex simulation of SVT.
%Consequently, they only obtain approximate-DP guarantees\footnote{Namely, the parameter $\hat{\delta}$ has to be non-zero in Theorem~\ref{theo:LT-amplification}.} and utilize many more (in the \emph{exponent}) oracle calls to $\mathcal{A}$. Our improvement in the number of oracle calls in the regime $\alpha < 1$ answers a question posed in \cite{LiuT19-private-select}.

% An additional advantage of Theorem~\ref{theo:privacy-runtime-tradeoff} is that it applies in the regime $\alpha > 1$ that allow for dramatic improvement in the number of oracle calls when we sacrifice a bit more privacy. For example, with $\alpha = 2$, the number of oracle calls has square-root dependence on $1/\beta$ and with $\alpha = \log(1/\beta)$, the number of oracle calls is only logarithmic in  $1/\beta$.

%We also note that the work by Papernot and Steinke \cite{DBLP:conf/iclr/Papernot022}. {\color{red} TODO: acknowledge. }

\subsection{Direct Applications of Private Selection} \label{sec:appselect}

In the following, we mention several applications of the private selection algorithm. We note that the applications in this subsection can also be obtained by using the Liu-Talwar or the Papernot-Steinke algorithm directly (i.e., all of the three algorithms give asymptotically the same utility/privacy trade-offs). The primary contribution in this subsection is making the connection between the private selection framework and these concrete applications.

\subsubsection{Query Releasing} 

Consider the task of answering $k$ sensitivity-$1$ queries $f_1,\dots, f_k: \calX^n \to \mathbb{R}$. A fundamental and extensively studied~\cite{SteU17-selection-lb, GaneshZ2021, DBLP:conf/colt/Ghazi0M21, DBLP:journals/corr/DaganK20-bounded}
problem is to characterize the privacy-accuracy tradeoff for the query releasing task. That is, for a given privacy budget $(\eps,\delta)$, determine the smallest error $e$, such that there is an $(\eps,\delta)$-DP algorithm that releases answers to all of the $k$ queries within $\ell_{\infty}$-error $e$. 
A naive application of the Gaussian mechanism gives expected $\ell_\infty$-error $O\left(\frac{\sqrt{k\log(1/\delta)\log(k)}}{\eps} \right)$, which was improved to $O\left(\frac{\sqrt{k\log(1/\delta)\log\log(k)}}{\eps} \right)$ by Steinke and Ullman \cite{DBLP:journals/jpc/SteinkeU16}. Recently, the question was raised again as an open question \cite{DPorg-open-problem-avoid-union} and two significant advances were made \cite{DBLP:conf/colt/Ghazi0M21,DBLP:journals/corr/DaganK20-bounded}. We resolve the problem completely by applying the private selection framework together with an algorithm from \cite{DBLP:conf/colt/Ghazi0M21}:

\begin{theorem}\label{theo:release-k-queries-non-adaptive}
There is a constant $C > 0$ such that the following holds. For every $\eps \in (0,1),\delta \in (0,1/2)$ and $k\in \mathbb{N}$, there is an $(\eps, \delta)$-DP algorithm that answers $k$ given sensitivity-$1$ queries $f_1,\dots, f_k$, such that
\[
\Pr_{\tilde{f}}\left[\| f - \tilde{f} \|_{\infty} \le C \frac{\sqrt{k\log 1/\delta}  + \log(1/\delta)}{\eps} \right] = 1,
\]
where $\tilde{f} := (\tilde{f}_1,\dots, \tilde{f}_k)$ denotes the responses released by the algorithm.
\end{theorem}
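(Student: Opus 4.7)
My plan is to apply the \FSelect primitive of Algorithm~\ref{algo:sync} on top of the Ghazi--Kumar--Manurangsi mechanism of \cite{DBLP:conf/colt/Ghazi0M21} so as to amplify its constant-probability $\ell_\infty$-error guarantee into one that holds up to the $\delta$-slack of approximate differential privacy. Let $\mathcal{A}_{\text{GKM}}$ denote their $(\eps/6,\delta')$-DP algorithm achieving $\ell_\infty$-error at most $E_0=O(\sqrt{k\log(1/\delta')}/\eps)$ with probability at least $1/2$, where $\delta'$ will ultimately be set to $\Theta(\delta^2)$. I wrap it into a mechanism $M$ that outputs the pair $(\tilde f,\hat q)$ with $\hat q:=-\|\tilde f-(f_1(D),\ldots,f_k(D))\|_\infty+\Lap(6/\eps)$; since the $\ell_\infty$-distance has sensitivity $1$ in $D$, basic composition gives that $M$ is $(\eps/3,\delta')$-DP, and its outputs are totally ordered by $\hat q$.

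Initialize Algorithm~\ref{algo:sync} with $\gamma=1$ and call $\FSelect(k=1,\tau=T,M_1=M)$ with $T=\lceil 4/\delta\rceil$. Theorem~\ref{theo:meta-selection} guarantees that the returned $(\tilde f^*,\hat q^*)$ is $((2+\gamma)\cdot\eps/3,\,T\delta')$-DP $=(\eps,\delta/4)$-DP once we pick $\delta'=\delta/(4T)$. For utility, the number of invocations of $M$ inside \FSelect is $m\sim\Bin(T,p)$ with $p\sim U[0,1]$, so $m$ is marginally uniform on $\{0,1,\dots,T\}$. Hence the probability that every invocation overshoots $E_0$ is at most $\Ex_m[2^{-m}]\le 2/(T+1)\le\delta/2$. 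A union bound over the $T$ Laplace perturbations shows that with probability $\ge 1-\delta/4$ all of them are bounded in magnitude by $L=O(\log(T/\delta)/\eps)=O(\log(1/\delta)/\eps)$, in which case the noisy argmax misranks the true qualities by at most $2L$. Combining, with probability $\ge 1-\delta$ the selected $\tilde f^*$ satisfies $\|\tilde f^*-f\|_\infty\le E_0+2L=C(\sqrt{k\log(1/\delta)}+\log(1/\delta))/\eps$ for a suitable constant $C$.

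Finally, to upgrade this "$\ge 1-\delta$" bound to the literal probability-$1$ statement, I would spend a small residual slice of the privacy budget on a truncated-Laplace post-check that privately verifies $\hat q^*\ge -E/2$ (a $(\eps',\delta/4)$-DP primitive whose noise is bounded deterministically); on failure, output a publicly-specified fallback vector chosen inside a bounded box of radius $E$ around $\tilde f^*$ ensured by the truncation. Because the "accept" branch satisfies $\|\tilde f^*-f\|_\infty\le E$ by the analysis of the previous paragraph and the "reject" branch satisfies it by construction of the fallback, the error bound becomes deterministic, and the final privacy loss is $(\eps,\delta)$-DP by basic composition. The main obstacle is precisely this deterministic-conversion step: the private selection analysis inherently yields only a high-probability guarantee, and converting to probability $1$ requires threading a bounded-support verification primitive whose own error is already subsumed by the additive $\log(1/\delta)/\eps$ term in the target bound. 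Everything else is a fairly mechanical application of Theorem~\ref{theo:meta-selection} in the spirit of the better-than-median analysis of Section~\ref{sec:appLK}.
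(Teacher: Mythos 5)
Your reduction to \FSelect{} and the $(1-\delta)$-probability analysis are essentially sound and in the spirit of the paper's proof (which also runs the better-than-median selection on top of the GKM mechanism with a noisy quality score). The genuine gap is in your final ``upgrade to probability $1$'' step, and specifically in the reject branch. Your fallback is ``a publicly-specified vector inside a box of radius $E$ around $\tilde f^*$,'' but proximity to $\tilde f^*$ says nothing about proximity to the true answer vector $f(D)$: the reject branch is triggered exactly when $\tilde f^*$ may be badly wrong, so a point within distance $E$ of $\tilde f^*$ can have unbounded $\ell_\infty$-error. Hence the claim that ``the reject branch satisfies it by construction'' fails, and the probability-$1$ statement is not established. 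A secondary issue is the accept branch: as written, the score $\hat q^*$ carries \emph{unbounded} Laplace noise, so the event $\hat q^*\ge -E/2$ does not certify $\|\tilde f^*-f\|_\infty\le E$; you would need to re-evaluate the true error of the selected vector with a bounded-support (truncated Laplace) perturbation so that passing the check is a deterministic certificate. That is precisely the device the paper uses, except it bakes it into each candidate: every candidate releases $(\tilde f, s)$ where $s = \|\tilde f - f(D)\|_\infty + 2\log(1/\delta)/\eps + \TLap(\eps/2,\delta/2)$, so $s$ upper-bounds the error with probability one, and selection minimizes $s$.

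The paper's resolution of the reject branch is the piece you are missing: if the selected certified score $s$ exceeds the target threshold, the algorithm outputs the \emph{exact} answers $(f_1(D),\dots,f_k(D))$ non-privately. This has zero error, so the error bound holds with probability one in both branches; privacy is preserved because the correction is invoked with probability at most $O(\delta)$ (by the better-than-median guarantee with confidence $\beta=\Theta(\delta)$), and this event is simply charged to the $\delta$ parameter of approximate DP. If you replace your fallback with this non-private correction, and replace the plain Laplace score (or your post-check) with a truncated-Laplace certified score, your argument matches the paper's and goes through; as it stands, the deterministic guarantee is not proved.
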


\begin{remark}
The $\frac{\log(1/\delta)}{\eps}$ term in the error is necessary if one desires the answers to be accurate with probability one. There is a lower bound even for the case $k = 1$\footnote{To see this lower bound, let $\calA$ be an arbitrary query-releasing mechanism. We construct a sensitivity-$1$ function $f$ and two private inputs $D^0,D^1$ of distance $\frac{\log(1/\delta)}{5\eps}$ such that $f(D^0)-f(D^1) = \frac{\log(1/\delta)}{5\eps}$. Since $\calA$ is $(\eps,\delta)$-DP, the supports of $\calA(f,D^0)$ and $\calA(f,D^1)$ overlap, which means that there is $v\in \mathbb{R}$ and $D\in\{D^0,D^1\}$ such that $|v-f(D)| \ge \frac{\log(1/\delta)}{10\eps}$ but $\Pr[\calA(f,D)=v] > 0$. }.
\end{remark}

Theorem~\ref{theo:release-k-queries-non-adaptive} is tight, and improves on the two incomparable previous works \cite{DBLP:conf/colt/Ghazi0M21,DBLP:journals/corr/DaganK20-bounded}.

\begin{itemize}
    \item \cite{DBLP:journals/corr/DaganK20-bounded} shows how to achieve the aforementioned accuracy with probability $1$. However, their algorithm only works for $\delta > 2^{-\frac{k}{\log^2 k \log^4(\log k)}}$.
    \item \cite{DBLP:conf/colt/Ghazi0M21} shows an algorithm for the full range of $\delta \in (0,1/2)$. However, their algorithm only promises to release an answer vector $\tilde{f}$ that is accurate (in $\ell_{\infty}$ sense) with probability $1-\frac{1}{\mathrm{poly}(k)}$. It was posed as an open question in \cite{DBLP:conf/colt/Ghazi0M21} whether one can release accurate responses (within $\ell_{\infty}$ distance) with probability one. Theorem~\ref{theo:release-k-queries-non-adaptive} answers this question affirmatively.
\end{itemize}

\paragraph*{Proof intuition.} In \cite{DBLP:conf/colt/Ghazi0M21}, it was shown how to $(\eps,\delta^2)$-privately release a vector $(\tilde{f}_1,\dots, \tilde{f}_k)$ such that, with probability at least $\frac{1}{2}$, one has
\[
\| f - \tilde{f} \|_{\infty} \le O\left( \frac{\sqrt{k\log(1/\delta)}}{\eps} \right).
\]
In the following, we use $\calA$ to denote the GKM algorithm with privacy parameters $(\eps,\delta^2)$. 

The high level idea is to use the better-than-median algorithm of Theorem~\ref{theo:privacy-runtime-tradeoff} to amplify the success probability. Consider applying Theorem~\ref{theo:privacy-runtime-tradeoff} on top of $\calA$ with $\alpha = 1$ and $\beta = \delta$. Theorem~\ref{theo:privacy-runtime-tradeoff} shows that the number of oracle calls to $\calA$ is bounded by $O(1/\delta)$. Then, it follows that the resulting algorithm is $(O(\eps),O(\delta))$-differentially private. Moreover, the algorithm releases an accurate vector with probability $1-\delta$. We can further improve the success probability to $1$ by non-privately correcting erroneous vectors. Since we do the non-private correction with probability at most $\delta$, the result algorithm is still $(O(\eps),O(\delta))$-DP. See Section~\ref{sec:proof-query-release} for the detail.

\subsubsection{Top-$k$ Selection}

We consider optimization over a finite set, which is another fundamental task in DP. Suppose there are $m$ candidates. Each of them is associated with a sensitivity-$1$ score function $f_1,\dots, f_m : \mathcal{X}^n\to \mathbb{R}$. We want to select $k$ candidates from the $m$ options with largest scores. This is a well-known and fundamental task that has been studied extensively (see, e.g., \cite{DBLP:conf/focs/McSherryT07, DBLP:journals/fttcs/DworkR14, McS20-permute, DBLP:conf/nips/DurfeeR19, DongDR20-exp, DBLP:conf/icml/QiaoSZ21}).

To measure the utility of a selection algorithm, a common criterion is the suboptimality gap. Suppose the algorithm releases a subset $S\subseteq [m]$ of candidates. The suboptimality gap of $S$ is defined as
\[
\Gap(S) := \max_{i\not\in S} \{ f_i(D) \} - \min_{j\in S} \{ f_j(D) \}.
\]
Namely, this is the difference between the \emph{best} option outside $S$ and the \emph{worst} option inside $S$.

The standard algorithm for top-$k$ applies the exponential mechanism sequentially for $k$ times. If we aim for $(\eps,\delta)$-DP for the final algorithm, each round of the exponential mechanism needs to be $(\frac{\eps}{\sqrt{k\log(1/\delta)}}, 0)$-DP. Let $\mathbf{S}$ denote the output of this algorithm. The standard calculation shows that
\[
\Pr\left[ \Gap(\mathbf{S})\ge \frac{\sqrt{k\log(1/\delta)}}{\eps} \cdot \log(m/\beta) \right] \le \beta.
\]
Namely, with probability $1-\beta$, the suboptimality gap is $\frac{\sqrt{k\log(1/\delta)}}{\eps} \log(m/\beta)$. If one is interested in very high confidence $\beta \le m^{-\omega(1)}$, this bound on the gap can be very large. We offer an algorithm where the suboptimality gap has improved dependency on $1/\beta$.

\begin{theorem}\label{theo:top-k-selection}
There is an absolute constant $C>0$ such that the following holds. For every $\eps,\delta,\beta\in (0, 1)$, there is an $(\eps,\delta)$-DP algorithm for top-$k$ selection satisfying the following. Let $\mathbf{S}$ denote the output of the algorithm. Then
\[
\Pr\left[ \Gap(\mathbf{S})\ge C\frac{\sqrt{k\log(1/\delta)}}{\eps} \cdot \log(m) + C\frac{\log(1/\beta)}{\varepsilon} \right] \le \beta.
\]
\end{theorem}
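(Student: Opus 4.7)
} The plan is to run the standard sequential exponential mechanism for top-$k$ only at constant success probability and then boost the confidence to $1-\beta$ by wrapping it inside the private selection framework of Theorem~\ref{theo:privacy-runtime-tradeoff}. This way the $\log(1/\beta)$ penalty enters through the outer amplification (paid against $1/\eps$) instead of through a union bound inside the $k$-wise composition (which would pay against $\sqrt{k\log(1/\delta)}/\eps$).

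First, I would build the base algorithm $\calA_0$. Set $\delta' := \delta\beta/C_0$ for a large tuning constant $C_0$. Let $\calA_0$ first run $k$ sequential exponential mechanisms with per-round budget $\eps_r = \Theta(\eps/\sqrt{k\log(1/\delta')})$, which by advanced composition is $(\eps/C_1,\delta'/2)$-DP; a standard computation shows that with constant probability the returned set $S$ satisfies $\Gap(S) \le G_0 := C_2\sqrt{k\log(1/\delta')}\log(m)/\eps$. Augment $\calA_0$ to also release a noisy quality score $q := -\Gap(S,D) + \Lap(C_3/\eps)$; since $\Gap(\cdot,D)$ has sensitivity $2$ this costs an additional $(\eps/C_1,0)$-DP, so the full $\calA_0$ is $(O(\eps),\delta'/2)$-DP and outputs $(S,q)$ with $q \ge -G_0 - O(1/\eps)$ with probability bounded below by an absolute constant exceeding $1/2$ (combining the constant-probability gap bound with $\Pr[\Lap(C_3/\eps) \ge -C_3/\eps] > 0.8$).

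Second, I would apply Theorem~\ref{theo:privacy-runtime-tradeoff} with $\alpha=1$ and confidence parameter $\beta_1=\beta$, ordering $\calA_0$'s outputs by the scalar $q$. The resulting algorithm $\calA^*$ makes $T=O(1/\beta)$ oracle calls and has privacy $(O(\eps),T\delta'/2) = (O(\eps),O(\delta))$-DP after tuning $C_0$. By the better-than-median guarantee, with probability $\ge 1-\beta$ the returned pair $(S^*,q^*)$ satisfies $q^* \ge \mathrm{median}(q) \ge -G_0 - O(1/\eps)$. A union bound over the $T$ independent Laplace samples caps the realized noise by $\text{noise}^* \le O(\log(T/\beta)/\eps) = O(\log(1/\beta)/\eps)$ with probability $1-\beta$, and unpacking $q^* = -\Gap(S^*) + \text{noise}^*$ yields $\Gap(S^*) \le G_0 + O(\log(1/\beta)/\eps)$ with probability $\ge 1-O(\beta)$.

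Third, I would separate into two regimes. When $\beta \ge \delta$ we have $\log(1/\delta') \le 2\log(1/\delta)+O(1)$, so $G_0 = O(\sqrt{k\log(1/\delta)}\log(m)/\eps)$ and the theorem follows after rescaling constants. When $\beta < \delta$, I rerun the same scheme with $\beta_1 := \delta$ (so $T=O(1/\delta)$ and $\delta'=\Theta(\delta^2)$), obtaining a $(O(\eps),O(\delta))$-DP algorithm whose gap is at most $G_0 + O(\log(1/\delta)/\eps)$ with probability $1-O(\delta)$. Then, mirroring the final step of the query-release proof sketched in Section~\ref{sec:proof-query-release}, I invoke the $\delta$-slack trick: on the $O(\delta)$-probability bad event, detected by thresholding $q^*$ using a Laplace-tail bound at scale $\log(1/\delta)$, I substitute the true top-$k$ computed non-privately. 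Since the substitution fires with probability $O(\delta)$, the modified algorithm is still $(O(\eps),O(\delta))$-DP, and because $\log(1/\beta) \ge \log(1/\delta)$ in this regime its gap is bounded by $G_0 + O(\log(1/\beta)/\eps)$ with probability $1 \ge 1-\beta$.

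The main obstacle is the coupled choice of $T$ and $\delta'$: Theorem~\ref{theo:privacy-runtime-tradeoff} charges $T\delta'$ in the composed $\delta$, so $\delta'$ must shrink to absorb $T$, but shrinking $\delta'$ inflates $G_0$ through the $\sqrt{\log(1/\delta')}$ dependence in the per-round budget. The case split on $\beta \lessgtr \delta$ is exactly what prevents the additive $\log(1/\beta)/\eps$ term from being multiplied by the $\sqrt{k\log(1/\delta)}$ factor; the remaining work is bookkeeping to confirm that the noisy quality $q$ is sharp enough for the better-than-median guarantee to actually filter for small $\Gap$ while staying within the privacy budget.
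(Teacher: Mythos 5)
Your proposal is correct and follows essentially the same route as the paper: run the $k$-fold exponential mechanism at per-round budget $\eps/\Theta(\sqrt{k\log(1/\delta)})$ so it succeeds with constant probability, attach a Laplace-noised quality score of the gap, boost via the better-than-median wrapper of Theorem~\ref{theo:privacy-runtime-tradeoff} with $\alpha=1$, and handle $\beta<\delta$ by the non-private correction trick. The only differences are cosmetic parameter choices (you take inner confidence $\beta$ with $\delta'=\Theta(\delta\beta)$ and union-bound the $T$ Laplace noises, whereas the paper takes inner confidence $\delta/10$ and shifts the score by $13\log(1/\beta)/\eps$), and these do not change the argument.
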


While we cannot avoid paying $\log(1/\beta)$ completely, we only pay $\log(1/\beta)$ \emph{once} instead of $\sqrt{k\log(1/\delta)}$ times as was in the standard algorithm.

The proof of Theorem~\ref{theo:top-k-selection} uses a similar strategy as that of Theorem~\ref{theo:release-k-queries-non-adaptive}. Suppose we iteratively run the exponential mechanism for $k$ rounds to get $k$ candidates $\mathbf{S}$. With probability at least $\frac{1}{2}$ we have that $\Gap(\mathbf{S}) \le O(\frac{\sqrt{k\log(1/\delta)}}{\eps} \cdot \log(m))$. We can use Theorem~\ref{theo:privacy-runtime-tradeoff} on top of the naive algorithm to boost the success probability from $\frac{1}{2}$ to $1-\beta$. See Section~\ref{sec:proof-top-k}.

\subsubsection{Stable Selection}\label{sec:stable-selection}

The private selection algorithm also has implications to stable selections. Generally, there is a lower bound saying that the $\Omega(\frac{1}{\eps}\log(m))$ suboptimality gap is necessary when we select from $m$ candidates with sensitivity-$1$ score functions. However, when there is a ``structure'' within the candidate family, we can often do better. We show two representative examples in the following.

\paragraph*{Choosing Mechanism.} Let $\mathcal{F}$ be a family of sensitivity-$1$ functions. Say that $\calF$ is $k$-bounded, if for any two neighboring datasets $D,D'$, it holds that $\sum_{f\in \calF}|f(D)-f(D')|\le k$. For $k$-bounded function families, we can do private selection with suboptimaility gap $O(\frac{1}{\eps}\log(k))$.

% there are at most $k$ functions $f\in \mathcal{F}$ such that $f(S)\ne f(S')$.

\begin{theorem}\label{theo:choosing-mechanism}
Suppose $\calF$ is a $k$-bounded function family. For every $\eps,\delta\in (0, 1)$, there is an $(\eps,\delta)$-DP algorithm $\calA$ that holds a dataset $D$ and selects a function $f\in \calF$ such that, with probability at least $1-\beta$,
\[
f(D) \ge \max_{f^*\in \calF}\{ f^*(D)\} - O(\frac{1}{\eps} \log(k/\delta\beta)).
\]
\end{theorem}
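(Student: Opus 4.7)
My plan is to combine a constant-confidence choosing-mechanism-style base algorithm with the better-than-median amplification of Theorem~\ref{theo:privacy-runtime-tradeoff}.

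First, I would exhibit an $(\eps/3, \delta_0)$-DP base mechanism $\calB$ that, for any $k$-bounded $\calF$, outputs a pair $(f, \hat s)$ with $\hat s = f(D) + \Lap(3/\eps)$ such that $f(D) \ge \max_{f^*\in\calF}\{f^*(D)\} - O(\log(k/\delta_0)/\eps)$ holds with probability at least $1/2$. The standard construction identifies a data-dependent promising set $\calF_{\tilde t} = \{f : f(D)\ge \tilde t\}$ via a noisy threshold $\tilde t$ just below $\max_{f^*} f^*(D)$, uses a propose-test-release style stability argument (absorbing the $\delta_0$ failure) to privatize the data-dependent restriction, and then runs the exponential mechanism on $\calF_{\tilde t}$. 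The key use of $k$-boundedness, $\sum_f |f(D)-f(D')|\le k$, is that the effective support size relevant for the partition-function analysis is $O(k/\delta_0)$ rather than $|\calF|$, yielding a $\log(k/\delta_0)/\eps$ gap.

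Next, I would apply Theorem~\ref{theo:privacy-runtime-tradeoff} with $\alpha = 1$ and confidence parameter $\beta$ on top of $\calB$, using the noisy score $\hat s$ as the ordering coordinate required by $\FSelect$. The resulting algorithm $\calA^*$ makes at most $T = \lceil 2/\beta \rceil$ oracle calls and is $(\eps, T\delta_0)$-DP; its output is the $(f, \hat s)$ achieving the largest noisy score across the $T$ runs. Since $\calB$ succeeds with probability $\ge 1/2$, the median of its noisy-score distribution exceeds $\max_{f^*}f^*(D) - O(\log(k/\delta_0)/\eps) - O(1/\eps)$, and the better-than-median guarantee ensures $\calA^*$'s output beats this median with probability $\ge 1-\beta$. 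A union bound over the $T$ Laplace noises of scale $3/\eps$ adds $O(\log(1/\beta)/\eps)$ slack to realign the noisy and true orderings. Setting $\delta_0 = \delta\beta/3$ then gives $(\eps,\delta)$-DP overall and the advertised gap $O(\log(k/(\delta\beta))/\eps)$.

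The main obstacle is building the base mechanism $\calB$: reducing the dependence from $\log|\calF|$ down to $\log(k/\delta_0)$ requires the propose-test-release stability argument that exploits the $k$-boundedness structurally, rather than just treating $\calF$ as a generic finite family. Once that classical ingredient is in place, the amplification step is a direct plug-in of the private-selection framework developed in this paper, and the constant-factor rescaling of $(\eps/3,\delta\beta/3)$ is essentially bookkeeping.
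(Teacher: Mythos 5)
The amplification half of your plan is fine, but it is not where the difficulty of this theorem lies, and the half that carries the difficulty --- the constant-confidence base mechanism $\calB$ --- is not established. You invoke ``the standard construction'' (noisy threshold $\tilde t$, a propose-test-release stability argument for the data-dependent set $\calF_{\tilde t}$, then the exponential mechanism on $\calF_{\tilde t}$) as a classical ingredient, but no such mechanism is known for general $k$-bounded families: the prior choosing-mechanism results require structural assumptions beyond $k$-boundedness, and removing those assumptions is precisely the content of this theorem. Concretely, your claim that $k$-boundedness makes ``the effective support size relevant for the partition-function analysis $O(k/\delta_0)$'' is unjustified and false as stated: a family of arbitrarily many functions that do not depend on the data at all is $0$-bounded, yet all of them can sit at the maximum, so $\calF_{\tilde t}$ can be arbitrarily large. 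One can try to rescue utility by noting that every element of $\calF_{\tilde t}$ is within the gap, but then the entire burden falls on the privacy of releasing an element of the data-dependent set: between neighbors, membership in $\calF_{\tilde t}$ can flip for the functions whose values move (total movement up to $k$), and driving the probability of \emph{any} flip down to $O(\delta_0)$ by spreading out the threshold noise forces threshold error of magnitude $\Omega(k/\delta_0)$, destroying the $O(\log(k/\delta_0)/\eps)$ gap, while with small threshold noise the sets differ with constant probability and selecting from a possibly small data-dependent set is not private. So the sketch does not go through as written; this step is essentially the whole theorem.

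For contrast, the paper's proof avoids any data-dependent restriction or partition-function argument. It defines one mechanism per candidate, $M_i(D) = \bigl(i,\; f_i(D) + \TLap(\eps, \tfrac{\delta\beta}{5k})\bigr)$, and feeds all of them into \textsc{Selection} of Algorithm~\ref{algo:sync} with $\gamma = 1$ and $\tau = 4/\beta$. The $k$-boundedness enters only through the $\delta$-accounting of Theorem~\ref{theo:meta-selection}: $M_i(D)$ and $M_i(D')$ are $(\eps, \tfrac{\delta\beta}{5k}\,|f_i(D)-f_i(D')|)$-indistinguishable, so $\sum_i \delta_i \le \delta\beta/5$ and the whole call is $(3\eps, \tau\sum_i\delta_i)$-DP, i.e.\ $(3\eps,\delta)$-DP, independently of $|\calF|$; utility holds because with probability $1-\beta$ some executed trial of the best candidate's mechanism reports a score at least its true value, and the truncated-Laplace noise is bounded by $\tfrac{1}{\eps}\log(5k/\delta\beta)$. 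If you wish to keep your two-stage structure, you would still need this per-candidate noisy-score selection idea (or an equivalent) to build $\calB$; the better-than-median amplification of Theorem~\ref{theo:privacy-runtime-tradeoff} by itself cannot produce the $\log(k/\delta\beta)$ gap.
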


Theorem~\ref{theo:choosing-mechanism} generalizes several previous mechanisms for this setting \cite{DBLP:journals/toc/BeimelNS16,DBLP:conf/nips/Ghazi0M20}, which all require additional assumptions on the family $\calF$.

\paragraph*{Stable Selection.} In some settings, there is a noticeable quality gap between the best solution in $\calF$ and the $(k+1)$-th best solution. In this case, we can also obtain a private selection algorithm with improved suboptimality gap. In more detail, let $\calF$ be the function family. For any given dataset $D$, sort functions in $\calF$ by the decreasing order of $f(S)$. Namely, write $\calF = \{f_1,f_2,\dots, f_{m}\}$ so that $f_1(D) \ge f_2(D) \ge \dots \ge f_m(D)$. Define the $k$-th gap of $S$ on $\calF$ as $G_k(S,\calF) = f_1(D) - f_{k+1}(D)$.

\begin{theorem}\label{theo:stable-selection}
For any $k\ge 1$ and $\eps,\delta, \beta \in (0, 1)$, there is an $(\eps,\delta)$-DP algorithm $\calA$ that holds a dataset $D$ and achieves the following. For any $1$-Lipschitz function family $\calF$, suppose $G_k(D,\calF)\ge \frac{10}{\eps}\log(k/\delta\beta)$. Then, with probability at least $1-\beta$, $\calA$ selects a function $f\in \calF$ such that
\[
f(D) \ge \max_{f^*\in \calF}\{ f^*(D)\} - O(\frac{1}{\eps}\log(k/\delta\beta)).
\]
\end{theorem}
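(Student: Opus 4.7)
I plan to proceed by combining the private-selection framework of Theorem~\ref{theo:meta-selection} with a propose-test-release (PTR) style wrapper that exploits the gap hypothesis. The high-level strategy is to privately certify that the $k$-th gap is large, and only then fall back on an $O(\log(k/\delta\beta)/\eps)$-utility selector applied to the (implicitly defined) top-$k$ subfamily, which is $k$-bounded and therefore amenable to Theorem~\ref{theo:choosing-mechanism}.

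Concretely, observe that $G_k(D,\calF) = f_{(1)}(D) - f_{(k+1)}(D)$ has sensitivity $2$, so I would compute a noisy estimate $\widehat{G}_k := G_k(D,\calF) + \Lap(4/\eps)$ at privacy cost $\eps/2$. With threshold $T = \Theta\!\bigl(\tfrac{1}{\eps}\log(1/\delta)\bigr)$, if $\widehat{G}_k < T$ the algorithm aborts; otherwise, under the gap hypothesis $G_k(D, \calF) \ge \tfrac{10}{\eps}\log(k/\delta\beta)$, the test accepts with probability at least $1 - \beta/2$ and we may further assume $G_k(D', \calF) \ge T + \Omega(\log(1/\delta)/\eps)$ on every neighbor $D'$ up to a failure event absorbed into $\delta$. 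When the test accepts, I restrict to the top-$k$ subfamily $\calF^\ast(D) = \{f_{(1)}, \ldots, f_{(k)}\}$; since $|\calF^\ast(D)| = k$, this restricted family is $k$-bounded, so invoking the choosing mechanism of Theorem~\ref{theo:choosing-mechanism} with budget $\eps/2$ produces $f$ satisfying $f(D) \ge f_{(1)}(D) - O\!\bigl(\tfrac{1}{\eps}\log(k/\delta\beta)\bigr)$ except with probability $\beta/2$. A union bound over the two failure modes and composition of the two $\eps/2$-DP subroutines then give the claimed $(\eps,\delta)$-DP and utility.

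\paragraph{Main obstacle.} The delicate step is arguing that the data-dependent restriction to $\calF^\ast(D)$ does not leak information. On a neighbor $D'$, a single function can enter or leave the top-$k$, so $\calF^\ast(D)$ and $\calF^\ast(D')$ need not coincide. My plan is to couple the two executions conditioned on acceptance of the Laplace test and to argue, using the gap margin, that the symmetric difference has size at most $2$ and involves only functions whose scores sit within $O(1)$ of $f_{(k)}$; both exchanged functions are then penalized by the gap, so the choosing mechanism places only a small mass on them. The contribution of the boundary swap to the distinguishing probability is absorbed into the $\delta$ term through the Laplace tail of the test, while the remainder is handled directly by the $(\eps/2, 0)$-DP guarantee of the choosing mechanism. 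Making this coupling tight, and tracking the precise constants so that the stated $\tfrac{10}{\eps}\log(k/\delta\beta)$ gap hypothesis suffices, is the main technical work.
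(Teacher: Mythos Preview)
Your approach diverges substantially from the paper's, and the core step --- applying Theorem~\ref{theo:choosing-mechanism} to the data-dependent family $\calF^*(D)$ --- does not go through as written.

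The paper avoids any explicit restriction to a top-$k$ subfamily. Instead, it runs the \textsc{Selection} procedure of Algorithm~\ref{algo:sync} directly over \emph{all} of $\calF$, but with transformed scores: each mechanism $M_i(D)$ returns the pair $(i,\ \max(f_i(D) - Q(D),\,0) + \TLap(\eps', \tfrac{\beta\delta}{10k}))$, where $Q(D)$ is the $(k+1)$-th largest value among $\{f_j(D)\}$. The clipping to zero is the key device: any index $i$ with $f_i(D) \le Q(D)$ and $f_i(D') \le Q(D')$ has identical score distributions on both neighbors, so it contributes $\delta_i = 0$ in Theorem~\ref{theo:meta-selection}. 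Since at most $2k$ indices can lie above $Q$ on either $D$ or $D'$, one gets $\tau\sum_i \delta_i \le \tau \cdot 2k \cdot \tfrac{\beta\delta}{5k} \le \delta$, and the $(\eps,\delta)$ bound follows directly. No gap test is needed for privacy; the gap hypothesis enters only in the utility analysis.

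Your PTR wrapper certifies the wrong invariant. A large noisy $\widehat G_k$ tells you that $G_k(D')$ is also large on neighbors, but it does \emph{not} stabilize the identity of $\calF^*(D)$: whenever $f_{(k)}(D)$ and $f_{(k+1)}(D)$ are within $2$ of each other, the top-$k$ set can change on a neighbor $D'$ regardless of how large $G_k$ is. So the Laplace tail of the gap test is unrelated to the boundary-swap event and cannot absorb it into $\delta$ as you suggest. You are correct that any swapped function must sit within $2$ of $f_{(k+1)}(D)$ and hence far below $f_{(1)}(D)$ --- but to exploit this you would have to open up the choosing mechanism and argue that its selection probability on such indices is at most $O(\delta)$, at which point you are essentially reproducing the paper's clipping argument with an extra (now superfluous) PTR layer on top. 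Theorem~\ref{theo:choosing-mechanism} as a black box gives you nothing here, since its hypothesis requires a fixed, data-independent family.
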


There is a known algorithm for this task with asymptotically the same performance (see \cite{DBLP:conf/stoc/BunDRS18, DBLP:journals/tit/BunKSW21} and references therein). However, our algorithm and its analysis are very different. In particular, we demonstrate how easily this problem can be solved under the private selection framework.

The proofs of Theorems~\ref{theo:choosing-mechanism} and \ref{theo:stable-selection} are deferred to Section~\ref{sec:proof-stable-selection}. 

\subsection{Improved Sparse Vector Technique with Applications} \label{sec:apptest}

In Section~\ref{sec:appselect}, we have shown that one can use the private selection framework to improve the confidence-accuracy tradeoff for query releasing and top-$k$ selection. However, note that these two applications work in the \emph{offline} setting: in the query releasing and top-$k$ selection tasks, we are given the $k$ queries and the $m$ candidates in advance, respectively. 

It is natural to ask if similar improvements can be obtained in the online setting. In the following, we demonstrate this is possible by designing a modification of the well-known sparse vector technique, which offers an improved confidence-accuracy tradeoff. We further apply our SVT to improve the private multiplicative weight algorithm \cite{DBLP:conf/focs/HardtR10} and the sample complexity of adaptive data analysis.

\subsubsection{Review of Sparse Vector Technique and its Applications}

Roughly speaking, the sparse vector technique provides an algorithm to answer an \emph{unbounded number} of queries of the form $f_i(S)\lessapprox t$, while one only incurs privacy loss for queries that are above the thresholds. See Section~\ref{sec:svt-improve-techinque} for a more technical and precise description.

\paragraph*{Application to Private Multiplicative Weights.} SVT is the main subroutine in the celebrated private multiplicative weights algorithm \cite{DBLP:conf/focs/HardtR10}, which can be used to answer a large number of linear queries subject to privacy constraints. We recall the problem setup first. Suppose there is a universe $\calX$ of size $|\calX|$. We want to design an algorithm $\calA$ that receives a private input $S=(x_1,\dots,x_n)\in \calX^n$ consisting of $n$ data points from $\calX$, and answers $m$ queries from an adversary $\calB$. Each query is specified by a function $f_i:\calX^n\to [0,1]$, and the algorithm needs to report an estimation $\tilde{f}_i$ for $\Ex_{j\sim [n]}[f_i(x_j)]$. The error of the algorithm is defined as $\max_{i\in [m]}\{ |\tilde{f}_i - \Ex_{j\sim[n]}[f_i(x_j)] |\}$.

The private multiplicative weight update (MWU) algorithm allows one to privately answer $m$ linear queries, while the error grows only logarithmically with $m$, as shown in the following theorem.

\begin{theorem}[\cite{DBLP:conf/focs/HardtR10}, see also \cite{ DBLP:journals/fttcs/DworkR14}]\label{theo:MWU}
For every $\eps,\delta\in (0,1)$, there is an $(\eps,\delta)$-DP algorithm that answers $m$ adaptively chosen linear queries with the following utility guarantee. Let $(\tilde{f}_i)_{i\in [m]}$ denote the outputs of $\calA$. For every $\beta \in (0,1)$, with probability $1-\beta$ we have that $\max_{i\in [m]} \{ |\tilde{f}_i - \Ex_{j\in [n]}[f_i(x_j)]|\} \le\alpha = \alpha(\beta)$, where
\[
\alpha(\beta)^2 \le O\left( \frac{\sqrt{\log|\calX|\log(1/\delta)} \log(m/\beta)}{n\eps} \right).
\]
\end{theorem}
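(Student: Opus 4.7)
The plan is to instantiate the standard private multiplicative-weights (MWU) loop. I maintain a sequence of probability distributions $p_0, p_1, \dots$ on the universe $\calX$, initialized with $p_0$ uniform. When the $i$-th query $f_i$ arrives, I first compute the hypothesis-predicted answer $\hat f_i := \Ex_{x \sim p_t}[f_i(x)]$ where $p_t$ is the current distribution. If $|\hat f_i - \Ex_{j\sim[n]}[f_i(x_j)]|$ is small (below roughly $\alpha/2$), I simply report $\hat f_i$; otherwise the query is ``hard'' and I output a Laplace-perturbed value $\tilde f_i$ and perform a multiplicative update $p_{t+1}(x) \propto p_t(x)\exp\!\left(\eta\cdot \mathrm{sign}(\tilde f_i - \hat f_i)\cdot f_i(x)\right)$.

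\textbf{Detecting hard queries privately.} To decide which queries are hard without paying for every query, I feed the stream of comparisons $|\hat f_i - \Ex_{j\sim[n]}[f_i(x_j)]| \gtrsim \alpha/2$ into the sparse vector technique (SVT), which answers an unbounded number of threshold queries while charging privacy only for $\top$ answers. Each $\top$ flags a hard query and triggers both the Laplace output $\tilde f_i$ and the MWU step.

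\textbf{Bounding the number of updates.} The key structural fact is the classical MWU potential argument: using the relative entropy $\mathrm{KL}(\mu_S \| p_t)$ between the empirical data distribution $\mu_S$ and the current hypothesis as a potential, each multiplicative update decreases this potential by $\Omega(\alpha^2)$. Since $\mathrm{KL}(\mu_S \| p_0) \le \log|\calX|$, the total number of hard queries is at most $K = O(\log|\calX|/\alpha^2)$.

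\textbf{Privacy, accuracy, and the key balance.} For privacy, the SVT instance together with the $K$ Laplace releases compose to $(\eps,\delta)$-DP by advanced composition when each uses per-call privacy budget $\eps_0 = \Theta(\eps/\sqrt{K\log(1/\delta)})$, giving Laplace noise of scale $O(\sqrt{K\log(1/\delta)}/(n\eps))$. For accuracy, a union bound over $m$ SVT tests and $K$ Laplace answers introduces a $\log(m/\beta)$ factor into the worst-case noise magnitude. Requiring the SVT threshold plus Laplace noise to be at most $\alpha/2$ yields $\alpha \gtrsim \log(m/\beta)\sqrt{K\log(1/\delta)}/(n\eps)$; plugging in $K = O(\log|\calX|/\alpha^2)$ and solving for $\alpha$ self-consistently produces $\alpha^2 \le O\!\left(\sqrt{\log|\calX|\log(1/\delta)}\,\log(m/\beta)/(n\eps)\right)$, matching the theorem. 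The main delicate point is exactly this coupling: $K$ depends on $\alpha$, and high confidence (small $\beta$) forces a $\log(m/\beta)$ factor into the noise through the union bound over all intermediate tests and releases. Shaving this $\log(m/\beta)$ dependence using the improved SVT of Theorem~\ref{theo:meta-test} is what the rest of Section~\ref{sec:apptest} will exploit.
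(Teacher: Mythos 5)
Your proposal is correct and follows essentially the same route as the paper's own proof sketch: maintain an MWU hypothesis distribution, use SVT to flag hard queries, bound the number of update rounds by $O(\log|\calX|/\alpha^2)$ via the KL potential argument, allocate per-round budget $\eps/\sqrt{k\log(1/\delta)}$ by advanced composition, and solve the resulting self-consistent bound on $\alpha$ after a $\log(m/\beta)$ union bound over the noise magnitudes. Nothing essential is missing.
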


Theorem~\ref{theo:MWU} can be used to perform adaptive data analysis with linear queries. We recall the setup: given the universe $\calX$, there is a distribution $\calD$ over $\calX$. An adaptive data analysis algorithm receives as input $n$ samples from $\calD$, denoted by $S\sim \calD^n$. It needs to adaptively answer $m$ linear queries, where each query is specified by a function $f_i:\calX\to [0,1]$ and the algorithm needs to output an estimation of $\Ex_{x\sim \calD}[f(x)]$. An algorithm is called $(\alpha, \beta, m)$-accurate, if it can answer $m$ (adaptive) linear queries within additive error $\alpha$ with probability at least $1-\beta$. We consider the \emph{sample complexity} of the problem. That is, fixing $\calX, m, \alpha, \beta$, we want to find out the smallest $n$ such that there exists an $(\alpha,\beta,m)$-accurate algorithm that only uses $n$ samples from $\calD$.

The generalization property of DP \cite{DFHPTRR15,DBLP:conf/stoc/BassilyNSSSU16} shows that, if we design an algorithm to output estimations of $\Ex_{x\sim S}[f_i(x)]$ in a privacy-preserving manner, then $\Ex_{x\sim S}[f_i(x)]$ is a good approximation of $\Ex_{x\sim \calD}[f_i(x)]$. Combining the privacy-preserving algorithm by Theorem~\ref{theo:MWU} with the generalization property, one obtains the following well-known sample complexity upper bound.

\begin{theorem}[\cite{DBLP:conf/stoc/BassilyNSSSU16}]\label{theo:adaptive-analysis-known}
For every finite universe $\calX$, every $m\ge 1$ and $\alpha,\beta \in (0, 1)$, there is an $(\alpha, \beta, m)$-accurate adaptive data analysis algorithm with sample complexity
\[
n  \le O\left( \frac{\sqrt{\log |\calX|\log(1/\beta)} \log(m/\beta) }{\alpha^3} \right).
\]
\end{theorem}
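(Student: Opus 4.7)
The plan is to combine the private multiplicative weights guarantee of Theorem~\ref{theo:MWU} with the DP-to-generalization transfer theorem of \cite{DBLP:conf/stoc/BassilyNSSSU16}. Given a sample $S\sim\calD^n$, I would run the private MWU algorithm of Theorem~\ref{theo:MWU} directly on $S$ to adaptively answer the $m$ queries; its outputs $\tilde f_i$ approximate the empirical answers $\Ex_{x\sim S}[f_i(x)]$, while the transfer theorem promises that differential privacy of the mechanism as a whole implies that those empirical answers are themselves close to the population answers $\Ex_{x\sim\calD}[f_i(x)]$.

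More concretely, I would split the final error by the triangle inequality into an empirical error term and a generalization error term:
\[
\bigl|\tilde f_i - \Ex_{x\sim\calD}[f_i(x)]\bigr| \le \bigl|\tilde f_i - \Ex_{x\sim S}[f_i(x)]\bigr| + \bigl|\Ex_{x\sim S}[f_i(x)] - \Ex_{x\sim\calD}[f_i(x)]\bigr|.
\]
The first term is controlled directly by Theorem~\ref{theo:MWU}. For the second, I invoke the standard transfer theorem in the form: any $(\eps,\delta)$-DP mechanism with $\eps \le O(\alpha)$ and $\delta \le O(\alpha\beta)$ has the property that its outputs' empirical expectations are $\alpha$-close to the corresponding population expectations with probability $1-\beta$, simultaneously over all adaptively chosen queries. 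I would therefore instantiate private MWU with $\eps = \Theta(\alpha)$ and $\delta = \Theta(\alpha\beta)$, ask Theorem~\ref{theo:MWU} to deliver empirical error at most $\alpha/2$ with failure probability $\beta/2$, and union-bound with the $\beta/2$ generalization failure event to conclude total error $\le \alpha$ with probability $\ge 1-\beta$.

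Substituting $\eps = \Theta(\alpha)$ and $\log(1/\delta) = O(\log(1/(\alpha\beta)))$ into the utility bound of Theorem~\ref{theo:MWU} and solving the inequality $\alpha(\beta/2)^2 \le (\alpha/2)^2$ for $n$ gives
\[
n \le O\!\left( \frac{\sqrt{\log|\calX|\log(1/(\alpha\beta))}\,\log(m/\beta)}{\alpha^3}\right).
\]
In the standard regime $\alpha \ge 1/\mathrm{poly}(m)$ the factor $\log(1/\alpha)$ is absorbed into $\log(m/\beta)$, yielding the advertised bound.

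The only genuine subtlety is the bookkeeping between the two failure events and verifying that the parameter choices are mutually compatible: one has to check that the MWU algorithm as a whole (viewed as a single interactive mechanism mapping the sample to the transcript of answers) is indeed $(\eps,\delta)$-DP with the chosen $\eps,\delta$, so that the transfer theorem applies to the entire sequence of adaptively chosen queries rather than to each query in isolation. Beyond this, no new technical ingredients are needed, as the proof reduces to parameter tuning of two off-the-shelf results.
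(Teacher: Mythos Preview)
Your proposal is correct and follows essentially the same route as the paper: the paper's justification of Theorem~\ref{theo:adaptive-analysis-known} is simply ``applying this generic connection [the transfer theorem $\alpha'=O(\alpha+\eps)$, $\beta'=O(\beta+\delta/\eps)$] to Theorem~\ref{theo:MWU},'' which is exactly your plan of instantiating private MWU with $\eps=\Theta(\alpha)$, $\delta=\Theta(\alpha\beta)$ and invoking the generalization property. The only minor imprecision is that the stray $\log(1/\alpha)$ sits inside the square root (as part of $\log(1/\delta)$), not outside, so the absorption argument should be phrased for $\sqrt{\log(1/(\alpha\beta))}$ rather than for a standalone $\log(1/\alpha)$ factor; this does not affect correctness.
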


\paragraph*{Our improvement.} Using our improved sparse vector algorithm, we obtain the following improved private multiplicative weight update algorithm.

\begin{theorem}\label{theo:MWU-new}
There is an absolute constant $C>0$ such that the following is true. For every $\eps,\delta\in (0,1)$, $m\in \mathbb{N}$ and $\beta \ge 2^{-m}$, there exists 
\[
\alpha \le C \left( \frac{\sqrt{\log|\calX|\log(1/\delta)} \log(m)}{n\eps\cdot \alpha} + \frac{\log(1/\beta)}{n\eps} \right)
\]
and an $(\eps,\delta)$-DP algorithm that answers $m$ adaptively chosen linear queries such that, with probability $1-\beta$, we have $\max_{i\in [m]} \{ |\tilde{f}_i - \Ex_{j\in [n]}[f_i(x_j)]|\}\le \alpha$.
\end{theorem}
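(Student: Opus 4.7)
The plan is to run the standard private-multiplicative-weights (PMW) algorithm of Hardt and Rothblum while replacing its sparse-vector subroutine with the improved SVT built from Algorithm~\ref{algo:sync}. Recall that PMW maintains a synthetic distribution $q_t$ over $\calX$, and for each incoming query $f_i$ it privately tests whether $\big|\Ex_{x\sim q_t}[f_i(x)] - \Ex_{j\in[n]}[f_i(x_j)]\big| \gtrsim \alpha/2$; a ``below'' outcome releases $\Ex_{x\sim q_t}[f_i(x)]$, while an ``above'' outcome releases the noisy empirical answer $\Ex_{j\in[n]}[f_i(x_j)] + \Lap(1/(n\eps_0))$ and triggers a multiplicative-weights update. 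The only structural fact we need to import from PMW is that the total number of updates across the run is almost-surely bounded by $T = O(\log|\calX|/\alpha^2)$, independent of $m$.

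The first step is to implement the threshold tests through Algorithm~\ref{algo:sync}, initialized once with parameter $\gamma$. For each query $f_i$ we call $\FTest$ for $R$ rounds on the hypothesis $H_i = \ind\!\big[\,\Ex_{j\in[n]}[f_i(x_j)] - \Ex_{x\sim q_t}[f_i(x)] + \Lap(1/(n\eps_0)) \ge \alpha/2\,\big]$, declaring ``above'' iff at least one $\top$ is observed. Since the number of $\top$ responses is at most the number $T$ of MWU updates, Theorem~\ref{theo:meta-test} bounds the end-to-end privacy cost by $O\!\left(\eps_0\gamma + \eps_0\sqrt{T\log(1/\delta)}\right)$, which we equate with $\eps$ to obtain
\[
\frac{1}{\eps_0} \;=\; \Theta\!\left(\frac{\sqrt{\log|\calX|\log(1/\delta)}}{\alpha\,\eps}\right).
\]

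For utility, the crucial property of Algorithm~\ref{algo:sync} is that the pass probability $p$ is drawn \emph{once} at initialization. With probability $1-\beta$ we have $p \ge \beta^{1/\gamma}$; conditional on this good event, choosing $R = \Theta(\beta^{-1/\gamma}\log m)$ makes every one of the $m$ tests simultaneously reliable against its Laplace noise, so that the threshold-test error is $O(\log(m)/(n\eps_0))$ \emph{without} a $\log(1/\beta)$ factor inside. This is precisely the ``paid once'' phenomenon that gives the improvement over Theorem~\ref{theo:MWU}. The remaining $O(\log(1/\beta)/(n\eps))$ term then enters additively as the confidence penalty from the single-shot sampling of $p$ together with the DP-generalization transfer from empirical to population means. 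Combining both contributions with the usual PMW potential argument yields exactly the implicit bound in Theorem~\ref{theo:MWU-new}.

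The main obstacle is making the ``paid once'' accounting rigorous. Concretely, one has to show that the event $\{p \ge \beta^{1/\gamma}\}$ is simultaneously compatible with (i) the utility union bound over the $m$ tests, (ii) the worst-case-$p$ privacy analysis underlying Theorem~\ref{theo:meta-test} (so conditioning on a good $p$ does not break the coupling used to prove meta-privacy), and (iii) the deterministic upper bound $c \le T$ on the number of $\top$ events, so that Theorem~\ref{theo:meta-test} can be invoked without an adaptive stopping-time argument. Once these three compatibility checks are settled, solving the implicit inequality for $\alpha$ and threading through the standard DP-generalization step is routine.
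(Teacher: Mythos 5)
There is a genuine gap, and it is exactly the trap the paper's Section~\ref{sec:svt-improve-techinque} warns about: in your scheme a query is declared ``above'' as soon as \emph{one} of the $R$ repetitions of $\FTest$ returns $\top$, but repetition only amplifies confidence in the all-pass (``below'') direction. Each repetition draws fresh Laplace noise, and $p$ can be close to $1$, so the number of actual executions of $H_i$ per query can be as large as $R=\Theta(\beta^{-1/\gamma}\log m)$ (with $\gamma=\log(1/\beta)/\log m$ this is about $m\log m$, hence up to $m^2\log m$ noisy comparisons overall). To rule out \emph{any} spurious $\top$ with probability $1-\beta$ you would need the per-comparison Laplace tail to be about $\beta/(mR)$, i.e.\ a margin of order $\log(m/\beta)/(n\eps_0)$ below the threshold; since $1/(n\eps_0)=\Theta(\sqrt{\log|\calX|\log(1/\delta)}/(\alpha n\eps))$, this reintroduces the coupling $\sqrt{\log|\calX|\log(1/\delta)}\cdot\log(1/\beta)$ that the theorem is supposed to remove. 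If instead you tolerate false positives, then each one produces a $\top$ (a privacy charge) and an MWU update that need not decrease the potential, so your ``deterministic upper bound $c\le T$'' on the number of $\top$ responses is simply false, and Theorem~\ref{theo:meta-test} cannot be invoked with $c=T$. This is not a compatibility check to be ``settled''; it is where the paper's construction does real work: the Repetitive SVT (Algorithm~\ref{algo:repetitive-SVT}) introduces a second, lower threshold at distance $d=\Theta(\log m/\eps')$, certifies \emph{either} conclusion only after an all-pass batch of $\tau=\Theta(m^2)$ tests, alternates between the two batch tests, charges privacy per failed batch, proves that with probability $1-\beta$ at most $O(\log(1/\beta)/\log m)$ batches are ``wasted'' (a union bound over subsets of rounds), and accordingly halts only after $k'=k+O(\log(1/\beta)/\log m)$ $\top$ responses. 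The $\log(1/\beta)$ is paid once precisely through $\gamma\approx\log(1/\beta)/\log m$ and through this small mistake budget, not through per-test margins.

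A second omission: in update rounds the released answer is $\Ex_{j\in[n]}[f_i(x_j)]+\Lap(1/(n\eps_0))$, and over up to $k=\Theta(\log|\calX|/\alpha^2)$ updates a plain union bound again forces a margin $\log(k/\beta)/(n\eps_0)$, i.e.\ the same forbidden $\sqrt{k\log(1/\delta)}\log(1/\beta)$ term, when $\beta$ is tiny. The paper fixes this by feeding each released estimate back into the Repetitive SVT to verify it, redrawing on rejection, bounding the number of rejections by the same mistake argument, and invoking concurrent composition to account for the interleaving of the SVT with the Laplace mechanism --- none of which appears in your outline. In short, your high-level accounting (pay only for $\top$'s, $\eps_0\approx\alpha\eps/\sqrt{\log|\calX|\log(1/\delta)}$ via Theorem~\ref{theo:meta-test}, $\gamma$ tuned to pay $\log(1/\beta)$ once) matches the paper, but the single-threshold ``one $\top$ means above'' test and the unexamined Laplace releases are the two places the claimed bound would actually fail without the paper's additional machinery.
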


We get the following improved sample complexity for adaptive data analysis as a corollary of Theorem~\ref{theo:MWU-new} and the generalization property by \cite{DBLP:conf/stoc/BassilyNSSSU16}.

\begin{theorem}\label{theo:adaptive-analysis-improved}
For every finite universe $\calX$, every $m\ge 1$ and $\alpha\in (0, 1), \beta \ge 2^{-m}$, there is an $(\alpha, \beta, m)$-accurate adaptive data analysis algorithm with sample complexity
\[
n \le O\left( \min\left( \frac{\sqrt{\log|\calX|\log (1/\beta)}}{\alpha^3}, \frac{\log|\calX|}{\alpha^4} \right) \log m + \frac{\log(1/\beta)}{\alpha^2}\right).
\]
\end{theorem}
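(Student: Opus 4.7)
The proof combines Theorem~\ref{theo:MWU-new} with the generalization theorem of \cite{DBLP:conf/stoc/BassilyNSSSU16}, which in its adaptive monitor form says: if an algorithm answering $m$ adaptive queries on the sample $S \sim \calD^n$ is $(\eps,\delta)$-DP with $\eps \le c_1\alpha$ and $\delta \le c_2\alpha\beta$, then empirical accuracy $\alpha/2$ on every query with probability $1-\beta/2$ implies population accuracy $\alpha$ on every query with probability $1-\beta$.

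My plan is to invoke Theorem~\ref{theo:MWU-new} with the privacy parameters $\eps = c_1\alpha$, $\delta = c_2\alpha\beta$, confidence parameter $\beta/2$, and $m$ queries on a sample of size $n$. (The assumption $\beta \ge 2^{-m}$ in the theorem is inherited directly from Theorem~\ref{theo:MWU-new}.) The guaranteed empirical error $\alpha_e$ satisfies an implicit inequality of the shape $\alpha_e \le C(A/\alpha_e + B)$ where $A = \sqrt{\log|\calX|\log(1/\delta)}\log m/(n\eps)$ and $B = \log(1/\beta)/(n\eps)$. Solving this quadratic yields $\alpha_e \le O(\sqrt{A} + B)$, and demanding $\alpha_e \le \alpha/2$ produces two sufficient lower bounds on $n$:
\[
n = \Omega\!\left(\frac{\sqrt{\log|\calX|\log(1/(\alpha\beta))}\log m}{\alpha^3}\right) \quad\text{and}\quad n = \Omega\!\left(\frac{\log(1/\beta)}{\alpha^2}\right).
\]
The second condition is exactly the additive $\log(1/\beta)/\alpha^2$ term in the claimed bound, so I focus on turning the first into the $\min$ of two branches.

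The two branches come from bounding $\log(1/(\alpha\beta)) \le \log(1/\alpha) + \log(1/\beta)$ in two complementary ways. For the first branch, in the typical regime $\alpha \ge \beta$, we have $\log(1/\alpha) \le \log(1/\beta)$, so $\log(1/(\alpha\beta)) = O(\log(1/\beta))$ and the first condition becomes $n = O(\sqrt{\log|\calX|\log(1/\beta)}\log m/\alpha^3)$. For the second branch, I use the elementary fact that $\alpha^2\log(1/\alpha)\le 1\le \log|\calX|$ for any $\alpha\in(0,1)$ and $|\calX|\ge e$, so that $\log(1/\alpha) \le \log|\calX|/\alpha^2$; splitting $\sqrt{\log|\calX|\log(1/(\alpha\beta))} \le \sqrt{\log|\calX|\log(1/\beta)} + \sqrt{\log|\calX|\log(1/\alpha)}$ and bounding the second summand by $\log|\calX|/\alpha$, the first condition becomes $n = O(\log|\calX|\log m/\alpha^4)$. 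Selecting whichever of these two bounds is smaller in the given parameter regime yields the $\min$ in the claim.

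The main obstacle, and the only non-routine step, is the implicit character of Theorem~\ref{theo:MWU-new}: since $\alpha$ appears on both sides of the inequality, one must carefully solve a quadratic and keep track of both the $\sqrt{A}$ term (which produces the $\log m$-multiplied branch in the final bound) and the $B$ term (which produces the additive $\log(1/\beta)/\alpha^2$). Everything else is elementary algebra: the splitting of $\log(1/(\alpha\beta))$ into a $\log(1/\beta)$ piece and a $\log(1/\alpha)$ piece, together with the elementary inequality $\log(1/\alpha)\le \log|\calX|/\alpha^2$, is what exchanges the square root of $\log(1/\beta)$ (first branch) for the absence of any $\beta$-dependence inside the main term (second branch).
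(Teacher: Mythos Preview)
Your derivation of the first branch is essentially the paper's: invoke Theorem~\ref{theo:MWU-new} with $\eps\asymp\alpha$ and $\delta\asymp\alpha\beta$, solve the implicit quadratic, and read off $n=O(\sqrt{\log|\calX|\log(1/\beta)}\log m/\alpha^3+\log(1/\beta)/\alpha^2)$. That part is fine.

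The second branch, however, has a genuine gap. After writing
\[
\sqrt{\log|\calX|\log(1/(\alpha\beta))}\ \le\ \sqrt{\log|\calX|\log(1/\beta)}+\sqrt{\log|\calX|\log(1/\alpha)}
\]
and bounding the second summand by $\log|\calX|/\alpha$, you silently discard the first summand and conclude $n=O(\log|\calX|\log m/\alpha^4)$. But the first summand is still there, so all you actually get is
\[
n=O\!\left(\frac{\sqrt{\log|\calX|\log(1/\beta)}\log m}{\alpha^3}+\frac{\log|\calX|\log m}{\alpha^4}\right),
\]
which is the \emph{sum} of the two branches, not the minimum. Concretely, take $\log|\calX|$ bounded, $\alpha$ a constant, and $\beta$ tiny: then the second branch is $O(\log m)$ while the first branch is $\Theta(\sqrt{\log(1/\beta)}\log m)$, and no amount of algebra on the approximate-DP bound of Theorem~\ref{theo:MWU-new} (where $\log(1/\delta)\ge\log(1/\beta)$ is forced by the transfer theorem) will remove that $\sqrt{\log(1/\beta)}$ factor.

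The paper does \emph{not} obtain the second branch by manipulating the approximate-DP bound. It uses a separate algorithm: the pure-DP variant of the repetitive SVT (Remark~\ref{remark:SVT-pure-DP}), which yields a pure-DP analogue of Theorem~\ref{theo:MWU-new} with error
\[
\alpha\ \le\ C\left(\frac{\log|\calX|\log m}{n\eps\,\alpha^2}+\frac{\log(1/\beta)}{n\eps}\right).
\]
Here there is no $\delta$ at all, so no $\log(1/\delta)$ term to control, and the transfer theorem with $\delta=0$ gives $\beta'=O(\beta)$ directly. Solving for $n$ with $\eps\asymp\alpha$ gives $n=O(\log|\calX|\log m/\alpha^4+\log(1/\beta)/\alpha^2)$, which is exactly the second branch. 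The $\min$ in the theorem statement then simply reflects that one may choose whichever of the two algorithms gives the smaller sample complexity for the given parameters.
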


We compare Theorem~\ref{theo:adaptive-analysis-improved} with Theorem~\ref{theo:adaptive-analysis-known}. Overall, the sample complexity given by Theorem~\ref{theo:adaptive-analysis-improved} scales at most linearly with $\log(1/\beta)$, and we manage to ``decouple'' the dependence on $\log(1/\beta)$ from that on other parameters. For the non-adaptive setting, it is known (by the Chernoff bound) that the optimal sample complexity is $\Theta(\frac{\log m/\beta}{\alpha^2})$. Therefore, in the high confidence regime (in particular, when $\log(1/\beta) > \frac{\log|\calX|}{\alpha^2} \log m$), the bound given by Theorem~\ref{theo:adaptive-analysis-improved} asymptotically matches the non-adaptive bound.

\subsubsection{Intuition}\label{sec:svt-improve-techinque}

Now we discuss the proof idea of our improved sparse vector technique. %And demonstrate how 

\paragraph*{Review of the standard SVT.} Let us briefly review how the standard SVT works. Let $X$ be the private input, $t\in \mathbb{R}$ be a threshold, and $f_1,\dots, f_m$ be a list of sensitivity-$1$ queries. The sparse vector algorithm works by adding independent Laplace noises $\Lap(1/\eps)$ to the threshold and each query function. Then, it outputs the first index $i$ such that the (noisy) query value $\hat{f}_i(X)$ is larger than the (noisy) threshold $\hat{t}$.

The standard approach to argue the utility of the algorithm works as follows. First, with probability at least $1-\beta$, all the Laplace noises sampled in the SVT algorithm are bounded by $\frac{1}{\varepsilon}\log(m/\beta)$. Conditioning on this event and letting $i^*$ be the index returned by the algorithm, we have that $f_{i^*}(X) \ge t - \frac{2}{\eps}\log(m/\beta)$ and $f_{j}(X)\le t + \frac{2}{\eps}\log(m/\beta)$ for every $j < i^*$. Therefore, we conclude that the algorithm is ``approximately accurate'' within error $O(\frac{1}{\eps}\log(m/\beta))$ with probability $1-\beta$.

In many scenarios (including the private MWU algorithm), one usually wants to repeatedly run SVT for $k$ rounds, to identify $k$ meaningful queries. Suppose we desire the final algorithm to be $(\eps,\delta)$-private. In each round of SVT, we need to set the privacy budget as $\eps' = \frac{\eps}{\sqrt{k\log(1/\delta)}}$. Doing the analysis above, the error bound becomes $\frac{1}{\eps}\sqrt{k\log(1/\delta)}\log(m/\beta)$. This is somewhat unsatisfactory, as we need to pay the $\log(1/\beta)$ term ``$O(\sqrt{k})$'' times\footnote{However, we remark that the $\sqrt{k}\log(m)$ term is necessary (i.e., there is a lower bound of $\Omega(\sqrt{k}\log(m))$. See, e.g. \cite{BUV14-lb, SteU17-selection-lb, DBLP:books/sp/17/Vadhan17-dp-complex}).}.

\paragraph{Our improvement.} In the following, we show a modification of the sparse vector algorithm with better confidence-accuracy trade-off. In particular, our algorithm is accurate within error $O(\frac{1}{\eps}\sqrt{k\log(1/\delta)}\log(m) + \frac{\log(1/\beta)}{\eps})$ with probability $1-\beta$. To compare, the standard SVT has error bound $O(\frac{1}{\eps}\sqrt{k\log(1/\delta)}\log(m
/\beta))$. See Algorithm~\ref{algo:repetitive-SVT} in Section~\ref{sec:proof-SVT} for the complete description of our new algorithm; we explain the intuition and the main technical challenges below.

Recall that we want to run SVT for $k$ rounds, and desire the final privacy guarantee to be $(\eps,\delta)$-DP. Given this constraint, each round of SVT has to be $(\eps',0)$-DP for $\eps' = \frac{\eps}{\sqrt{k\log(1/\delta)}}$. 

To illustrate the idea, suppose for now that we do not need to add a noise to the threshold. (Namely, let us suppose that $\hat{t} = t$ always holds.) Now, given a query $f_i(X)$, the sparse vector algorithm first obtains a noisy estimation $\hat{f}_i(X) = f_i(X) + \Lap(1/\eps')$, and then compares $\hat{f}_i(X)$ against $\hat{t}$. If the comparison result is $\hat{f}_i(X) \le \hat{t}$, then we are $(1-\beta)$-confident that $\hat{f}_i(X) \le \hat{t} + \frac{2}{\eps'}\log(1/\beta)$. Observe that the error gap $\frac{2}{\eps'}\log(1/\beta)$ keeps increasing as we desire a higher and higher confidence.

We can amplify confidence more economically by repetition. Let $\tau = \log(1/\beta)$. Consider drawing $\tau$ independent noisy estimations of $f_i(X)$. Namely, we calculate $f_i^{(j)}(X) = f_i(X) + \Lap(1/\eps')$ for every $j\in [\tau]$. Suppose that all of these noisy estimations are below $\hat{t}$. Then we are $(1-\beta)$-confident that $f_i(X)\le \hat{t}$. Here, somewhat magically, we get the higher confidence without compromising privacy, as the SVT algorithm only incurs a privacy loss when one observes an ``Above-Threshold'' result. 

However, the situation becomes more complicated when we get the ``Above-Threshold'' result. If there is one estimation $f_i^{(j)}(X)$ exceeding $\hat{t}$, we cannot immediately declare that $f_i(X) \ge \hat{t}$. Indeed, for every fixed $f_i(X)$ and $\hat{t}$, as $\tau$ tends to infinity, with probability one, we will eventually see a noisy estimation that is above $\hat{t}$. Hence, the larger $\tau$ is, the less confident we are about the conclusion $f_i(X)\ge \hat{t}$. To ensure that we can be equally confident in the ``Above-Threshold'' case, we use the following strategy. We set a new threshold $\hat{t}_{lower} = \hat{t} - \frac{6\log(m)}{\eps'}$. Then we test if $f_i(X) \ge \hat{t}_{lower}$. Similarly, we amplify confidence by repetition. Namely, we obtain $\tau$ independent estimations of $f_i(X)$, and compare them against $\hat{t}_{lower}$. We pass the test if and only if all the $\tau$ estimations are above $\hat{t}_{lower}$. If we pass the test, then we are very confident that $f_i(X) \ge \hat{t}_{lower}$. If we fail the test, we switch back to test if $f_i(X) \le \hat{t}$ (using another $\tau$ independent estimations of $f_i(X)$). In general, we 
alternate between two tests $f_i(X) \stackrel{?}{\le} \hat{t}$ and $f_i(X) \stackrel{?}{\ge} t_{lower}$, until we pass one of them.

\paragraph{Two challenges.} To implement the idea, there are two remaining issues.
\begin{itemize}
    \item In the standard SVT algorithm, we only pay for each ``Above-Threshold'' answer, and every such answer can identify a meaningful query for us. However, in our modified version of SVT, we will repeatedly test $f_i(X) \stackrel{?}{\le} \hat{t}$ and $f_i(X) \stackrel{?}{\ge} \hat{t}_{lower}$ until we pass one of them. Each time we fail a test, we need to pay the privacy loss. How do we ensure that we can use the vast majority of our ``privacy budget'' to answer meaningful queries?
    \item In the argument above, we ignored the issue that $\hat{t}$ is only a noisy version of $t$. Indeed, if we construct $\hat{t}$ by $\hat{t} = t + \Lap(1/\eps')$, we are only $(1-\beta)$-confident that $|\hat{t} -t|\le \frac{1}{\eps'}\log(1/\beta)$. Therefore, there is a noticeable chance (e.g., with probability $10\beta$) that we will start with an erroneous noisy threshold $\hat{t}$ (e.g., $|\hat{t} - t| \ge \frac{1}{\eps'}\log(\frac{1}{10\beta})$). If this does happen, then the ``alternate testing'' algorithm does not make much sense.
\end{itemize}

We resolve the first issue by allowing a gap of $\frac{6}{\eps'}\log(m)$ between $\hat{t}_{lower}$ and $\hat{t}$. Then, for any query $f_i(X)$, at least one of the following is true:
\[
f_i(X) \le \hat{t} - \frac{3}{\eps'}\log(m), ~~~~ \text{or} ~~~~ f_i(X) \ge \hat{t}_{lower} + \frac{3}{\eps'}\log(m).
\]
We assume $\tau = \log(1/\beta) \le m$. In this case, if $f_i(X)\le \hat{t}-\frac{3}{\eps'}\log(m)$, we can pass the test $f_i(X) \stackrel{?}{\le} \hat{t}$ with probability at least $1-\frac{1}{m^2}$. If $f_i(X)\le \hat{t}-\frac{3}{\eps'}\log(m)$ and we do fail the test $f_i(X) \stackrel{?}{\le} \hat{t}$, then we say that a ``mistake'' happens and one round of SVT is ``wasted''. Similarly, we will say one round of SVT is wasted if $f_i(X) \ge \hat{t}_{lower} + \frac{3}{\eps'}\log(m)$ but we fail the test $f_i(X) \stackrel{?}{\ge} \hat{t}_{lower}$. Since the probability of making a mistake is small (i.e., the probability is at most $ \frac{1}{m^2}$), we can show that if we run our version of SVT for $k$ rounds to handle $m$ queries, then with probability $1-\beta$, at most $\frac{\log(1/\beta)}{\log(m)}$ rounds are wasted. Therefore, we can still use the algorithm to identify $k - \frac{\log(1/\beta)}{\log(m)}$ meaningful queries.

We resolve the second issue by noting that there is a version of SVT that only requires one to noisify the threshold once (see, e.g., \cite{DBLP:journals/pvldb/LyuSL17}). Here, the crucial point is that when we only add noise to the threshold once, we can add a much smaller noise to ensure privacy. However, the result from \cite{DBLP:journals/pvldb/LyuSL17} only considers pure-DP case, and the privacy loss scales linearly with $k\eps$ in their analysis. We manage to prove Theorem~\ref{theo:meta-test}, which shows that the privacy scales proportionally with $\sqrt{k}\eps$ if we relax the requirement to approximate DP. 

To be more precise, we explain the utility improvement quantitatively. In the standard SVT algorithm, we need to re-initialize the noisy threshold before starting each of the $k$ rounds, and we draw noises from the distribution $\Lap(1/\eps')$ to achieve privacy. With our new composition theorem, we only need to add one noise drawn from $\Lap(1/\eps)$. Recall that $\frac{1}{\eps'} \approx \frac{\sqrt{k\log(1/\delta)}}{\eps}$. The latter noise is considerably smaller than the former one.

In the formal proof (Section~\ref{sec:proof-SVT}), we design our algorithm under the framework of Algorithm~\ref{algo:sync}, using the $\FTest$ procedure as the main subroutine, which turns out to be more amenable to mathematical manipulations. In the discussion above, we chose to discuss our algorithmic idea under the standard SVT framework, because SVT is arguably more well-known, and it might help readers gain the intuition better.

\paragraph{The possibility of not noisifying threshold.} We also note that there are variants of SVT that do not add noise to the threshold at all (see, e.g., \cite{DBLP:conf/focs/HardtR10, DBLP:conf/colt/KaplanLMNS20}). It is tempting to apply those variants to address the second issue. However, looking into the privacy proofs for those variants, one can note that their $\delta$ parameter has to be at least $2^{-k}$, where $k$ is the number of rounds that we run SVT (i.e., the number of $\top$ responses we can get before halting the algorithm). In most applications, requiring $\delta$ to be at least $2^{-k}$ would not be an issue. However, in the application to private multiplicative weights and data analysis, we are also interested in the case where $\delta \ll 2^{-k}$. In this case, adding a noise to the threshold and using Theorem~\ref{theo:meta-test} to do the privacy accounting give us the best possible result.

More precisely, if we use a version of SVT that does not noisify the threshold, we will end up with weaker versions of Theorems~\ref{theo:MWU-new} and \ref{theo:adaptive-analysis-improved}. Specifically, the error bound in Theorem~\ref{theo:MWU-new} degrades to
\[
O\left( \frac{\sqrt{\log|\calX|\log(1/\delta)} \log(m)}{n\eps\alpha} + \frac{\log(1/\delta)\log(m)}{n\eps} \right),
\]
and the sample complexity in Theorem~\ref{theo:adaptive-analysis-improved} degrades to
\[
n \le O\left( \min\left( \frac{\sqrt{\log|\calX|\log (1/\beta)}}{\alpha^3}, \frac{\log|\calX|}{\alpha^4} \right) \log m + \frac{\log(1/\beta)}{\alpha^2}\right).
\]

\section{Proof of Privacy Theorems} \label{sec:metaprivacyproofs}

\subsection{Proof of Theorem~\ref{theo:meta-selection}}

\begin{proof}
We first consider the case that all the mechanisms $M_i$'s satisfy pure-DP (i.e., $\delta_i = 0$ for all $i$). At the end the proof, we explain how to deal with candidates with approximate DP property. Let $P_{\gamma}$ denote the CDF for the parameter $p$. Namely, $P_{\gamma}(v) = \Pr[x\le v] = v^\gamma$. Fix $c\ge 1$. Let $\calB$ be an adversary that adaptively calls $\FSelect$ for $c$ rounds. Denote Algorithm~\ref{algo:sync} as $\calA$. For each $p\in [0,1]$, let $\calA^p$ denote Algorithm~\ref{algo:sync} conditioning on it having sampled $p$ in the initialization step.

In one round, suppose that the $\calB$ calls \FSelect with $(\tau, k, M_1,\dots, M_k)$. By duplicating each mechanism for $\tau$ times, we can assume without loss of generality that $\tau = 1$. Let $(i,s)\in [k]\times \calY$ be a potential outcome. Let $D$ and $D'$ be two neighboring datasets. Let $\calA^p(D, (M_1,\dots, M_k))$ denote the output of $\calA^p$ given input $D$ and query $(M_1,\dots, M_k)$. We prove
\[
\Pr[\calA^p(D, (M_1,\dots, M_k)) = (i,s)] \le e^{2\eps} \Pr[\calA^{p/e^\eps}(D', (M_1,\dots, M_k))= (i, s)].
\]
Note that $\calA$ on this \FSelect call returns $(i,s)$, if and only if both of the following events hold:

\begin{itemize}
    \item Event $\calE_1$: $r_{i,1} = 1$ and $M_i$ outputs $s$.
    \item Event $\calE_2$: For every $i'\ne i$, either $r_{i',1} = 0$ or $M_{i'}$ outputs a solution that is inferior than $s$.
\end{itemize}
 
Now, observe that
\begin{align}
\Pr[\calE_1\mid \calA^p(D)] 
&= p\cdot \Pr[M_i(D)= s] \notag \\
&\le e^{\eps} p \Pr[M_i(D')=s] \notag \\
&\le e^{2\eps} \frac{p}{e^\eps} \Pr[M_i(D') = s] \notag \\
&= e^{2\eps} \Pr[\calE_1 \mid \calA^{p/e^\eps}(D')]. \label{eq:selection-event-1}
\end{align}
For every $i'\ne i$, we have
\[
\begin{aligned}
&~~~~ \Pr[r_{i',1}=1\land M_{i'} \text{ outputs a solution better than $s$} \mid \calA^p(D)] \\
&= p\cdot \Pr[M_{i'}(D) \text{ outputs a solution better than $s$}] \\
&\ge \frac{p}{e^{\eps}} \cdot \Pr[M_{i'}(D') \text{ outputs a solution better than $s$}] \\
&= \Pr[r_{i',1}=1\land M_{i'} \text{ outputs a solution better than $s$} \mid \calA^{p/\eps}(D')].
\end{aligned}
\]
Therefore,
\[
\begin{aligned}
& ~~~~ \Pr[r_{i',1}=0 \lor M_{i'} \text{ outputs a solution worse than $s$} \mid \calA^p(D)] \\
& \le \Pr[r_{i',1}=0 \lor M_{i'} \text{ outputs a solution worse than $s$} \mid \calA^{p/e^{\eps}}(D')].
\end{aligned}
\]
Consequently,
\begin{align}
\Pr[\calE_2 \mid \calA^p(D)] \le \Pr[\calE_2 \mid \calA^{p/e^\eps}(D')]. \label{eq:selection-event-2}
\end{align}
Combining \eqref{eq:selection-event-1} and \eqref{eq:selection-event-2} yields that
\[
\Pr[\calA^p(D, (M_1,\dots, M_k)) = (i,s)] \le e^{2\eps} \Pr[\calA^{p/e^\eps}(D', (M_1,\dots, M_k))= (i, s)].
\]
Note that this inequality holds for all $(M_1,\dots, M_k)$, \emph{independently of} the private input. Now, suppose $\calB$ interacts with $\calA^{p}(D)$ (or $\calA^{p/e^\eps}(D')$) for $c$ rounds. We can apply the argument above for the $c$ rounds separately. Let $\IT(\calB:\calA)$ denote the random variable recording the interaction transcript between $\calB$ and $\calA$. For every possible collection $E$ of outcomes, we have
\[
\Pr[\IT(\calB:\calA^{p}(D))\in E] \le e^{2c\eps} \Pr[\IT(\calB:\calA^{p/e^\eps}(D'))\in E].
\]

Finally, we have
\[
\begin{aligned}
&~~~~\Pr[\IT(\calB:\calA(D))\in E] \\
&= \int_{0}^1 \Pr[\IT(\calB:\calA^{p}(D))\in E] dP(p) \\
&\le \int_{0}^1 e^{2c\eps} \Pr[\IT(\calB:\calA^{p/e^\eps}(D'))\in E] \cdot (\gamma p^{\gamma-1}) \cdot dp \\
&\le \int_{0}^{e^\eps} e^{2c\eps+\eps} \Pr[\IT(\calB:\calA^{p/e^\eps}(D'))\in E]\cdot (\gamma (p/e^\eps)^{\gamma-1}) \cdot e^{(\gamma-1)\eps}\cdot d(p/e^{\eps}) \\
&\le \int_{0}^{e^\eps} e^{(2c+\gamma)\eps} \Pr[\IT(\calB:\calA^{p/e^\eps}(D'))\in E]\cdot dP(p/e^\eps) \\
&= e^{(2c+\gamma)\eps} \Pr[\IT(\calB:\calA(D'))\in E].
\end{aligned}
\]
This completes the proof.

Now, we consider the case that each mechanism $M_i$ is $(\eps,\delta_i)$-DP with $\delta_i \ge 0$. For two neighboring inputs $D, D'$, we can decompose $M_i(D) = (1-\delta_i) N_i(D) + \delta_i E_i(D)$ and $M_i(D') = (1-\delta_i) N_i(D') + \delta_i E_i(D')$, where $N_i(D)$ and $N_i(D')$ are $(\eps,0)$-indistinguishable and $E_i(D),E_i(D')$ are arbitrary. Then, each time we run $M_i(D)$, we can first sample from $(N_i(D), E_i(D))$ with probability $(1-\delta, \delta)$, and run the chosen mechanism. It follows that with probability at least $1-\tau \delta_i$, all the executions involve only the $N_i$-part. We union-bound over $i$, and conclude that with probability at least $1-\tau\sum_{i}\delta_i$, all the executions involve only the $N_i$-part. Conditioning on this event, the argument above shows that the outputs of the algorithm on $D,D'$ are $((2c+\gamma)\eps,0)$-DP. It follows that the original system is $((2c+\gamma)\eps,\tau\sum_{i}\delta_i)$-DP.
\end{proof}

\subsection{Proof of Theorem~\ref{theo:meta-test}}

In this section, we prove Theorem~\ref{theo:meta-test}. To ease the presentation, we will assume that all the mechanisms fed into \FTest satisfy pure-DP. Having proved for this case, the proof for approximate-DP mechanisms can be verified using similar arguments as in the proof of Theorem~\ref{theo:meta-selection}.

\subsubsection{Preliminaries}

We will use the R\'enyi Differential Privacy \cite{DBLP:conf/csfw/Mironov17-renyi} framework to prove Theorem~\ref{theo:meta-test}. Recall the definition. For two distributions $P,Q$ on the universe $X$ and a real $\alpha \in (1, \infty)$, we define the $\alpha$-order R\'enyi divergence of $P$ from $Q$ as
\[
D_{\alpha}(P \| Q) := \frac{1}{\alpha - 1} \log \left( \sum_{x\in X} P(x) \left( \frac{P(x)}{Q(x)} \right)^{\alpha - 1} \right).
\]
The max-divergence is defined as
\[
D_{\infty}(P \| Q) := \log(\max_{x} \{P(x)/Q(x)\}).
\]

Since we will defer most of the technical manipulations to Appendix \ref{appendix:proof-privacy}, we don't discuss properties of divergences in the main paper.

\subsubsection{Reduction to an Asymmetrical Coin Game}

We use $\calA$ to denote Algorithm~\ref{algo:sync}. For $p\in [0,1]$, let $\calA^p$ denote the algorithm $\calA$ conditioning on it having sampled $p$ in the initialization step. Fix two neighboring dataset $S,S'$\footnote{We use $S,S'$ to denote data sets in this subsection, since the letter ``$D$'' is reserved for divergence.}. Let $\IT(\calB:\calA)$ denote the random variable recording the interaction between the two randomized systems $\calB$ and $\calA$. Let $\calB$ be an adversary that interacts with $\calA$ by calling $\FTest$. Fix $p\in [0,1]$, we first compare $\IT(\calB:\calA^{p}(S))$ with $\IT(\calB:\calA^{p/e^{\eps}}(S'))$. It turns out that the two interactions can be captured by the following experimental game, which we call the ``$0$-favored'' coin flipping game. Our formulation of the coin-flipping game is inspired by several coin-flipping mechanisms appeared in previous works \cite{DBLP:conf/soda/GuptaLMRT10, DBLP:journals/corr/OhV13, DBLP:journals/toc/MurtaghV18, DBLP:conf/colt/KaplanMS21}.

\begin{algorithm2e}[H]
\LinesNumbered
    \caption{The $0$-favored coin-flipping mechanism.}
    \label{algo:coin-flip-asymmetric}
    \DontPrintSemicolon
    \KwIn{
           An input bit $b\in \{0,1\}$, a privacy parameter $\eps \in (0,1)$, an integer $k\ge 1$.
    }
    \SetKwProg{Init}{initialize}{:}{}
    \Init{}{
        $c\gets 0$ \;
    }
    \While{$c<k$} {
        Receive a query $(p,q)\in [0,1]^2$, promised that $0 \le q\le p\le e^{\eps} q$ and $(1-q)\le e^{\eps} (1-p)$ \;
        \If{$b=0$}{
            Sample $r\sim \Ber(p)$\;
        }
        \Else{
            Sample $r\sim \Ber(q)$\;
        }
        \If{$r=1$}{
            $c\gets c + 1$ \;
        }
        Output $r$ \;
    }
\end{algorithm2e}

We call this mechanism $0$-favored, because for each query $(p, q)$, we always have $q\le p$, meaning that the algorithm is more likely to output $1$ when its private input is $b = 0$. 

We claim that Algorithm~\ref{algo:coin-flip-asymmetric} with privacy parameter $2\eps$ can simulate $\IT(\calB:\calA^{p}(S))$ and $\IT(\calB:\calA^{p/e^\eps}(S'))$. To see this, suppose that $\calB$ knows that he is interacting with either $\calA^{p}(S)$ or $\calA^{p/e^\eps}(S')$. When $\calB$ prepares a hypothesis $H$, he can compute
\[
p_0 = \Pr[\calA^{p}(S) \text{ on } \FTest(H) \text{ outputs } \top]
\]
and
\[
p_1 = \Pr[\calA^{p/e^\eps}(S') \text{ on } \FTest(H) \text{ outputs } \top].
\]
It follows that $p_0 = p\cdot \Pr[H(S) = \top] \ge \frac{p}{e^\eps} \Pr[H(S') = \top] = p_1$, $p_0 \in (e^{-2\eps}p_1, e^{2\eps}p_1)$ and $(1-p_0)\in (e^{-2\eps}(1-p_1),e^{2\eps}(1-p_1))$. Therefore, sending a query $H$ to Algorithm~\ref{algo:sync} is equivalent to sending a query $(p_0,p_1)$ to Algorithm~\ref{algo:coin-flip-asymmetric}.

\subsubsection{Analysis of the Coin Game}

Algorithm \ref{algo:coin-flip-asymmetric} appears not to be differentially private. Nevertheless, it satisfies a ``one-sided'' stability property, which can be captured by max-divergence and R\'enyi divergence, as shown in the following two lemmas.

\begin{lemma}\label{lemma:max-divergence-one-sided}
Consider Algorithm~\ref{algo:coin-flip-asymmetric}. Fix $\calB$ to be an arbitrary adversary interacting with Algorithm~\ref{algo:coin-flip-asymmetric}. Let $P,Q$ denote the distributions of the interaction between $\calA$ and Algorithm~\ref{algo:coin-flip-asymmetric} when the private input is $0$ or $1$, respectively. Then, for any $\alpha \in (1, \infty)$, we have $D_{\infty}(P\|Q) \le k\eps$.
\end{lemma}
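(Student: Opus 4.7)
The plan is to bound the transcript likelihood ratio pointwise. A transcript of the interaction between $\calB$ and Algorithm~\ref{algo:coin-flip-asymmetric} under input $b$ takes the form $\pi = ((p_1, q_1), r_1), \dots, ((p_m, q_m), r_m)$, where each query $(p_i, q_i)$ is determined by $\calB$'s internal randomness together with the past responses $r_1, \dots, r_{i-1}$, and the algorithm halts exactly when the number of $1$'s among the $r_i$'s reaches $k$. So any transcript in the support of $P$ or $Q$ contains exactly $k$ indices with $r_i = 1$ and some number $m-k$ of indices with $r_i = 0$.

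The first step is to factor the probability of a fixed transcript $\pi$ under input $b$. Since the distribution over $(p_i, q_i)$ given $(r_1, \dots, r_{i-1})$ depends only on $\calB$'s internal randomness and not on $b$, it cancels in the ratio, leaving
\[
\frac{P(\pi)}{Q(\pi)} \;=\; \prod_{i:\,r_i=1} \frac{p_i}{q_i} \,\cdot\, \prod_{i:\,r_i=0} \frac{1-p_i}{1-q_i}.
\]
The second step is to apply the promised constraints on each query $(p_i, q_i)$. We have $p_i \le e^\eps q_i$, so every factor in the first product is at most $e^\eps$, and we have $q_i \le p_i$, so every factor in the second product is at most $1$. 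Because the transcript contains exactly $k$ one-responses by the halting rule, the first product contributes at most $e^{k\eps}$ and the second contributes at most $1$, giving $P(\pi)/Q(\pi) \le e^{k\eps}$ for every $\pi$, which is precisely $D_\infty(P\|Q) \le k\eps$.

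There is essentially no obstacle here beyond careful bookkeeping: one only needs to be explicit about why the adversary's choices cancel in the ratio (they are measurable with respect to the past responses, which are the same random variables in both coupled worlds) and why the count of $1$'s in any valid transcript is pinned at exactly $k$. The non-trivial counterpart of this lemma is the R\'enyi-divergence bound (which is where the $\sqrt{c\log(1/\delta)}$-type dependence in Theorem~\ref{theo:meta-test} will come from); the max-divergence bound stated here is the easy ``pure-DP'' analogue and follows directly from the two promised inequalities on $(p_i, q_i)$.
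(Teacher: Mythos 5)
Your proof is correct, and it reaches the bound by a somewhat different organization than the paper. You bound the likelihood ratio pointwise on full transcripts: the adversary's query choices cancel because its conditional behavior given the past transcript does not depend on $b$, leaving $\prod_{i:r_i=1} p_i/q_i \cdot \prod_{i:r_i=0} (1-p_i)/(1-q_i)$, and then $p_i \le e^{\eps} q_i$ caps each $1$-factor by $e^{\eps}$ while $q_i \le p_i$ caps each $0$-factor by $1$. The paper instead first reduces to the case $k=1$ with a deterministic adversary (where the transcript is just the index of the halting round, with probability $\prod_{j<i}(1-p_j)p_i$), proves the $e^{\eps}$ bound there by the same two inequalities, and then handles randomized adversaries by conditioning on the coins and taking a supremum, and $k>1$ by viewing the interaction as a concatenation of $k$ single-hit phases and invoking the composition property of max-divergence. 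Your single global argument is more self-contained (no composition lemma, no explicit conditioning on coins), whereas the paper's phase-by-phase decomposition is the scaffolding it reuses for the genuinely delicate R\'enyi bound of Lemma~\ref{lemma:divergence-one-sided}, which is where the real work lies. One small imprecision: a transcript in the support need not contain \emph{exactly} $k$ ones (the interaction can end, or be cut off, with fewer than $k$ positive responses, e.g.\ if the adversary stops querying); but this is harmless, since your argument only needs that the number of $1$-responses is \emph{at most} $k$, each contributing a factor $\le e^{\eps}$ and each $0$-response a factor $\le 1$, so the bound $e^{k\eps}$ still holds verbatim.
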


\begin{lemma}\label{lemma:divergence-one-sided}
Consider the same setup as in Lemma~\ref{lemma:max-divergence-one-sided}. For any $\alpha \in (1, \infty)$, we have $D_{\alpha}(P\|Q) \le 3k \alpha  \eps^2$. 
\end{lemma}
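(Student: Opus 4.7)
The plan is to use the R\'enyi chain rule together with a decomposition of the transcript into $k$ \emph{epochs}, exploiting the crucial fact that the game terminates after exactly $k$ ``$1$'' outcomes. For any adversary $\calB$ interacting with Algorithm~\ref{algo:coin-flip-asymmetric}, let $L = \log(dP/dQ)$ be the log-likelihood ratio of the transcript. By the definition of R\'enyi divergence, it suffices to show
\[
\Ex_P\bigl[ e^{(\alpha-1)L} \bigr] \le \exp\bigl(3k\alpha(\alpha-1)\eps^2\bigr).
\]

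Partition the rounds into $k$ epochs, where epoch $j$ consists of every round strictly after the $(j-1)$-th ``$1$'' response up to and including the $j$-th ``$1$''. Let $L^{(j)}$ denote the privacy loss accumulated during epoch $j$, so $L = \sum_{j=1}^k L^{(j)}$. Conditional on the history $h_j$ before epoch $j$, the adversary's strategy within the epoch is a deterministic sequence $(p_m, q_m)_{m\ge 1}$ of valid coin queries (the only observation within an epoch before it ends is ``all zeros so far''). Writing $A_m := (1-p_m)^\alpha(1-q_m)^{1-\alpha}$ (the ``continue'' contribution) and $B_m := p_m^\alpha q_m^{1-\alpha}$ (the ``terminate'' contribution), a direct computation gives
\[
\Ex_P\bigl[ e^{(\alpha-1)L^{(j)}} \mid h_j \bigr] = \sum_{n\ge 1} B_n \prod_{m=1}^{n-1} A_m.
\]
A Bellman-style peeling argument---if the tail MGF from round $n+1$ onward is at most $V$, then $E_n = B_n + A_n E_{n+1} \le V$ whenever $B_n/(1-A_n) \le V$---then bounds this epoch MGF uniformly by
\[
V := \sup_{(p,q)\text{ valid}} \frac{p^\alpha q^{1-\alpha}}{1 - (1-p)^\alpha(1-q)^{1-\alpha}}.
\]
Iterating over epochs via the tower property yields $\Ex_P[e^{(\alpha-1)L}] \le V^k$.

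What remains is to show $\log V \le 3\alpha(\alpha-1)\eps^2$. I would substitute $p = q(1+s)$ and $1-p = (1-q)(1-y)$, which enforces the compatibility $qs = (1-q)y$ and translates the constraints to $s \in [0, e^\eps - 1]$, $y \in [0, 1-e^{-\eps}]$. After simplification,
\[
\frac{p^\alpha q^{1-\alpha}}{1 - (1-p)^\alpha(1-q)^{1-\alpha}} = \frac{q(1+s)^\alpha}{1 - (1-q)(1-y)^\alpha}.
\]
A second-order Taylor expansion around $s = y = 0$---where the linear terms cancel thanks to $qs = (1-q)y$---gives the leading behavior $\log V \lesssim \tfrac{1}{2}\alpha(\alpha-1)s^2 \le 2\alpha(\alpha-1)\eps^2$ for $\eps \in (0,1)$, leaving slack for the higher-order $O(\alpha^3\eps^3)$ remainder to fit inside the constant $3$.

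The main obstacle is precisely this bookkeeping: keeping the bound uniform in $q\in(0,1)$ (the denominator $1 - (1-q)(1-y)^\alpha$ shrinks with $q$, but the numerator shrinks at the same rate, so the ratio stays finite) and in $(\alpha, \eps)$ without losing too much in the cubic remainder. I anticipate handling small $q$ via the expansion above and large $q$ by a direct estimate (where $V$ is bounded by the $q$-independent quantity $(1+s)^\alpha \le e^{\alpha\eps}$); a convexity argument in $s$ at fixed $q$ may also yield a cleaner monotonicity bound. Combining this per-epoch bound with the tower argument concludes $D_\alpha(P\|Q) \le k\log V/(\alpha-1) \le 3k\alpha\eps^2$, as desired.
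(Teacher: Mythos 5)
Your overall architecture is the same as the paper's: reduce to a single ``epoch'' (the rounds up to one $1$-response), bound the per-epoch R\'enyi sum $\sum_n B_n\prod_{m<n}A_m$ with $A_m=(1-p_m)^\alpha(1-q_m)^{1-\alpha}$, $B_m=p_m^\alpha q_m^{1-\alpha}$, and compose over the $k$ epochs (handling randomized adversaries by conditioning on their coins). Where you genuinely differ is the per-epoch step: the paper proves Claim~\ref{claim:induction-divergence} by forward induction on a truncation parameter, carrying the invariant $1+3\alpha(\alpha-1)\eps^2$ through a case split on $p_1$; your backward ``Bellman'' peeling instead shows that constant strategies are worst case, reducing everything to the single scalar inequality $\sup_{(p,q)}\,p^\alpha q^{1-\alpha}/\bigl(1-(1-p)^\alpha(1-q)^{1-\alpha}\bigr)\le e^{3\alpha(\alpha-1)\eps^2}$. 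That reduction is correct (two small repairs: take the truncated base case $E_{N+1}=1\le V$ so that the never-terminating tail $\prod_{n\le N}A_n$ is covered, and note $V\ge 1$), and it is an attractive simplification: it isolates one closed-form inequality in two variables in place of the paper's step-by-step estimates.

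The gaps are in your verification of that scalar inequality. First, the lemma is claimed for all $\alpha\in(1,\infty)$, and a Taylor expansion ``with slack inside the constant $3$'' cannot work when $\alpha\eps$ is large; you need the explicit fallback $D_\alpha\le D_\infty\le k\eps\le 3k\alpha\eps^2$ for $\alpha\eps\ge 1/3$ (this is exactly how the paper dispatches that regime via Lemma~\ref{lemma:max-divergence-one-sided}), restricting the Taylor analysis to $\alpha\eps<1/3$. Second, your leading-order claim is not uniform in $q$: the supremum is not governed by the small-$q$ behavior $\tfrac12\alpha(\alpha-1)\eps^2$ but is attained at the corner where both constraints are tight, $p=e^\eps q$ and $1-q=e^\eps(1-p)$, i.e.\ $q=1/(e^\eps+1)$, where the ratio equals $e^{\alpha\eps}/\bigl(e^\eps+1-e^{(1-\alpha)\eps}\bigr)$ and its logarithm is $\alpha(\alpha-1)\eps^2+O(\alpha^3\eps^3)$ --- twice your claimed leading term. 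Relatedly, your proposed large-$q$ estimate $V\le(1+s)^\alpha\le e^{\alpha\eps}$ is far too weak: $e^{\alpha\eps}$ exceeds $1+3\alpha(\alpha-1)\eps^2$ whenever $(\alpha-1)\eps$ is small, so you cannot discard the denominator anywhere; you must keep its second-order expansion, e.g.\ $1-(1-q)(1-y)^\alpha\ge q\bigl(1+\alpha s-\tfrac{\alpha^2}{2}sy\bigr)$ using $(1-q)y=qs$, and compare with $(1+s)^\alpha$ at every $q$. With those repairs the target $3\alpha(\alpha-1)\eps^2$ still has room (the true worst case is $\approx\alpha(\alpha-1)\eps^2$ for $\alpha\eps\le 1/3$), so your route can be completed, but as written the key inequality is asserted from an expansion whose identified worst case and whose fallback estimates are wrong.
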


Intuitively, the two lemmas hold because for each query $(p,q)$ with $q\le p$, the algorithm is \emph{always} more likely to output $0$ when the private input is $b=1$. Hence, when we consider $D_{\infty}(P\|Q)$ (i.e., when we consider the divergence of $P$ from $Q$), we can ``pass'' the ``$0$'' outputs ``for free''. Essentially, the divergence increases only when we observe ``$1$'' outputs. The counter $c$ counts the number of ``$1$'' outputs that we have seen so far. Since we halt the algorithm once $c$ reaches a pre-defined threshold $k$, naturally, one would expect that the divergence bound of $P$ from $Q$ grows only with $k$ (and is independent of the number of queries). Lemmas~\ref{lemma:max-divergence-one-sided} and \ref{lemma:divergence-one-sided} confirm this intuition.

The formal proof for Lemma~\ref{lemma:divergence-one-sided}, however, is quite delicate and involved. We defer proofs of both lemmas to Appendix~\ref{appendix:proof-privacy}.

\subsubsection{Wrap-up}

Now, let $\calB$ be an adversary interacting with $\calA^p(S)$ or $\calA^{p/e^\eps}(S')$. Lemma~\ref{lemma:max-divergence-one-sided} shows that for every collection of outputs $E$, it holds that
\[
\Pr[\IT(\calB:\calA^p(S))\in E] \le e^{2c\eps} \Pr[\IT(\calB:\calA^{p/e^\eps}(S'))\in E].
\]
By H\"older's inequality and Lemma~\ref{lemma:divergence-one-sided} (see also the conversion from R\'enyi DP to approximate DP, e.g., \cite{DBLP:conf/csfw/Mironov17-renyi}), there is an absolute constant $D \ge 1$ such that
\[
\Pr[\IT(\calB:\calA^p(S))\in E] \le e^{2D\sqrt{c\log(1/\delta)}\eps} \Pr[\IT(\calB:\calA^{p/e^\eps}(S'))\in E] + \delta.
\]
To finish the proof of Theorem~\ref{theo:meta-test}, we integrate over $p\in [0,1]$. This step is the same as the proof of Theorem~\ref{theo:meta-selection} and we do not repeat it here.

\subsubsection{Using \textsc{Selection} and \textsc{Test} Concurrently}

We claimed in the introduction that our framework allows an analyst to interleave its calls to $\FSelect$ and $\FTest$, and only pay the privacy loss for every call to $\FSelect$ and every $\top$ response received from $\FTest$. Specifically, if there are $c_1$ calls to $\FSelect$ and $c_2$ calls with $\top$ output to  $\FTest$, each with mechanisms that are $(\eps,0)$-DP,  then the total privacy cost is $((2c_1 + 2c_2 +\gamma)\eps,0)$-DP.

To see why this is true, let $\calB$ be an adversary that interacts with $\calA$ arbitrarily. Consider the interaction between $\calB$ and $\calA^{p}(S), \calA^{p/e^\eps}(S')$. Combining the proofs of Theorems~\ref{theo:meta-selection} and \ref{theo:meta-test}, we have
\[
\Pr[\IT(\calB:\calA^{p}(S))\in E] \le e^{2c_1\eps + 2c_2\eps} \Pr[\IT(\calB:\calA^{p/e^\eps}(S')\in E].
\]
Integrating over $p\in [0,1]$ shows that
\[
\Pr[\IT(\calB:\calA(S))\in E] \le e^{2c_1\eps + 2c_2\eps+\gamma\eps} \Pr[\IT(\calB:\calA(S')\in E],
\]
as desired.

\section{Proofs for Applications}

In this section, we prove theorems listed in Sections~\ref{sec:appLK} and \ref{sec:appselect}.

\paragraph*{Notation.} For $\eps,\delta> 0$, we define the truncated Laplace mechanism $\TLap(\eps,\delta)$ as follows. The support of $\TLap(\eps,\delta)$ is $[-\frac{\log(1/\delta)}{\eps},\frac{\log(1/\delta)}{\eps}]$. For every $v\in [-\frac{\log(1/\delta)}{\eps},\frac{\log(1/\delta)}{\eps}]$, we have $\Pr[\TLap(\eps,\delta) = v]\propto e^{-|v|\eps}$. The truncated Laplace mechanism can answer a sensitivity-$1$ query $f:\calX^n\to \mathbb{R}$ by publishing $f(S) + \TLap(\eps,\delta)$. It is well known that this mechanism is $(\eps,\delta)$-DP \cite{DBLP:conf/aistats/GengDGK20-truncated}.

\subsection{Better-Than-Median Selection}\label{sec:proof-LT}

In this section, we prove Theorem~\ref{theo:privacy-runtime-tradeoff}, restated below. 

\begin{reminder}{Theorem~\ref{theo:privacy-runtime-tradeoff}}
For every $\alpha\in (0,\infty)$ and $\beta \in (0, 1)$, for $(\eps,\delta)$-DP $\mathcal{A}$ with $\eps\in (0,1)$ and $\delta\ge 0$, there is a better-than-median algorithm $\calA^*$ with confidence $1-\beta$ that satisfies the following: 
\begin{itemize}
\item With probability $1$, $\calA^*$ makes at most $T$ oracle calls to $\calA$, where
\[
T = \begin{cases}
~~\left\lceil \frac{2}{\beta} \right\rceil & \text{if $\alpha = 1$} \\
\left\lceil 5(\frac{2}{\beta})^{1/\alpha} \log(1/\beta)\right\rceil & \text{otherwise}
\end{cases}.
\]
\item $\calA^*$ is $((2+\alpha)\eps, T\delta)$-differentially private.
% \item The score of the output of $\calA^*(D)$ is larger than $s_m$ with probability at least $1 - \beta$.
\end{itemize}
Furthermore, if $\calA$ is $(\eps,\delta)$-DP with $\delta > 0$, then $\calA^*$ is $(\eps,T\delta)$-DP.
\end{reminder}

\begin{proof}
We design $\calA^*$ that initializes 
Algorithm~\ref{algo:sync} with $\gamma = \alpha$ and
calls \FSelect with arguments $(k = 1, \tau = T, M_1 = \calA)$ where $T$ given by the theorem statement. If follows from Theorem~\ref{theo:meta-selection} that $\calA^*$ is $((2+\alpha)\eps,\tau \delta)$-DP. As for the utility, we have shown the utility guarantee for $\alpha = 1$. When $\alpha \ne 1$, we argue as follows. 

First, with probability at least $1-(\beta/2)$, we have that $p\ge (2\beta)^{1/\alpha}$. We condition on this event. By our choice of $T$, we know that $\calA$ will be run for at least $t\ge \log(4/\beta)$ times with probability at least $1-(\beta/4)$. We further condition on this event. Then, the probability that the $t$ calls of $\calA$ fail to yield a better-than-median solution is at most $1-(\beta/4)$. Overall, the probability that $\calA*$ fails to output a better-than-median solution is at most $1-\beta$, as desired.
\end{proof}

\subsection{Query Releasing}\label{sec:proof-query-release}

In this section, we prove Theorem~\ref{theo:release-k-queries-non-adaptive}, restated below.

\begin{reminder}{Theorem~\ref{theo:release-k-queries-non-adaptive}}
There is a constant $C > 0$ such that the following is true. For every $\eps \in (0,1),\delta \in (0,1/2)$ and $k\in \mathbb{N}$, there is an $(\eps, \delta)$-DP algorithm that answers $k$ given sensitivity-$1$ queries $f_1,\dots, f_k$, such that
\[
\Pr_{\tilde{f}}\left[\| f - \tilde{f} \|_{\infty} \le C \frac{\sqrt{k\log 1/\delta}  + \log(1/\delta)}{\eps} \right] = 1,
\]
where $\tilde{f} := (\tilde{f}_1,\dots, \tilde{f}_k)$ denotes the responses released by the algorithm.
\end{reminder}

To prove Theorem~\ref{theo:release-k-queries-non-adaptive}, we need the following theorem due to \cite{DBLP:conf/colt/Ghazi0M21}.

\begin{theorem}[\cite{DBLP:conf/colt/Ghazi0M21}]\label{theo:GKM-algorithm}
There is a constant $C_1 > 0$ such that the following is true. For every $\eps \in (0,1),\delta \in (0,1/2)$ and $k\in \mathbb{N}$, there is an $(\eps, \delta)$-DP algorithm that, given $k$ non-adaptive sensitivity-$1$ queries $f_1,\dots, f_k$, returns a list of answers $(\tilde{f}_1,\dots, \tilde{f}_k)$, such that
\[
\Pr_{\tilde{f}}\left[\| f - \tilde{f} \|_{\infty} \le C_1 \frac{\sqrt{k\log 1/\delta} }{\eps} \right] \ge \frac{1}{2}.
\]
\end{theorem}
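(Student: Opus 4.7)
The target error is $O(\sqrt{k\log(1/\delta)}/\eps)$ with constant success probability, which is exactly a $\sqrt{\log k}$ factor better than the naive Gaussian mechanism with noise scale $\sigma = \Theta(\sqrt{k\log(1/\delta)}/\eps)$ (whose $\ell_\infty$ norm concentrates around $\sigma\sqrt{2\log k}$). The plan is to design a noise distribution on $\mathbb{R}^k$ that still yields $(\eps,\delta)$-DP when added to a vector of sensitivity-$1$ queries but avoids paying the union bound across coordinates. The slack is small: we are permitted the same $\ell_2$ budget as the Gaussian mechanism, but must shave the $\sqrt{\log k}$ that comes from taking a max over $k$ roughly-independent Gaussians.

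The cleanest route I would try is a two-stage ``sample, test, and retry'' mechanism. First, draw $z \sim \mathcal{N}(0,\sigma^2 I_k)$ for $\sigma = \Theta(\sqrt{k\log(1/\delta)}/\eps)$, and consider the candidate output $f(D) + z$. Then verify that $\|z\|_\infty \le C\sigma$ for a suitable constant $C$; since this test is a data-independent test on $z$, it incurs no additional privacy cost. The key observation is that, for $C$ large enough (a constant depending only on the Gaussian CDF), a \emph{constant fraction} of Gaussian draws satisfy $\|z\|_\infty \le C\sigma$, even though the typical draw has $\|z\|_\infty \approx \sigma\sqrt{2\log k}$. Repeating this sample-and-test procedure a constant number of times and releasing the first $z$ that passes the test yields an output with $\ell_\infty$ error at most $C\sigma$ with probability $\ge 1/2$, recovering the target $C_1\sqrt{k\log(1/\delta)}/\eps$.

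The main obstacle is that the accepted noise is \emph{not} distributed as a single Gaussian: it is a Gaussian conditioned on a low-$\ell_\infty$-norm event, and the density ratio between this conditional distribution under two neighboring datasets can blow up near the boundary of the acceptance region. There are two natural ways to close this gap. One is to treat the retry loop as an instance of private selection (in the spirit of Algorithm~\ref{algo:sync} or the Liu--Talwar framework) applied to the base $(\eps,\delta)$-DP Gaussian mechanism composed with the indicator test, paying only a constant blowup in the privacy parameter. The other is to analyze the truncated-Gaussian additive mechanism directly, either by computing a R\'enyi divergence between the conditional distributions under two shifts $u,u'$ with $\|u-u'\|_2 \le 1$, or by a coupling argument that bounds the probability mass transported out of the acceptance region under a neighbor shift. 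The delicate point is that the boundary of the $\ell_\infty$-ball is $k$-dimensional, so naive estimates risk reintroducing a $\sqrt{\log k}$ factor into the divergence bound; one must exploit Gaussianity more carefully, which is presumably what the GKM construction does in detail.
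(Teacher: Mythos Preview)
This theorem is not proved in the paper at all: it is quoted as a black-box result from \cite{DBLP:conf/colt/Ghazi0M21} and used as input to Corollary~\ref{coro:query-release-error-lb}. So there is nothing to compare your proposal against here.

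That said, your sketch has a fatal quantitative error at the ``key observation'' step. You claim that for a constant $C$, a constant fraction of draws $z\sim\mathcal{N}(0,\sigma^2 I_k)$ satisfy $\|z\|_\infty \le C\sigma$. This is false: since the coordinates are independent,
\[
\Pr\bigl[\|z\|_\infty \le C\sigma\bigr] \;=\; \bigl(2\Phi(C)-1\bigr)^{k},
\]
which decays \emph{exponentially} in $k$ for any fixed $C$. To make this probability a constant you are forced to take $C = \Theta(\sqrt{\log k})$, which is exactly the factor you were trying to remove. Consequently the sample-test-retry loop would need $\exp(\Omega(k))$ trials in expectation, and wrapping it in a private-selection framework (Liu--Talwar or Algorithm~\ref{algo:sync}) does not help: those frameworks amplify a base algorithm that already succeeds with constant probability, not one that succeeds with probability $e^{-\Omega(k)}$.

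The actual GKM construction is substantially more delicate and does not proceed by rejection-sampling a spherical Gaussian. Removing the $\sqrt{\log k}$ requires a genuinely different noise distribution (in their case built from carefully chosen infinitely divisible components) together with a direct privacy analysis of that distribution; there is no shortcut via post-processing or private selection applied to the plain Gaussian mechanism.
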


The following statement follows as a simple corollary of Theorem~\ref{theo:GKM-algorithm}.

\begin{corollary}\label{coro:query-release-error-lb}
There is a constant $C_2 > 0$ such that the following is true. For every $\eps \in (0,1),\delta \in (0,1/2)$ and $k\in \mathbb{N}$, there is an $(\eps, \delta)$-DP algorithm that, given $k$ non-adaptive sensitivity-$1$ queries $f_1,\dots, f_k$, returns a list of answers $(\tilde{f}_1,\dots, \tilde{f}_k)$ and a real $s\in \mathbb{R}^{+}$ satisfying the following.
\begin{itemize}
    \item With probability $1$, it holds that $\| f - \tilde{f} \|_{\infty} \le s$.
    \item With probability at least $\frac{1}{2}$, it holds that
    \[
    s\le C_2 \left( \frac{\sqrt{k\log 1/\delta} }{\eps} + \frac{\log(1/\delta)}{\eps} \right).
    \]
\end{itemize}
\end{corollary}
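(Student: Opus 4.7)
The plan is to obtain $\tilde f$ by running the GKM algorithm (Theorem~\ref{theo:GKM-algorithm}), and then privately publish an upper bound $s$ on $\|f - \tilde f\|_\infty$ using a noise source whose support is bounded, so that the inequality $s \ge \|f - \tilde f\|_\infty$ holds with probability $1$. The natural candidate is the truncated Laplace mechanism $\TLap(\eps',\delta')$ defined in the notation paragraph above, whose support lies in $[-\log(1/\delta')/\eps',\log(1/\delta')/\eps']$ and which is $(\eps',\delta')$-DP on sensitivity-$1$ queries.

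Concretely, I would split the privacy budget in half. First, invoke the algorithm from Theorem~\ref{theo:GKM-algorithm} with privacy parameters $(\eps/2,\delta/2)$ to get $\tilde f = (\tilde f_1,\ldots,\tilde f_k)$. Next, treating $\tilde f$ as fixed, consider the query $g(D) := \|f(D) - \tilde f\|_\infty$. Since each $f_i$ is sensitivity-$1$ and $\tilde f$ does not depend on any additional access to $D$, $g$ has sensitivity at most $1$ in $D$. Let $B := 2\log(2/\delta)/\eps$, draw $\eta \sim \TLap(\eps/2,\delta/2)$, and output
\[
s \;:=\; g(D) + \eta + B.
\]
Because $\eta \in [-B,B]$, we get $s \in [g(D), g(D) + 2B]$, so in particular $s \ge \|f - \tilde f\|_\infty$ with probability $1$, which is the first bullet.

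For privacy, basic composition of the $(\eps/2,\delta/2)$-DP GKM step with the $(\eps/2,\delta/2)$-DP truncated Laplace step gives $(\eps,\delta)$-DP overall. For utility, Theorem~\ref{theo:GKM-algorithm} guarantees that with probability at least $\tfrac{1}{2}$ we have $g(D) = \|f - \tilde f\|_\infty \le C_1 \sqrt{k\log(2/\delta)}/\eps$, in which case
\[
s \;\le\; \frac{C_1\sqrt{k\log(2/\delta)}}{\eps} + 2B \;\le\; C_2 \left(\frac{\sqrt{k\log(1/\delta)}}{\eps} + \frac{\log(1/\delta)}{\eps}\right)
\]
for a suitable absolute constant $C_2$, establishing the second bullet. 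There is no real obstacle here beyond choosing the right noise distribution: the key observation is simply that using $\TLap$ (rather than $\Lap$) guarantees a deterministic upper bound on the noise magnitude, which is exactly what lets us publish a valid certificate $s$ with probability $1$ while losing only a constant factor and an additive $\log(1/\delta)/\eps$ in the high-probability bound.
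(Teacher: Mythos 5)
Your proposal is correct and follows essentially the same route as the paper: run the GKM algorithm with half the budget, then privately certify $\|f-\tilde f\|_\infty$ via a shifted truncated Laplace mechanism (the paper uses the shift $2\log(1/\delta)/\eps$ and $\TLap(\eps/2,\delta/2)$, matching your construction up to constants), and conclude by basic composition. The only detail worth keeping explicit, which you do note, is that the certification step is analyzed via adaptive composition with $\tilde f$ treated as the (already released) output of the first stage, so that $g(D)=\|f(D)-\tilde f\|_\infty$ has sensitivity $1$.
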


\begin{proof}
Let $\calA_1$ be an instantiation of the algorithm from Theorem~\ref{theo:GKM-algorithm} with privacy parameter set to $(\eps/2,\delta/2)$. Given a list of queries $(f_1,\dots, f_k)$. We first run $\calA_1$ to produce a list of responses $(\tilde{f}_1,\dots, \tilde{f}_k)$. Let $g = \|f-\tilde{f}\|_\infty$. We compute $s = g + 2\frac{\log(1/\delta)}{\varepsilon} + \TLap(\varepsilon/2,\delta/2)$. Finally, we release $(\tilde{f}_1,\dots, \tilde{f}_k)$ and $s$. It follows that $s$ is an upper bound for $\|f-\tilde{f}\|_2$ with probability $1$. Moreover, since $\|f-\tilde{f}\|_\infty \le 4C_1\frac{\sqrt{k\log 1/\delta} }{\eps}$ happens with probability at least $\frac{1}{2}$, it follows that $s\le \max(4C_1,4)\cdot \left(\frac{\sqrt{k\log 1/\delta} }{\eps} + \frac{\log(1/\delta)}{\varepsilon} \right)$ with probability at least $\frac{1}{2}$. Since the truncated Laplace mechanism is $(\eps/2,\delta/2)$-DP, by the basic composition theorem, the whole algorithm is $(\eps,\delta)$-DP.
\end{proof}

Given Corollary~\ref{coro:query-release-error-lb}, Theorem~\ref{theo:release-k-queries-non-adaptive} follows by tuning parameters properly.

\begin{proofof}{Theorem~\ref{theo:release-k-queries-non-adaptive}}
We will combine Corollary~\ref{coro:query-release-error-lb} with Theorem~\ref{theo:privacy-runtime-tradeoff}.
In particular, given the target privacy budget $(\eps,\delta)$, let $\calA$ be an instantiation of Corollary~\ref{coro:query-release-error-lb} with privacy parameters $(\eps/3,\delta^2/10)$. Then we use Theorem~\ref{theo:privacy-runtime-tradeoff} on top of $\calA$ with $\gamma = 1$ and $\beta = \delta/10$, to select a vector $(\tilde{f}_1,\dots, \tilde{f}_k)$ with the smallest associated quality score $s$. Let $C_2 = \max(4C_1,4)$. If $s \le C_2\left(\frac{\sqrt{k\log 1/\delta} }{\eps} + \frac{\log(1/\delta)}{\varepsilon} \right)$, then we simply output $(\tilde{f}_1,\dots, \tilde{f}_k)$. Otherwise, we non-privately output the original vector $(f_1,\dots, f_k)$. Since $s$ is an upper bound of $\|f-\tilde{f}\|_{\infty}$, the utility guarantee is evident.

It remains to verify that the algorithm is $(\eps,\delta)$-DP. The vector $(\tilde{f}_i)_{i\in [k]}$ selected by Theorem~\ref{theo:privacy-runtime-tradeoff} is $(\eps,\delta/5)$-DP with respect to the private input. Moreover, with probability at least $1-\delta/5$, it holds that
\[
s \le C_2\left(\frac{\sqrt{k\log 1/\delta} }{\eps} + \frac{\log(1/\delta)}{\varepsilon} \right).
\]
Therefore, we will do the non-private correction with probability at most $\delta/5$. Consequently, the whole algorithm $(\eps,\delta)$-DP. 
\end{proofof}

\subsection{Top-$k$ Selection}\label{sec:proof-top-k}

In this section, we prove Theorem~\ref{theo:top-k-selection}, restated below.

\begin{reminder}{Theorem~\ref{theo:top-k-selection}}
There is an absolute constant $C>0$ such that the following holds. For every $\eps,\delta,\beta\in (0, 1)$, there is an $(\eps,\delta)$-DP algorithm for top-$k$ selection satisfying the following. Let $\mathbf{S}$ denote the output of the algorithm. Then
\[
\Pr\left[ \Gap(\mathbf{S})\ge C\frac{\sqrt{k\log(1/\delta)}}{\eps} \cdot \log(m) + C\frac{\log(1/\beta)}{\varepsilon} \right] \le \beta.
\]
\end{reminder}

\begin{proof}
We first consider the case that $\beta \ge \delta$. Define a mechanism $\calA$ as follows. Set $\eps' = \frac{\eps}{40 \sqrt{k\log(1/\delta)}}$, $\calA$ repeatedly run the exponential mechanism for $k$ rounds, to select $k$ candidates $\mathbf{S}\subset [m]$. Then $\calA$ calculates $\tilde{q}(S) = \Gap(S) + \Lap(\eps/6) + 13\frac{\log(1/\beta)}{\eps}$ and outputs $\mathbf{S}$ and $\tilde{q}(S)$. By the basic and advanced composition theorems of DP \cite{DBLP:conf/focs/DworkRV10}, $\calA$ is $(\eps/3,\delta^2/10)$-DP.

With probability $1-\beta/2$, all the Laplace noises sampled in the execution of $\calA$ are bounded by $\frac{10\log(1/\beta)}{\eps}$. We condition on this event. Then, it is easy to see that $\tilde{q}(S)$ provides an upper bound of $\Gap(\mathbf{S})$. Moreover, by the utility proof of the exponential mechanism, there is an absolute constant $C > 0$ such that
\[
\P_{\mathbf{S},\tilde{q}}\left[ \tilde{q}(\mathbf{S}) \le C\frac{\sqrt{k\log(1/\delta)}}{\eps} \cdot \log(m) + C\frac{\log(1/\beta)}{\varepsilon}  \right] \ge \frac{1}{2}.
\]
We use Theorem~\ref{theo:privacy-runtime-tradeoff} on top of $\calA$ with $\gamma = 1$ and $\beta = \delta/10$, to select a subset $\mathbf{S}$ with the smallest associated $\tilde{q}(S)$. The utility property of Theorem \ref{theo:privacy-runtime-tradeoff} shows that with probability $1-\beta$, we have 
\[
\Gap(\mathbf{S}) \le \tilde{q}(S)\le C\frac{\sqrt{k\log(1/\delta)}}{\eps} \cdot \log(m) + C\frac{\log(1/\beta)}{\varepsilon}.
\]
The privacy property of Theorem~\ref{theo:privacy-runtime-tradeoff} shows that the result algorithm is $(\eps,\delta)$-DP, as desired.

For the case $\beta < \delta$, we can first design an $(\eps,\delta')$-DP top-$k$ selection algorithm with confidence $\beta'$, where $\delta' = \delta/10$ and $\beta' = \delta'$. It follows that with probability at least $1-\beta'$, we have
\[
\Gap(\mathbf{S})\le C'\frac{\sqrt{k\log(1/\delta)}}{\eps} \cdot \log(m) + C'\frac{\log(1/\beta')}{\varepsilon} 
\]
for a constant $C' > 0$. We can use the non-private correction trick similar to the proof of Theorem~\ref{theo:release-k-queries-non-adaptive} to non-privately correct bad subset $\mathbf{S}$. This degrades the privacy bound of the algorithm from $(\eps,\delta/10)$-DP to $(\eps,\delta)$-DP, and gives the desired confidence bound (in fact, the confidence is improved to with probability one).
\end{proof}

\subsection{Stable Selection}\label{sec:proof-stable-selection}

In this section, we prove the results for stable selection.

\paragraph*{Choosing Mechanism.} Let $\mathcal{F}$ be a family of $1$-Lipschitz functions. Recall that $\calF$ is $k$-bounded, if for any two neighboring datasets $D,D'$, it holds that $\sum_{f\in \calF}|f(D)-f(D')|\le k$. We now prove Theorem~\ref{theo:choosing-mechanism}, which is restated below.

\begin{reminder}{Theorem~\ref{theo:choosing-mechanism}}
Suppose $\calF$ is a $k$-bounded function family. Then, there is an $(\eps,\delta)$-DP algorithm $\calA$ that receives a private input $X$ and selects a function $f\in \calF$ such that, with probability at least $1-\beta$,
\[
f(D) \ge \max_{f^*\in \calF}\{ f^*(D)\} - O(\frac{1}{\eps} \log(k/\delta\beta)).
\]
\end{reminder}

\begin{proof}
For each function $f_i\in \calF$, construct a mechanism $\calM_i$ as follows: $\calM_i$ receives the takes as input the dataset $D$ and outputs a pair $(i, f_i(D) + \TLap(\eps, \frac{\delta\beta}{5k}))$.

We use the \textsc{Selection} function in Algorithm~\ref{algo:sync} with parameters $\gamma = 1$, $\tau = \frac{4}{\beta}$ and mechanisms $\{ M_i \}_{f_i\in \calF}$, and output the index $i$ returned from the \textsc{Selection}.

To see the privacy, we fix an arbitrary pair of adjacent datasets $D,D'$. It follows that $\sum_{f\in \calF}|f(D) - f(D')| \le k$. Moreover, it is easy to see that that $i$-th mechanism is $(\eps,\frac{\delta\beta}{5k} \cdot |f_i(D)-f_i(D')|)$-DP. Therefore, by Theorem~\ref{theo:meta-selection}, our mechanism is $(3\eps, \delta)$-DP. 

To see the utility, fix the dataset $D$ and let $i^*$ be the index of the best candidate. Then, with probability at least $1-\beta$, at least one trial of $M_{i^*}(D)$ returns a pair $(i,s)$ with $s\ge f_{i^*}(D)$. We condition on this event. Let $(j,s')$ be the result returned from \textsc{Selection}. It follows that $f_j(D)\ge s' - \frac{5}{\eps}\log(k/\beta\delta) \ge s - \frac{5}{\eps}\log(k\beta/\delta) \ge f_{i^*}(D) - \frac{5}{\eps}\log(k/\beta\delta)$, as desired.
\end{proof}

\paragraph*{Stable Selection.} Suppose $\calF$ is a function family. For any given dataset $D$, sort functions in $\calF$ by decreasing order of $f(D)$. Namely, write $\calF = \{f_1,f_2,\dots, f_{m}\}$ so that $f_1(D) \ge f_2(D) \ge \dots \ge f_m(D)$. Recall that we have defined the $k$-th gap of $D$ on $\calF$ as $G_k(D,\calF) = f_1(D) - f_{k+1}(D)$. We prove Theorem~\ref{theo:stable-selection}, restated below.

\begin{reminder}{Theorem~\ref{theo:stable-selection}}
For any $k\ge 1$ and $\beta \in (0, 1)$, there is an $(\eps,\delta)$-DP algorithm $\calA$ that holds a dataset $D$ and achieves the following. For any $1$-Lipschitz function family $\calF$, suppose $G_k(D,\calF)\ge \frac{10}{\eps}\log(k/\delta\beta)$. Then, with probability at least $1-\beta$, $\calA$ selects a function $f\in \calF$ such that
\[
f(D) \ge \max_{f^*\in \calF}\{ f^*(D)\} - O(\frac{1}{\eps}\log(k/\delta\beta)).
\]
\end{reminder}

\begin{proof}
Let $\eps' = \eps/3$. Given $\calF = \{f_1,\dots, f_{m}\}$, $k\ge 1$ and the dataset $D$, let $Q(D)$ be the $(k+1)$-th largest score $f_i(D)$ among $\{f_i(D)\}_{i\in [m]}$. We define a list of mechanisms $M_1,\dots, M_m$, where $M_i(D)$ just returns index $i$ with score $s_i = \max(f_i(D) - Q(D), 0) + \TLap(\eps',\frac{\beta\delta}{10k})$. We initialize Algorithm~\ref{algo:sync} with $\gamma = 1$, and call \textsc{Selection} with $(k = m, \tau = \lceil \frac{2}{\beta} \rceil, M_1,\dots, M_m)$ to select an index $i^*$ with the largest score. Then we output $i^*$.

To see the utility, let $j^* = \arg\max_{j\in [m]} f_j(D)$. If $G_k(D,\calF) \ge 10\log(k/\delta \beta)$. It follows that $f_j(D) - Q(D) \ge 10\log(k/\delta \beta)$. By our choice of $\tau$, with probability at least $1-\beta$, $M_{J^*}$ produces at least one pair $(j^* s_{j^*})$ with $s_{j^*} \ge f_{j^*}(D) - Q(D)$. We condition on this event. Let $i^*$ be the index of returned by the algorithm. Since $(j^*,s_{j^*})$ is in the candidate set, it follows that $\max(f_{i^*}(D)-Q(s),0) \ge f_{j^*}(D)-Q(D) - \frac{1}{\eps'}\log(k/\delta\beta) > 0$. Therefore, $f_{i^*}(D)\ge f_{j^*}(D) - \frac{2}{\eps'}\log(k/\delta\beta)$, as desired.

Now we prove the privacy bound. Fix $D,D'$ to be two neighboring inputs. We first observe that $|Q(D) - Q(D')| \le 1$. For every $i\in [m]$, consider $M_i(D)$ and $M_i(D')$. There three cases:
\begin{itemize}
    \item[Case $1$.] Both $f_i(D) > Q(D)$ and $f_{i}(D') > Q(D')$. Then it follows that $|(f_i(D) -Q(D)) - (f_i(D') - Q(D'))|\le 2$. Therefore, $M_i(D)$ and $M_i(D')$ are $(\eps', \frac{\delta\beta}{5k})$-indistinguishable (in the DP sense).
    \item[Case $2$.] $f_i(D) > Q(D)$ but $f_{i}(D') \le Q(D')$. Still, it is true that $|(f_i(D) -Q(D)) - 0| \le |(f_i(D) -Q(D)) - (f_i(D') - Q(D'))|\le 2$. Therefore, $M_i(D)$ and $M_i(D')$ are $(\eps', \frac{\delta\beta}{5k})$-indistinguishable. The case that $f_i(D)\le Q(D)$ but $f_i(D') > Q(D')$ is similar.
    \item[Case $3$.] Both $f_i(D) \le Q(D)$ and $f_{i}(D') \le Q(D')$. Then it follows that $M_i(D)$ and $M_i(D')$ are $(\eps',0)$-indistinguishable.
\end{itemize}
By our definition of $Q(D)$, there are at most $2k$ indices $i\in [m]$ that belong to Cases 1 and 2. Therefore, by Theorem~\ref{theo:meta-selection}, the whole algorithm is $(3\eps',\tau \frac{\delta\beta \cdot 2k}{5k})$-DP. Since $\tau = 2/\beta$ and $\eps' = \eps/3$, it follows that the algorithm is $(\eps, \delta)$-DP.
\end{proof}

\section{Improved Sparse Vector Technique}\label{sec:proof-SVT}

In this section, we present our improved sparse vector algorithm, and apply it to improve the private multiplicative weight update algorithm.

\subsection{The Repetitive SVT}

Before showing the algorithm, we set up some pieces of notation. In SVT, we want to answer queries of the form $f_i(D) \stackrel{?}{\le} t$, where $f_i(D)$ is a function of sensitivity $\Delta$. 

We formulate a private algorithm $U_\eps(f_i,t)$ to test the hypothesis $f_i(D) \lessapprox t$ privately. $U_\eps(f_i,t)$ computes $\hat{f}_i(D) = f_i(D) + \Lap(\Delta/\eps)$, returns $\top$ if $\hat{f}_i(D) \ge t$, and returns $\perp$ otherwise. It is easy to see that $U_\eps(f_i,t)$ is $(\eps,0)$-DP for every sensitivity-$1$ function $f_i$. For our purpose, we also need a private algorithm to test lower bounds of $f_i(D)$. Let $d > 0$ be a parameter. We design a private algorithm $V_{d,\eps}(f_i,t)$ to test $f_i(D) \gtrapprox t - d$. $V_{d,\eps}(f_i,t)$ computes $\hat{f}_i(D) = f_i(D) + \Lap(\Delta/\eps)$, returns $\top$ if $\hat{f}_i(D) \le t-d$, and returns $\perp$ otherwise.

\paragraph*{The algorithm.} The following shows our improved SVT algorithm, which we call the Repetitive SVT. As we have discussed in Section~\ref{sec:apptest}, it works by alternating the testing of two hypotheses $U_{\eps}(f_i,t)$ and $V_{d,\eps}(f_i,t)$. Here, $U_{\eps}(f_i,t)$ tests the truthfulness of $f_i(D) \lessapprox t$. If $U_{\eps}(f_i,t)$ outputs $\top$, then we say the test is ``failed'', and the hypothesis $f_i(D)\lessapprox t$ is ``refuted'', otherwise we say the test passes. Similarly, $V_{d,\eps}(f_i,t)$ tests the validity of $f_i(D) \gtrapprox t-d$. 

\begin{algorithm2e}[h]
\LinesNumbered
    \caption{The Repetitive SVT}
    \label{algo:repetitive-SVT}
    \DontPrintSemicolon
    \KwIn{
        Parameters $\eps\in (0, 1), \gamma > 0, d > 0, k,\tau \in \mathbb{N}$.
        Private input $D\in \calX^n$. 
    }
    \SetKwProg{Init}{initialize}{:}{}
    \Init{}{
        Initialize Algorithm~\ref{algo:sync} with $\gamma$. \;
    }
    \While{$c<k$} {
        Receive a query $(f_i,t)$ \tcp*[f]{We want to test whether $f_i(D) \lessapprox t$}\;
        $D\gets U_{\eps}$ \tcp*[f]{$D$ will alternate between $U_{\eps}$ and $V_{d,\eps}$}\;
        \While{True}{
            $\mathsf{success} \gets 1$ \;
            \For(\tcp*[f]{Test $D$ independently for $\tau$ times}){$i=1,\dots, \tau$}{
                $\Gamma \gets \FTest(D(f_i,t))$ \tcp*[f]{``\FTest'' is the procedure in Algorithm~\ref{algo:sync}} \;
                \If{$\Gamma = \top$}{
                    $\mathsf{success} \gets 0$ \;
                    break \;
                }
            }
            \If(\tcp*[f]{Passed all of $\tau$ tests}){$\mathsf{success} = 1$}{
                break  \;
            }
            \Else(\tcp*[f]{Failed one test; switch to test the other direction}){
                $c\gets c + 1$  \;
                \If{$c = k$}{
                    HALT the algorithm \;
                }
                \If{$D = U_{\eps}$}{
                    $D\gets V_{d,\eps}$ \;
                }
                \Else{
                    $D\gets U_{\eps}$ \;
                }
            }
        }
        \If{$D = U_{\eps}$}{
            Output $\perp$ \tcp*[f]{Report $f_i(D) \lessapprox t$} \;
        }
        \Else{
            Output $\top$ \tcp*[f]{Report $f_i(D) \gtrapprox t - d$} \;
        }
    }
\end{algorithm2e}

\subsection{Analysis}

In this section, we analyze the privacy and utility properties of Algorithm~\ref{algo:repetitive-SVT}. The privacy follows as a direct corollary of Theorem~\ref{theo:meta-test}.

\begin{lemma}\label{lemma:SVT-privacy}
Algorithm~\ref{algo:repetitive-SVT} is $((2k+\gamma)\eps,0)$-DP and $(O(\gamma\eps + \sqrt{k\log(1/\delta)}\eps),\delta)$-DP for all  $\delta > 0$.
\end{lemma}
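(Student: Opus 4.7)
The plan is to reduce the privacy analysis of Algorithm~\ref{algo:repetitive-SVT} to a direct application of Theorem~\ref{theo:meta-test}. The key structural observation is that Algorithm~\ref{algo:repetitive-SVT} touches the private dataset $D$ \emph{only} through the \FTest subroutine of Algorithm~\ref{algo:sync}: everything else (the adaptive choice of $(f_i,t)$, the alternation between $U_{\eps}$ and $V_{d,\eps}$, the decision to output $\top$ or $\perp$ at the end of a query, the loop control) is post-processing of the responses returned by \FTest. Hence, by the post-processing closure of differential privacy, it suffices to bound the privacy cost of the sequence of \FTest interactions.

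First I would verify that each mechanism fed into \FTest is $(\eps,0)$-DP. For $U_{\eps}(f_i,t)$, the output $\top/\perp$ is a post-processing of $\hat{f}_i(D) = f_i(D)+\Lap(\Delta/\eps)$, which is $(\eps,0)$-DP by the standard Laplace-mechanism analysis since $f_i$ has sensitivity $\Delta$; the same reasoning applies to $V_{d,\eps}(f_i,t)$. So every call $\FTest(U_{\eps}(f_i,t))$ and $\FTest(V_{d,\eps}(f_i,t))$ feeds in an $(\eps,0)$-DP mechanism, and we may invoke Theorem~\ref{theo:meta-test} with all $\delta_i = 0$.

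Next I would argue that the internal counter $c$ in Algorithm~\ref{algo:repetitive-SVT} tracks exactly the number of $\top$ responses received from \FTest. The inner \textbf{for} loop over $i=1,\dots,\tau$ performs \FTest calls, sets $\mathsf{success}\gets 0$ and \textbf{break}s at the first time it sees $\Gamma=\top$, and increments $c$ precisely in the else-branch that handles $\mathsf{success}=0$. Thus each increment of $c$ corresponds to exactly one $\top$ response from \FTest, and, since the algorithm HALTs as soon as $c=k$, the total number of $\top$ responses produced by \FTest over the entire execution is at most $k$. (Responses with $\Gamma = \perp$, and responses forced to $\perp$ by the Bernoulli gate in Algorithm~\ref{algo:sync}, do not increment $c$.)

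Finally, applying Theorem~\ref{theo:meta-test} with this value of $c \le k$ and $\delta_i = 0$ directly yields both claimed bounds: the pure-DP bound $((2k+\gamma)\eps,0)$, and the approximate-DP bound $(O(\gamma\eps+\sqrt{k\log(1/\delta)}\eps),\delta)$ valid for all $\delta \in (0, 2^{-\sqrt{k}})$; for $\delta \ge 2^{-\sqrt{k}}$ the pure-DP bound is already at least as strong as the target approximate-DP statement, so the lemma holds for all $\delta>0$ as stated. I do not anticipate a serious technical obstacle here, since the work is entirely in Theorem~\ref{theo:meta-test}; the only thing to be careful about is the bookkeeping that links the counter $c$ in Algorithm~\ref{algo:repetitive-SVT} with the ``number of $\top$ responses'' counter in the statement of Theorem~\ref{theo:meta-test}, and the fact that $U_\eps$ and $V_{d,\eps}$ are valid $(\eps,0)$-DP mechanisms (which is immediate from Laplace noise at scale $\Delta/\eps$).
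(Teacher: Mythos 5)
Your proposal is essentially the paper's own argument: the paper dismisses this lemma with the single remark that privacy ``follows as a direct corollary of Theorem~\ref{theo:meta-test},'' and you supply exactly the bookkeeping that makes this work (the private data is touched only through \FTest, $U_{\eps}$ and $V_{d,\eps}$ are $(\eps,0)$-DP Laplace-based tests, the counter $c$ increments precisely on $\top$ responses and caps them at $k$). One caveat: your final patch --- that for $\delta \ge 2^{-\sqrt{k}}$ the pure bound $((2k+\gamma)\eps,0)$ already implies the approximate bound --- is not literally true, since $2k\eps$ can greatly exceed $O\bigl(\sqrt{k\log(1/\delta)}\,\eps\bigr)$ when $\delta$ is large and $k$ is big; however, this mismatch between the $\delta$-range in Theorem~\ref{theo:meta-test} and the ``for all $\delta>0$'' phrasing of the lemma is inherited from the paper itself, which does not address it either.
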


Next, we show how to tune parameters $\gamma,d,\tau$ properly to get the best possible utility. Let the target confidence bound by $1-\beta$. We assume $\beta \in (2^{-m},1/m)$. The upper bound of $\beta$ is natural, as our improvement is significant only when $\beta \le m^{-\omega(1)}$. The lower bound on $\beta$ is a rather mild restriction.

Suppose our final privacy budget is $(O(\eps),O(\delta))$-DP, and we need to use Algorithm~\ref{algo:repetitive-SVT} to process at most $m$ queries. Among them, it is promised that there are at most $k$ ``Above-Threshold'' queries. Then, we set
\[
\begin{cases}
\gamma = \frac{\log(20/\beta)}{\log(m)},\\
\tau = 5m^2, \\
\eps' = \frac{\eps}{\gamma + \sqrt{k\log(1/\delta)}}, \\
k' = k + 7\frac{\log(1/\beta)}{\log(m)}, \\
d = \frac{10\Delta}{\eps'} \log(m).
\end{cases}
\]
We run Algorithm~\ref{algo:repetitive-SVT} with parameters $\eps',\gamma, d,\tau, k'$ to process the queries. Lemma~\ref{lemma:SVT-privacy} implies that Algorithm~\ref{algo:repetitive-SVT} with the parameters as specified above enjoys $(\xi,\delta)$-DP, where
\[
\xi = O\left(\eps' \sqrt{k'\log(1/\delta)} \right) \le O(\eps).
\]

We prove the following lemma concerning the utility guarantee of Algorithm~\ref{algo:repetitive-SVT}.

\begin{lemma}\label{lemma:SVT-utility}
For the parameters as stated above. With probability $1-\beta$, Algorithm~\ref{algo:repetitive-SVT} achieves the following utility guarantee.
\begin{itemize}
    \item For every query $(f_i,t)$ that received a $\perp$ response, it holds that $f_i(D) \le t$.
    \item For every query $(f_i,t)$ that received a $\top$ response, it holds that $f_i(D) \ge t - d$.
    \item Either Algorithm~\ref{algo:repetitive-SVT} processes all $m$ queries and halts, or it halts earlier. In the latter case, Algorithm~\ref{algo:repetitive-SVT} must has outputted at least $k$ $\top$ responses before halting.
\end{itemize}
\end{lemma}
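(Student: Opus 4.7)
The plan is to decouple the randomness of Algorithm~\ref{algo:sync} into three layers and then carry out a careful failure count for item~3. First, the CDF $\Pr[p\le x]=x^{\gamma}$ with $\gamma=\log(20/\beta)/\log m$ gives $\Pr[p<1/m]=m^{-\gamma}=\beta/20$, so I condition on $p\ge 1/m$ hereafter.

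Next, I classify each query by $f_i(D)$. \emph{Condition B} holds if $f_i(D)\le t-(5/\eps')\log m$, in which case the inner $U_{\eps'}$-test returns $\top$ with probability at most $\tfrac{1}{2}m^{-5}$ by the Laplace tail; \emph{Condition A} holds if $f_i(D)\ge t-(5/\eps')\log m$, in which case (using $d=(10/\eps')\log m$) the inner $V_{d,\eps'}$-test returns $\top$ with probability at most $\tfrac{1}{2}m^{-5}$. At least one of A, B always holds. Call an outer test-round a \emph{mistake} if the outer $U$-round fails while B holds, or the outer $V$-round fails while A holds; and a \emph{bad event} if the algorithm outputs $\perp$ on a query with $f_i(D)>t$ or $\top$ on a query with $f_i(D)<t-d$. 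With $\tau=5m^2$ and $p\in[1/m,1]$, the per-round mistake probability is at most $\tau\cdot\tfrac{1}{2}m^{-5}=\tfrac{5}{2m^3}$, and the per-round bad-event probability is at most $(1-p/2)^{\tau}\le e^{-\tau p/2}\le e^{-5m/2}$ (since the inner test's ``wrong'' probability is $\ge 1/2$ in the relevant regime). The total number of outer test-rounds is at most $m+k'=O(m)$ (one passing round per query plus at most $k'$ failing rounds by the halt rule $c=k'$), so a union bound controls all bad events with failure probability $\le\beta/20$ (using $\beta\ge 2^{-m}$), and a Chernoff bound on the total mistake count $L$ gives $\Pr[L>L_0]\le\beta/20$ for $L_0:=\tfrac{7}{2}\log(1/\beta)/\log m$.

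Condition on this joint event (probability $\ge 1-3\beta/20>1-\beta$). Items~1 and~2 follow immediately from the absence of bad events. For item~3, the alternating structure of Algorithm~\ref{algo:repetitive-SVT}---always starting a query with $U_{\eps'}$ and switching direction after each failure---lets me count failures per query: if Condition~B (resp.\ A) is the relevant condition, each outer $U$-failure (resp.\ outer $V$-failure) on that query is a mistake, and each outer $V$-failure (resp.\ outer $U$-failure) is a ``correct'' non-mistake failure. A case split on query type and output yields $F_i=2M_i+\delta_i$ with $\delta_i\in\{-1,0,+1\}$, where $M_i,F_i$ are the mistake and failure counts on query $i$. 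Summing over completed queries together with the incomplete current query at halt,
\[
k'=\sum_i F_i+F_{\mathrm{current}}\le 2L+N_\top+1,
\]
so $N_\top\ge k'-2L_0-1=k$ by the choice $k'=k+7\log(1/\beta)/\log m$.

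I expect the main obstacle to be this last per-query failure accounting: the $\delta_i$ values depend asymmetrically across the four cases (Q1-$\top$, Q1-$\perp$, Q2-$\top$, Q2-$\perp$) and on the state of the incomplete ``current'' query at halt, and the inequality $k'\le 2L+N_\top+1$ must be tight enough that the slack $7\log(1/\beta)/\log m$ built into $k'$ just covers $2L_0+1$ under the Chernoff tail on $L$.
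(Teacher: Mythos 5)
Your proposal is correct and follows essentially the same route as the paper's proof: condition on $p\ge 1/m$, use the $\tau=5m^2$-fold repetition to rule out ``bad'' $\perp/\top$ outputs (bullets 1--2), define mistakes relative to the midpoint $t-d/2$, bound their number by $O(\log(1/\beta)/\log m)$ with high probability, and then use the alternation/parity count (your $F_i\le 2M_i+\mathbbm{1}\{\top_i\}$ is the paper's $\#\{\text{mistakes}\}\ge\sum_i\lfloor (c_i-1)/2\rfloor$ identity) to conclude $N_\top\ge k'-2L-O(1)\ge k$. The residual off-by-one slack you flag at the end is present in the paper's own constants as well and is absorbed by adjusting the additive term in $k'$, so it is not a substantive gap.
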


\begin{proof}
Let $p$ denote the ``pass'' probability sampled during the initialization. It follows that with probability at least $1-\beta/20$, we have $p\ge \frac{1}{m}$. We condition on this event.

We begin with verifying the first two bullets. Note that if a $\perp$ response is returned on a query $(f_i, t)$, it implies that the algorithm passed $\FTest(U_{\eps'}(f_i,t))$ for $\tau=5m^2$ times. Suppose $f_i(D) > t$. Then, the probability that one execution of $\FTest(U_{\eps'}(f_i,t))$ passes is at most $1 - p\cdot \frac{1}{2} \le 1-\frac{1}{2m}$. Therefore, the probability that all of the $\tau$ independent calls to $\FTest(U_{\eps'}(f_i,t))$ pass is at most $(1-\frac{1}{2m})^{\tau} \le 2^{-2m}$. If this extremely-unlikely event does happen, then we say that a fatal error happens. Since the batch test (The ``for'' loop starting at Line 9) is run for at most $m+k'$ times. The probability that there exists a fatal error is at most $2^{-2m}\cdot (m+k') \le \beta/20$. Similarly, we can show that up to failure probability $\beta/20$, the second bullet holds.

It remains to verify the last bullet. First, note that given a query $(f_i,t)$, we know that either $f_i(D) \le t - \frac{d}{2}$ or $f_i(D) > t - \frac{d}{2}$. During the execution of Algorithm~\ref{algo:repetitive-SVT}, we say that a ``mistake'' happens, if one of the following events happens.
\begin{itemize}
    \item $f_i(D)\le t-\frac{d}{2}$, but Algorithm~\ref{algo:repetitive-SVT} fails a batch of $\FTest(U_{\eps'}(f_i,t))$ tests.
    \item $f_i(D)\ge t-\frac{d}{2}$, but Algorithm~\ref{algo:repetitive-SVT} fails a batch of $\FTest(V_{d,\eps'}(f_i,t))$ tests.
\end{itemize}
We now show that a mistake happens with low probability. We consider the first type of mistake. The second type can be argued similarly.

Recall that $d = \frac{10\Delta}{\eps'}\log(m)$. Given $f_i(D)\ge t- \frac{d}{2}$, we know that one call to $\FTest(U_{\eps'}(f_i,t))$ passes with probability at least $1-\frac{1}{m^5}$. Hence, the probability that all the $\tau$ calls pass is at least $1-\frac{5}{m^3}$. Therefore, the probability that a mistake happens is at most $\frac{5}{m^3}$.

Let $R = \lceil \frac{3\log(1/\beta)}{\log(m)}\rceil$. Since Algorithm~\ref{algo:repetitive-SVT} runs the batch test for at at most $(k+m)$ times, the probability that the number of mistakes is larger than $R$ is at most
\[
(m+k)^R \cdot \left( \frac{5}{m^3} \right)^{R} \le \left( \frac{10}{m^2} \right)^R \le \beta/10.
\]
This bound can be verified by union-bounding over all $R$-subsets of $[m+k]$.

We condition on the event that Algorithm~\ref{algo:repetitive-SVT} makes no more than $R$ mistakes. Now, suppose that Algorithm~\ref{algo:repetitive-SVT} halts before processing all the queries. For each $i\in [m]$, let $c_i\ge 1$ be the number of batch tests that Algorithm~\ref{algo:repetitive-SVT} runs for the $i$-th query. If the algorithm halts before the $i$-th query, set $c_i = 1$. Then, it follows that
\[
\#\{\text{mistakes}\} = \sum_{i=1}^m \lfloor \frac{c_i - 1}{2} \rfloor \le R.
\]
Since Algorithm~\ref{algo:repetitive-SVT} halts earlier, we have
\[
k' = \sum_{i=1}^m (c_i -1).
\]
We observe that
\[
\begin{aligned}
\#\{\text{$\top$ responses}\} 
&= \sum_{i=1}^m \mathbbm{1}\{c_i \bmod 2 = 0\} \\
&= \sum_{i=1}^m (c_i - 1) - 2\lfloor \frac{c_i - 1}{2} \rfloor \\
&= \left( \sum_{i=1}^m (c_i -1) \right) - 2 \left( \sum_{i=1}^m \lfloor \frac{c_i - 1}{2} \rfloor \right) \\
&\ge k' - 2R \\
&\ge k,
\end{aligned}
\]
as desired.
\end{proof}

To conclude, we review the confidence-accuracy tradeoff implied by Lemma~\ref{lemma:SVT-utility}. Fix the privacy budget to be $(\eps,\delta)$. Suppose that one needs to identify $k$ significant queries from a total of $m$ queries, and one desires $(1-\beta)$ confidence bound. Then, the additive error bound that Algorithm~\ref{algo:repetitive-SVT} can offer is
\[
d\le O\left( \Delta \frac{\sqrt{k\log(1/\delta)}\log(m) + \log(1/\beta)}{\eps} \right).
\]

\begin{remark}\label{remark:SVT-pure-DP}
If we want to have a pure-DP privacy guarantee, then we can repeat the analysis above with parameter setup as
\[
\begin{cases}
\gamma = \frac{\log(20/\beta)}{\log(m)},\\
\tau = 5m^2, \\
\eps' = \frac{\eps}{\gamma + k}, \\
k' = k + 7\frac{\log(1/\beta)}{\log(m)}, \\
d = \frac{10\Delta}{\eps'} \log(m).
\end{cases}
\]
Fixing the privacy parameters $\eps$ and $k,m$, the tradeoff between error and confidence is 
\[
d\le O\left( \Delta \frac{k\log(m) + \log(1/\beta)}{\eps} \right).
\]
\end{remark}

\subsection{Applications}

In this section, we apply Algorithm~\ref{algo:repetitive-SVT} to improve the private multiplicative weight update algorithm and the sample complexity of adaptive data analysis.

\subsubsection{Private Multiplicative Weight Update}

It is instructive to review how the private multiplicative weight update algorithm \cite{DBLP:conf/focs/HardtR10} works. Roughly speaking, the private MWU algorithm maintains a publicly-known ``approximation'' distribution $h:\calX\to [0,1]$ to the uniform distribution over the private input $D$. Given a query $f_i:\calX\to [0,1]$, the algorithm uses $\Ex_{x\sim h}[f_i(x)]$ as a ``guess'' for $\Ex_{x\sim D}[f_i(x)]$. Then it defines a function $g_i(D) = \left| \Ex_{x\sim D}[f_i(x)] - \Ex_{x\sim h}[f_i(x)] \right|$, and uses SVT to privately check if $g_i(D) \lessapprox \alpha$, where $\alpha$ is the desired accuracy bound. Querying SVT yields two possible outcomes:
\begin{itemize}
    \item The SVT algorithm returns that $g_i(D) \lessapprox \alpha$. In this case, the algorithm answers this query with $\Ex_{x\sim h}[f_i(x)]$, without paying the privacy loss. 
    \item The SVT algorithm returns that $g_i(D) \gtrapprox \alpha$. Then, the algorithm answers this query by releasing $\Ex_{x\sim D}[f_i(x)] + \Lap(1/n\eps')$. After that, the algorithm also updates the current distribution $h$ according to the MWU rule. In this case, we call the $i$-th round an ``update round''.
\end{itemize}

It is shown in \cite{DBLP:conf/focs/HardtR10} that the number of update rounds is no more than $k = \frac{\log|\calX|}{\alpha^2}$. Therefore, if we desire the final algorithm to be $(\eps,\delta)$-DP, we can afford to use $(\eps',0)$-DP implementation of the sparse vector technique and the Laplace mechanism with $\eps' = \frac{\eps}{\sqrt{k\log(1/\delta)}}$. The standard calculation shows that, the magnitudes of Laplace noises used in the algorithm are all bounded by $O(\frac{\log(m/\beta)}{\eps'n})$ with probability $1-\beta$. Therefore, the best accuracy bound we can hope for is
\[
\alpha \approx \Theta\left(\frac{\log(m/\beta)}{\eps'n}\right)  = \Theta\left(\frac{\sqrt{k\log(1/\delta)}\log(m/\beta)}{\eps n}\right) = \Theta\left(\frac{\sqrt{\log|\calX|\log(1/\delta)}\log(m/\beta)}{\alpha\eps n}\right).
\]

This completes the proof sketch for Theorem~\ref{theo:MWU}. For reader's convenience, we recall the statement below.

\begin{reminder}{Theorem~\ref{theo:MWU}}
For every $\eps,\delta\in (0,1)$, there is an $(\eps,\delta)$-DP algorithm that answers $m$ adaptively chosen linear queries with the following utility guarantee. Let $(\tilde{f}_i)_{i\in [m]}$ denote the outputs of $\calA$. For every $\beta \in (0,1)$, with probability $1-\beta$ we have that $\max_{i\in [m]} \{ |\tilde{f}_i - \Ex_{j\in [n]}[f_i(x_j)]|\} \le\alpha_{\beta}$, where
\[
{\alpha_{\beta}}^2 \le O\left( \frac{\sqrt{\log|\calX|\log(1/\delta)} \log(m/\beta)}{n\eps} \right).
\]
\end{reminder}

\paragraph*{Our improvement.} As we have seen in the last subsection, using our improved SVT (Algorithm~\ref{algo:repetitive-SVT}), we can improve the confidence-accuracy tradeoff of SVT to
\[
\alpha \le O\left(  \frac{\sqrt{\log|\calX|\log(1/\delta)}\log(m)}{n\eps\alpha}+ \frac{\log(1/\beta)}{n \eps} \right).
\]
However, to yield the improved private MWU algorithm, there is one remaining issue. That is, during each update round, we need to use the Laplace mechanism to come up with an estimation of $\Ex_{x\sim D}[f_i(x)]$. With a small-but-noticeable probability, the Laplace noise $\Lap(1/n\eps')$ can be very large so that the estimation given by the Laplace mechanism is erroneous. We can resolve the issue by utilizing Algorithm~\ref{algo:repetitive-SVT}. Namely, after getting the estimation of $\Ex_{x\sim D}[f_i(x)]$ from the Laplace mechanism, we ask the repetitive SVT to test if the estimation is accurate. If it turns out to be inaccurate, we reject the current estimation and request a new one from the Laplace mechanism. Although this process consumes privacy budget, we can nevertheless show that the number of such ``rejections'' is small with overwhelmingly high probability, using an argument similar as the proof of Lemma~\ref{lemma:SVT-utility}. 

Another point worth mentioning is that the privacy analysis for Algorithm~\ref{algo:repetitive-SVT} is done for the whole execution. Therefore, when we use it as a subroutine in the private MWU algorithm, we are \emph{concurrently} running it with the Laplace mechanism. To argue the privacy of the algorithm, we need to use the concurrent composition theorems of DP (see, e.g., \cite{VadhanW21,VadhanZ22-concurrent,Lyu22-concurrent}).

To summarize, we have shown how we can arrive at the following improved private MWU algorithm.

\begin{reminder}{Theorem~\ref{theo:MWU-new}}
% \begin{theorem}\label{theo:MWU-new}
There is an absolute constant $C>0$ such that the following is true. For every $\eps,\delta\in (0,1)$, $m\in \mathbb{N}$ and $\beta \ge 2^{-m}$, there exists 
\[
\alpha \le C\cdot \left( \frac{\sqrt{\log|\calX|\log(1/\delta)} \log(m)}{n\eps\cdot \alpha} + \frac{\log(1/\beta)}{n\eps} \right)
\]
and an $(\eps,\delta)$-DP algorithm that answers $m$ adaptively chosen linear queries such that, with probability $1-\beta$, we have $\max_{i\in [m]} \{ |\tilde{f}_i - \Ex_{j\in [n]}[f_i(x_j)]|\}\le \alpha$.
% \end{theorem}
\end{reminder}

\subsubsection{Adaptive Data Analysis}

The generalization property of DP \cite{DFHPTRR15,DBLP:conf/stoc/BassilyNSSSU16} provides a connection between privacy-preserving algorithms and algorithms for adaptive data analysis. In particular, an $(\eps,\delta)$-DP algorithm that answers $m$ linear queries with additive error $\alpha$ and confidence $1-\beta$, implies an $(\alpha',\beta',m)$-accurate algorithm for adaptive data analysis with linear queries, where $\alpha' = O(\alpha+\eps)$ and $\beta' = O(\beta + \delta/\eps)$.

Applying this generic connection to Theorem~\ref{theo:MWU} proves Theorem~\ref{theo:adaptive-analysis-known}. We now apply the connection with Theorem~\ref{theo:MWU-new} to prove Theorem~\ref{theo:adaptive-analysis-improved}, restated below.

\begin{reminder}{Theorem~\ref{theo:adaptive-analysis-improved}}
For every finite universe $\calX$, every $m\ge 1$ and $\alpha\in (0, 1), \beta \ge 2^{-m}$, there is an $(\alpha, \beta, m)$-accurate adaptive data analysis algorithm with sample complexity
\[
n \le O\left( \min\left( \frac{\sqrt{\log|\calX|\log (1/\beta)}}{\alpha^3}, \frac{\log|X|}{\alpha^4} \right) \log m + \frac{\log(1/\beta)}{\alpha^2}\right).
\]
\end{reminder}

Suppose we aim for an $(\alpha,\beta,m)$-accurate algorithm. Letting $k = \frac{\log|\calX|}{\alpha^2}$, Theorem~\ref{theo:MWU-new} and the generalization property imply that this is possible as long as the inequality below is satisfied
\[
\alpha \ge C'\cdot \left( \frac{\sqrt{\log|\calX|\log(1/\delta)} \log(m)}{n\eps\cdot \alpha} + \frac{\log(1/\beta)}{n\eps} \right)
\]
for some absolute constant $C' > 0$. Solving for $n$ yields that
\[
n\ge C' \left( \frac{\sqrt{k\log (1/\beta)}}{\alpha^2}\log m + \frac{\log(1/\beta)}{\alpha^2}\right).
\]
Therefore, we can choose $n$ to be as small as
\[
O\left( \frac{\sqrt{k\log (1/\beta)}}{\alpha^2}\log m + \frac{\log(1/\beta)}{\alpha^2}\right).
\]
In Theorem~\ref{theo:adaptive-analysis-improved}, we also claimed an incomparable bound
\[
n\le O\left( \frac{k}{\alpha^2}\log m + \frac{\log(1/\beta)}{\alpha^2}\right).
\]
This can achieved using a pure-DP version of Theorem~\ref{theo:MWU-new}, which can be derived from the pure-DP version of Algorithm~\ref{algo:repetitive-SVT} (see Remark~\ref{remark:SVT-pure-DP}).

\begin{small}
\paragraph*{Acknowledgements}
Edith Cohen is partially supported by Israel Science Foundation (grant no. 1595/19).
Uri Stemmer is partially supported by the Israel Science Foundation (grant 1871/19) and by Len Blavatnik and the Blavatnik Family foundation.

We are grateful to Thomas Steinke for helpful discussions and comments on an early version of the paper.
\end{small}

\bibliographystyle{alpha}
\bibliography{main}

\appendix

\section{Missing Proofs}\label{appendix:proof-privacy}

\subsection{Missing Proofs in Section~\ref{sec:metaprivacyproofs}}

In this section, we show the proof of Lemmas~\ref{lemma:max-divergence-one-sided} and \ref{lemma:divergence-one-sided}.
\subsubsection{Preliminaries}

Recall we have defined the $\alpha$-order R\'enyi divergence of $P$ from $Q$ as
\[
D_{\alpha}(P \| Q) := \frac{1}{\alpha - 1} \log \left( \sum_{x\in X} P(x) \left( \frac{P(x)}{Q(x)} \right)^{\alpha - 1} \right).
\]
The max-divergence is defined as
\[
D_{\infty}(P \| Q) := \log(\max_{x} \{P(x)/Q(x)\}).
\]

We need a composition property of R\'enyi divergence. Suppose $P,Q$ are distributions on $(x,y)\in X\times Y$. For each $x\in X$, let $P_x$ denote the conditional distribution of $y\in (x,y)\sim P$ conditioning on $x$. Also let $P^{X}$ denote the marginal distribution of $P$ on $X$. Also define similar pieces of notation for $Q$. Then, we have
\[
D_{\alpha}(P^X\| Q^X) + \inf_{x\in X} \{ D_{\alpha}(P_x \| Q_x) \} \le D_{\alpha}(P\|Q ) \le D_{\alpha}(P^X\| Q^X) + \sup_{x\in X} \{ D_{\alpha}(P_x \| Q_x) \}.
\]

\subsubsection{Proofs of Lemmas~\ref{lemma:max-divergence-one-sided} and ~\ref{lemma:divergence-one-sided}}

We will analyze the $0$-favored coin-flipping mechanism. For readers' convenience, we restate it below.

\begin{algorithm2e}[H]
\LinesNumbered
    \caption{(Restate of Algorithm~\ref{algo:coin-flip-asymmetric}) The $0$-favored coin-flipping mechanism.}
    \label{algo:coin-flip-asymmetric-restated}
    \DontPrintSemicolon
    \KwIn{
           An input bit $b\in \{0,1\}$, a privacy parameter $\eps \in (0,1)$, an integer $k\ge 1$.
    }
    \SetKwProg{Init}{initialize}{:}{}
    \Init{}{
        $c\gets 0$ \;
    }
    \While{$c<k$} {
        Receive a query $(p,q)\in [0,1]^2$, promised that $0 \le q\le p\le e^{\eps} q$ and $(1-p)\le e^{\eps} (1-q)$ \;
        \If{$b=0$}{
            Sample $r\sim \Ber(p)$\;
        }
        \Else{
            Sample $r\sim \Ber(q)$\;
        }
        \If{$r=1$}{
            $c\gets c + 1$ \;
        }
        Output $r$ \;
    }
\end{algorithm2e}

\begin{reminder}{Lemma~\ref{lemma:max-divergence-one-sided}}
Consider Algorithm~\ref{algo:coin-flip-asymmetric-restated}. Fix $\calA$ to be an arbitrary adversary interacting with Algorithm~\ref{algo:coin-flip-asymmetric-restated}. Let $P,Q$ denote the distributions of the interaction between $\calA$ and Algorithm~\ref{algo:coin-flip-asymmetric-restated} when the sensitive bit is $0$ or $1$, respectively. Then, for any $\alpha \in (1, \infty)$, we have $D_{\infty}(P\|Q) \le k\eps$.
\end{reminder}

\begin{proof}
We first consider the case that $k = 1$ and that the adversary is deterministic. In this case, note that the algorithm always outputs $0$ before a final round, in which it outputs a $1$ and halts. Therefore, any deterministic adversary strategy can be described by a list of probability pairs $(p_1,q_1),(p_2,q_2),\dots $. To interact with the algorithm, the adversary just keeps sending the next pair in the list, until it gets a $1$-response from the algorithm. 

Hence, the distribution $P$ and $Q$ can be described by two distributions on positive integers. Where $P = i$ denotes the outcome that the algorithm halts after the $i$-th round when it holds $b=0$. It is easy to see that
\[
\Pr[P = i] = \prod_{j < i} (1-p_j) p_i,
\]
Similarly, $Q = i$ denotes the outcome that the algorithm halts after the $i$-th round when it holds $b = 1$. We have
\[
\Pr[Q = i] = \prod_{j < i} (1 - q_j) q_i.
\]
One can then write 
\[
D_{\infty}(P\|Q) = \log\left( \sup_{i\ge 1} \left\{ \frac{\Pr[P = i]}{\Pr[Q = i]} \right\} \right).
\]
For every $i\ge 1$, it is easy to see
\[
\Pr[P = i] = \sum_{j < i} (1-p_j) p_i \le \prod_{j < i} (1-q_j) p_i \le e^\eps \sum_{j < i} (1-q_j) q_i = e^{\eps} \Pr[Q = i].
\]
This completes the proof for the base case. 

For the case that the adversary is randomized, we argue as follows. For any possible random coins of the adversary $r$, let $P_r,Q_r$ denote the distributions conditioning on $r$. We have
\[
D_{\infty}(P_r \| Q_r) \le \eps,
\]
because the adversary becomes deterministic conditioning on $r$. Hence, we have
\[
D_{\infty}(P \| Q) \le \sup_{r} \left\{ D_{\infty}(P_r \| Q_r) \right\} \le \eps
\]
as desired.

For the case of $k = c > 1$, we can think of the interaction as a concatenation of at most $c$ executions of the algorithm with $k = 1$ each. The proposition follows from the composition property of max-divergence.
\end{proof}

\begin{reminder}{Lemma~\ref{lemma:divergence-one-sided}}
Consider the same setup as in Lemma~\ref{lemma:max-divergence-one-sided}. For any $\alpha \in (1, \infty)$, we have $D_{\alpha}(P\|Q) \le 3k \alpha  \eps^2$.
\end{reminder}

\begin{proof}
We assume that $\alpha \eps < \frac{1}{3}$, as otherwise the conclusion follows easily from Lemma~\ref{lemma:max-divergence-one-sided}.

Again we start by considering the case that $k = 1$ and that the adversary is deterministic. We inherit the notation $(p_1,q_1),(p_2,q_2),\dots $ from the proof of Lemma~\ref{lemma:max-divergence-one-sided}. Recall
\[
\Pr[P = i] = \prod_{j < i} p_j  (1-p_i),
\]
and
\[
\Pr[Q = i] = \prod_{j < i} q_j  (1 - q_i).
\]
Then, we have
\begin{align}
e^{(\alpha-1)D_{\alpha}(P\| Q)} = \sum_{i=1}^{\infty} \Pr[P = i] \left( \frac{\Pr[P = i]}{\Pr[Q = i]}\right)^{\alpha - 1}. \label{eq:divergence-main}
\end{align}

We claim the following.
\begin{claim}\label{claim:induction-divergence}
Assume $\alpha \eps < 1/3$. For every $m \ge 1$, it holds that
\begin{align}
\sum_{i=1}^{m} \Pr[P = i] \left( \frac{\Pr[P = i]}{\Pr[Q = i]}\right)^{\alpha - 1} +\Pr[P > m] \left( \frac{\Pr[P > m]}{\Pr[Q > m]} \right)^{\alpha - 1}  \le 1 + 3\alpha(\alpha - 1) \eps^2.\label{eq:claim-induction}
\end{align}
\end{claim}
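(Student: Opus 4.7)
The plan is to prove Claim~\ref{claim:induction-divergence} by induction on $m$ with a strengthened inductive invariant. Let $F(m)$ denote the left-hand side of the claim, and introduce the per-round ratios
\[
a_l := \frac{(1-p_l)^\alpha}{(1-q_l)^{\alpha-1}}, \qquad b_l := \frac{p_l^\alpha}{q_l^{\alpha-1}},
\]
together with $R(m) := \Pr[P > m]^\alpha / \Pr[Q > m]^{\alpha-1}$. Using the explicit formulas $\Pr[P > m] = \prod_{l \le m}(1-p_l)$ and $\Pr[P = i] = \big(\prod_{l < i}(1-p_l)\big) p_i$ (and the analogous ones for $Q$), I would verify the factorizations $R(m) = \prod_{l \le m} a_l$ and $\Pr[P=i]^\alpha / \Pr[Q=i]^{\alpha-1} = \big(\prod_{l < i} a_l\big)\cdot b_i$. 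A short calculation then gives the telescoping identity
\[
F(m+1) - F(m) = R(m)\cdot(a_{m+1} + b_{m+1} - 1),
\]
together with $R(m+1) = R(m)\cdot a_{m+1}$.

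With these in hand, I would induct on $m$ with the \emph{strengthened} invariant $F(m) - 1 \le C\cdot(1 - R(m))$, where $C := 3\alpha(\alpha-1)\eps^2$. This holds with equality at $m = 0$ (since $F(0) = R(0) = 1$) and implies the claim for every $m \ge 1$ because $R(m) \ge 0$. A brief algebraic manipulation, plugging the telescoping identity and $R(m+1) = R(m) a_{m+1}$ into the target $F(m+1) - 1 \le C(1 - R(m+1))$, shows that the inductive step is equivalent to the following per-step inequality, to be verified for every admissible query $(p, q)$ with $q \le p \le e^\eps q$ and $(1-q) \le e^\eps(1-p)$:
\[
(1 + C)\, a + b \,\le\, 1 + C, \qquad a := \frac{(1-p)^\alpha}{(1-q)^{\alpha-1}}, \ b := \frac{p^\alpha}{q^{\alpha-1}}.
\]

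The main obstacle is establishing this per-step inequality, and I expect this to be the only delicate calculation. My plan is to reparametrize by $r := p/q \in [1, e^\eps]$ and $s := (1-p)/(1-q) \in [e^{-\eps}, 1]$, which satisfy the linear constraint $q r + (1-q)s = 1$ (a rewriting of $p + (1-p) = 1$). The function $\Phi(r, s) := q r^\alpha + (1+C)(1-q) s^\alpha$ is convex on $\mathbb{R}_+^2$ because $\alpha > 1$, so along the one-dimensional feasible segment cut out by the linear constraint it attains its maximum at one of the two endpoints, namely $r = e^\eps$ (with $s = (1-qe^\eps)/(1-q)$) or $s = e^{-\eps}$ (with $r = (1-(1-q)e^{-\eps})/q$). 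At each endpoint I would Taylor-expand $\Phi$ around $(r, s) = (1, 1)$: the identity $q(r-1) + (1-q)(s-1) = 0$ kills the first-order term, and the surviving quadratic contribution is $\tfrac{1}{2}\alpha(\alpha-1)(q(r-1)^2 + (1-q)(s-1)^2) + O(\alpha^2 \eps^3 q)$, which is at most $Cq$ once the hypothesis $\alpha\eps < 1/3$ is used to control higher-order remainders; combined with the $-Cq$ coming from $\Phi(1,1) - (1+C)$, this gives $\Phi \le 1 + C$ throughout the feasible region.

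Once the per-step inequality is in place, the induction closes and yields $F(m) - 1 \le C(1 - R(m)) \le C = 3\alpha(\alpha-1)\eps^2$ for every $m \ge 1$, which is precisely the bound in~\eqref{eq:claim-induction}.
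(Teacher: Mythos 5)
Your proposal is correct, and it reaches the same crux as the paper but by a genuinely different route. Both arguments reduce the claim to the same one-round inequality: with $a=(1-p)^{\alpha}/(1-q)^{\alpha-1}$, $b=p^{\alpha}/q^{\alpha-1}$ and $C=3\alpha(\alpha-1)\eps^2$, one must show $(1+C)a+b\le 1+C$ for every admissible query; this is exactly the quantity the paper bounds in its inductive step (the right-hand side of \eqref{eq:claim-step-3} is $b_1+(1+C)a_1$). The differences are twofold. First, the bookkeeping: the paper peels off the \emph{first} round and uses that, conditioned on $P>1$ and $Q>1$, the tail is again an instance of the same game with $m-1$ rounds, so the unstrengthened hypothesis ``$\le 1+C$'' suffices; you instead append rounds at the end, which forces the strengthened invariant $F(m)-1\le C(1-R(m))$ — your factorizations, the telescoping identity $F(m+1)-F(m)=R(m)(a_{m+1}+b_{m+1}-1)$, and the equivalence of the inductive step with the per-round inequality all check out. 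Second, the per-round inequality itself: the paper proves it by a case split ($p_1\ge 2/3$ versus $p_1<2/3$) using Fact~\ref{fact:bernoulli-divergence} and explicit exponential estimates, whereas you use convexity along the constraint segment plus a second-order expansion of $\Phi$ at $(1,1)$, which is more systematic and makes visible exactly where the slack $\Phi(1,1)=1+C-Cq$ is spent; the price is that you must control Lagrange remainders, and the constants do work out under $\alpha\eps<1/3$ (the factors $\xi^{\alpha-2},\eta^{\alpha-2}\le e^{\alpha\eps}$ and $1+C\le 4/3$ leave the quadratic term comfortably below $Cq$). Two details to make explicit when writing it up: (i) the linear constraint does not quite kill the first-order term of $\Phi$ — what survives is $C\alpha(1-q)(s-1)$, which is $\le 0$ and so only helps; (ii) the term $(1-q)(s-1)^2$ acquires its crucial factor of $q$ only through the constraint identity $(1-q)(1-s)=q(r-1)$, giving $(1-q)(s-1)^2=q(r-1)(1-s)\le q(e^{\eps}-1)(1-e^{-\eps})$; without this it behaves like $q^2(r-1)^2/(1-q)$ and is not obviously $O(q\eps^2)$ as $q\to 1$. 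Finally, note the feasible segment's endpoints are $(1,1)$ and whichever of $r=e^{\eps}$, $s=e^{-\eps}$ binds first; since your Taylor bound is valid at every feasible point, the convexity reduction is in fact dispensable.
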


We assume Claim~\ref{claim:induction-divergence} for now and quickly finish the proof of Lemma~\ref{lemma:divergence-one-sided}. Given Claim~\ref{claim:induction-divergence}, take the limit $m \to \infty$ and plug it into \eqref{eq:divergence-main}:
\[
e^{(\alpha-1)D_{\alpha}(P\| Q)} = \sum_{i=1}^{\infty} \Pr[P = i] \left( \frac{\Pr[P = i]}{\Pr[Q = i]}\right)^{\alpha - 1} \le 1 + 3\alpha (\alpha - 1) \eps^2.
\]
Taking logarithm on both sides and using $\log(1+x)\le x$ show that $D_\alpha(P\| Q)\le 3\alpha \eps^2$. This completes the proof for the base case.

For the case that $k = c > 1$, we view the interaction as a sequential composition of $c$ executions of Algorithm~\ref{algo:coin-flip-asymmetric-restated} with $k=1$ each, and use the composition property of R\'enyi divergence. For the case of randomized adversaries, conditioning on the random coins of the adversary $r$, the conditional distributions of the interaction satisfy that 
\[
D_{\alpha}(P_r \| Q_r) \le 3\alpha k\eps^2.
\]
Then, we have
\[
D_{\alpha}(P\| Q) \le \sup_{r} \left\{ D_{\alpha}(P \| Q) \right\} \le 3\alpha k\eps^2,
\]
completing the proof\footnote{We remark that this first inequality will not be true if we replace the supremum operator over $r$ with expectation, as $D_{\alpha}$ is not necessarily convex (see \cite{DBLP:journals/tit/ErvenH14-renyi-divergence} for a counterexample).}.
\end{proof}

In the following, we prove Claim~\ref{claim:induction-divergence}. 

\begin{proofof}{Claim~\ref{claim:induction-divergence}}

We use induction on $m$. For the case of $m = 1$, we have
\begin{align}
& ~~~~ \Pr[P = 1]\left( \frac{\Pr[P = 1]}{\Pr[Q = 1]} \right)^{\alpha - 1}
+
\Pr[P > 1]\left( \frac{\Pr[P > 1]}{\Pr[Q > 1]} \right)^{\alpha - 1}
\notag \\
& = 
p_1 \left( \frac{p_1}{q_1} \right)^{\alpha - 1} + (1-p_1) \left( \frac{1-p_1}{1-q_1} \right)^{\alpha - 1}. \label{eq:base-case}
\end{align}

We need the following simple fact.

\begin{fact}\label{fact:bernoulli-divergence}
Assume $\alpha\eps\le 1/3$. Let $0\le p,q \le 1$ be such that
\[
\max(D_{\infty}(\Ber(p)\| \Ber(q)), D_{\infty}(\Ber(q)\|\Ber(p))) \le \eps.
\]
Then,
\[
e^{(\alpha - 1)D_{\alpha}(\Ber(p)\|\Ber(q))} \le 1 + \alpha (\alpha - 1)\eps^2.
\]
\end{fact}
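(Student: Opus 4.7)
The plan is to reduce Fact~\ref{fact:bernoulli-divergence} to a standard cumulant-generating-function (CGF) estimate. Set $u := \log(p/q)$ and $v := \log((1-p)/(1-q))$; the two max-divergence hypotheses force $|u|,|v| \le \eps$ (edge cases $p,q \in \{0,1\}$ are automatically ruled out since $D_\infty \le \eps < \infty$ forces $p,q \in (0,1)$), and the identity $q e^u + (1-q) e^v = p + (1-p) = 1$ holds. A direct expansion gives
\[
e^{(\alpha-1)D_\alpha(\Ber(p)\|\Ber(q))} = p^\alpha q^{1-\alpha} + (1-p)^\alpha(1-q)^{1-\alpha} = q e^{\alpha u} + (1-q) e^{\alpha v} = \exp(\phi(\alpha)),
\]
where $\phi(t) := \log\bigl(q e^{tu} + (1-q) e^{tv}\bigr)$ is the CGF of the two-point random variable $X$ with $\Pr[X = u] = q$ and $\Pr[X = v] = 1-q$. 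The crucial observation is $\phi(0) = \phi(1) = 0$: the first is trivial, the second is the identity above. Thus it suffices to prove $\phi(\alpha) \le \alpha(\alpha-1)\eps^2/2$ and then exponentiate.

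Next I would use Taylor's theorem with integral remainder about $0$: $\phi(\alpha) = \alpha\phi'(0) + \int_0^\alpha (\alpha - t)\phi''(t)\, dt$. Specializing to $\alpha = 1$ and invoking $\phi(1) = 0$ gives $\phi'(0) = -\int_0^1 (1-t)\phi''(t)\, dt$. Substituting and splitting $\int_0^\alpha = \int_0^1 + \int_1^\alpha$ (recall $\alpha \ge 1$) yields, after the algebraic identity $(\alpha - t) - \alpha(1-t) = t(\alpha-1)$,
\[
\phi(\alpha) \;=\; (\alpha - 1)\int_0^1 t\,\phi''(t)\, dt \;+\; \int_1^\alpha (\alpha - t)\,\phi''(t)\, dt.
\]
Now $\phi''(t)$ is the variance of $X$ under the exponentially tilted measure with weights proportional to $q e^{tu}$ and $(1-q) e^{tv}$, so it is nonnegative and bounded above by the second moment: $\phi''(t) \le \Ex_t[X^2] \le \max(u^2, v^2) \le \eps^2$. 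Plugging this uniform bound in, the two integrals evaluate to $(\alpha-1)\eps^2/2$ and $(\alpha-1)^2\eps^2/2$ respectively, summing to $\alpha(\alpha-1)\eps^2/2$ as desired.

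Finally, exponentiating gives $q e^{\alpha u} + (1-q)e^{\alpha v} \le \exp\bigl(\alpha(\alpha-1)\eps^2/2\bigr)$. The hypothesis $\alpha\eps \le 1/3$ forces $\alpha(\alpha-1)\eps^2/2 \le \alpha^2\eps^2/2 \le 1/18$, a regime in which the elementary inequality $e^y \le 1 + 2y$ holds comfortably (it is valid on all of $[0,1]$). Setting $y = \alpha(\alpha-1)\eps^2/2$ then gives exactly $\exp(\phi(\alpha)) \le 1 + \alpha(\alpha-1)\eps^2$, matching the claim. I expect the only real bookkeeping hazard to be the integral-remainder manipulation in the second paragraph—carrying the term $\alpha\phi'(0)$ through using $\phi(1) = 0$ and keeping signs straight; everything else is routine.
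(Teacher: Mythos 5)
Your proposal is correct. The endgame is identical to the paper's: both establish $(\alpha-1)D_{\alpha}(\Ber(p)\|\Ber(q)) \le \alpha(\alpha-1)\eps^2/2$ and then convert via $e^{y}\le 1+2y$ in the small-$y$ regime guaranteed by $\alpha\eps\le 1/3$. The difference is in how that key divergence bound is obtained: the paper simply cites the known pure-DP-to-R\'enyi-DP conversion ($D_\alpha \le \alpha\eps^2/2$, Proposition 3.3 of Bun--Steinke), whereas you prove it from scratch for the two-point case via a cumulant-generating-function argument --- writing $e^{(\alpha-1)D_\alpha} = qe^{\alpha u}+(1-q)e^{\alpha v} = e^{\phi(\alpha)}$ with $|u|,|v|\le\eps$, exploiting $\phi(0)=\phi(1)=0$, and controlling $\phi''(t)=\mathrm{Var}_t(X)\le\eps^2$ through Taylor's theorem with integral remainder. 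Your bookkeeping there checks out: the identity $(\alpha-t)-\alpha(1-t)=t(\alpha-1)$, the nonnegativity of the weights $t$ and $\alpha-t$ on the respective intervals, and the evaluation of the two integrals to $(\alpha-1)\eps^2/2$ and $(\alpha-1)^2\eps^2/2$ are all right. What your route buys is self-containedness (and it reproves exactly the cited proposition in this special case); what the paper's route buys is brevity. One small inaccuracy worth fixing: the hypothesis does not literally force $p,q\in(0,1)$ --- the degenerate cases $p=q\in\{0,1\}$ satisfy $D_\infty=0\le\eps$ --- but there $D_\alpha=0$ and the claim is trivial, so a one-line remark disposes of them before defining $u$ and $v$.
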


We present a simple proof of Fact~\ref{fact:bernoulli-divergence} by Steinke \cite{Ste-personal}.
\begin{proof}
The promise on $p,q$ implies that $D_{\alpha}(\Ber(p)\|\Ber(q))\le \frac{\alpha \eps^2}{2}$ (see, e.g., \cite[Proposition 3.3]{DBLP:conf/tcc/BunS16}). Then, using the inequality $e^x\le 1+2x$ for $x\in (0,1.25)$ gives us that
\[
e^{(\alpha - 1)D_{\alpha}(\Ber(p)\|\Ber(q))} \le e^{\frac{\eps^2\alpha(\alpha-1)}{2}} \le 1 + \alpha (\alpha-1)\eps^2,
\]
as desired.
\end{proof}

By Fact~\ref{fact:bernoulli-divergence}, we have
\[
p_1 \left( \frac{p_1}{q_1} \right)^{\alpha - 1} + (1-p_1) \left( \frac{1-p_1}{1-q_1} \right)^{\alpha - 1} \le 1 + (\alpha - 1)\alpha \eps^2.
\]
This verifies the base case for $m=1$. Now suppose Claim~\ref{claim:induction-divergence} is true for $m - 1 \ge 1$, we prove for the case of $m$. We have
\begin{align}
&~~~~\sum_{i=1}^{m} \Pr[P = i] \left( \frac{\Pr[P = i]}{\Pr[Q = i]}\right)^{\alpha - 1} +\Pr[P > m] \left( \frac{\Pr[P > m]}{\Pr[Q > m]} \right)^{\alpha - 1} \notag \\
&=
\Pr[P = 1]\left( \frac{\Pr[P = 1]}{\Pr[Q = 1]} \right)^{\alpha - 1} + 
\Pr[P > 1] \left( \frac{\Pr[P > 1]}{\Pr[Q > 1]} \right)^{\alpha - 1} \times \notag  \\
& ~~~~ \left( \sum_{i=2}^{m} \Pr[P = i | P > 1] \left( \frac{\Pr[P = i | P > 1]}{\Pr[Q = i | Q > 1]}\right)^{\alpha - 1} + \Pr[P > m | P > 1] \left( \frac{\Pr[P > m| P > 1]}{\Pr[Q > m | Q > 1]} \right)^{\alpha - 1} \right). \label{eq:claim-step-2} 
\end{align}
We look into the bracket on the third line. Conditioning on $P > 1$ (and $Q > 1$), the expression inside the bracket describes a $(m-1)$-round interaction between the adversary and the algorithm, where the probability pairs are given by $(p_2,q_2),\dots, (p_m, q_m)$. Therefore, by induction hypothesis, it is subject to the bound given by \eqref{eq:claim-induction}, which allows to bound \eqref{eq:claim-step-2} by
\begin{align}
&~~~~\sum_{i=1}^{m} \Pr[P = i] \left( \frac{\Pr[P = i]}{\Pr[Q = i]}\right)^{\alpha - 1} +\Pr[P > m] \left( \frac{\Pr[P > m]}{\Pr[Q > m]} \right)^{\alpha - 1} \notag  \\
&\le 
\Pr[P = 1]\left( \frac{\Pr[P = 1]}{\Pr[Q = 1]} \right)^{\alpha - 1} + 
\Pr[P > 1] \left( \frac{\Pr[P > 1]}{\Pr[Q > 1]} \right)^{\alpha - 1} \cdot \left( 1 + 3\alpha (\alpha - 1) \eps^2 \right). \label{eq:claim-step-3} 
\end{align}
Now, depending on whether $p_1 = \Pr[P = 1]$ is large or small, we consider two cases.

\medskip\noindent\textbf{Case 1: $p_1 \ge \frac{2}{3}$.} In this case, we use $\Pr[P > 1] = 1-p_1 \le \frac{1}{3}$, $\Pr[P > 1] \le \Pr[Q > 1]$, and Fact~\ref{fact:bernoulli-divergence} to bound \eqref{eq:claim-step-3} as
\[
\begin{aligned}
&~~~~ \Pr[P = 1]\left( \frac{\Pr[P = 1]}{\Pr[Q = 1]} \right)^{\alpha - 1} + 
\Pr[P > 1] \left( \frac{\Pr[P > 1]}{\Pr[Q > 1]} \right)^{\alpha - 1} \cdot \left( 1 + 3\alpha (\alpha - 1) \eps^2 \right) \\
&= p_1 \left( \frac{p_1}{q_1} \right)^{\alpha - 1} + (1-p_1) \left( \frac{1-p_1}{1-q_1} \right)^{\alpha - 1} + (1-p_1) \left( \frac{1-p_1}{1-q_1} \right)^{\alpha - 1} \cdot 3\alpha (\alpha - 1) \eps^2 \\
&\le 1 + \alpha(\alpha - 1)\eps^2 + (1-p_1) 3 \alpha (\alpha - 1) \eps^2 \\
&\le 1 + 3 \alpha(\alpha - 1) \eps^2.
\end{aligned}
\]

\medskip\noindent\textbf{Case 2: $p_1 < \frac{2}{3}$.} In this case, write $\lambda = \log(p_1/q_1) \in [0, \eps]$. We bound \eqref{eq:claim-step-3} as
\begin{align}
&~~~~ \Pr[P = 1]\left( \frac{\Pr[P = 1]}{\Pr[Q = 1]} \right)^{\alpha - 1} + 
\Pr[P > 1] \left( \frac{\Pr[P > 1]}{\Pr[Q > 1]} \right)^{\alpha - 1} \cdot \left( 1 + 3\alpha (\alpha - 1) \eps^2 \right) \notag \\
&\le p_1  e^{\lambda (\alpha - 1)} + (1-p_1) \left(1 - \frac{p_1 - q_1}{1 - q_1} \right)^{\alpha - 1} + (1-p_1) \cdot 3\alpha (\alpha - 1) \eps^2 \notag \\
&\le p_1 e^{\lambda (\alpha - 1)} + (1-p_1) e^{-\frac{p_1-q_1}{1 - q_1}(\alpha - 1)} + (1-p_1) \cdot 3\alpha (\alpha - 1) \eps^2, \label{eq:claim-step-4}
\end{align}
where we used $\Pr[P>1] \le \Pr[Q > 1]$ and $(1-x)^y \le e^{-xy}$ in the second and third line, respectively. 

Then we re-write \eqref{eq:claim-step-4} as
\begin{align}
p_1 (e^{\lambda (\alpha - 1)} - 1) + (1-p_1) (e^{-\frac{p_1-q_1}{1 - q_1}(\alpha - 1)} - 1) + 1 + (1-p_1) \cdot 3\alpha (\alpha - 1) \eps^2. \label{eq:claim-step-5}
\end{align}
Now, it is clear that to bound \eqref{eq:claim-step-5} by $3\alpha (\alpha - 1) \eps^2$, it suffices to show that 
\begin{align}
p_1 (e^{\lambda (\alpha - 1)} - 1) + (1-p_1) (e^{-\frac{p_1-q_1}{1 - q_1}(\alpha - 1)} - 1) \le p_1\cdot 3 \alpha(\alpha - 1) \eps^2. \label{eq:claim-step-6}
\end{align}
We use the basic inequality $e^{x} \le 1 + x + x^2$, which is valid for $x\in (-\infty, 1.75]$, to bound \eqref{eq:claim-step-6} as
\begin{align}
&~~~~ p_1 (e^{\lambda (\alpha - 1)} - 1) + (1-p_1) (e^{-\frac{p_1-q_1}{1 - q_1}(\alpha - 1)} - 1) \notag \\
&\le p_1 ( \lambda(\alpha - 1) + \lambda^2 (\alpha - 1)^2 ) + (1-p_1) \left( -\frac{p_1-q_1}{1-q_1}(\alpha - 1) + \frac{(p_1-q_1)^2}{(1-q_1)^2}(\alpha - 1)^2 \right). \label{eq:claim-step-7}
\end{align}
Recall $q_1 = e^{-\lambda} p_1$ and $0\le \lambda \le \eps \le 1/3$, which implies that
\[
p_1 - q_1 = p_1(1 - e^{-\lambda}) \in \big( p_1 (\lambda - \lambda^2), p_1 \lambda \big).
\]
We proceed to bound \eqref{eq:claim-step-7} as
\begin{align}
& ~~~~  p_1 ( \lambda(\alpha - 1) + \lambda^2 (\alpha - 1)^2 ) + (1-p_1) \left( -\frac{p_1-q_1}{1-q_1}(\alpha - 1) + \frac{(p_1-q_1)^2}{(1-q_1)^2}(\alpha - 1)^2 \right) \notag \\
& = p_1 \lambda(\alpha -1) + p_1 \lambda^2 (\alpha-1)^2 - \frac{1-p_1}{1-q_1} (p_1 - q_1)(\alpha - 1) + \frac{(1-p_1)}{(1-q_1)^2}(p_1-q_1)^2(\alpha - 1)^2. \label{eq:claim-step-8}
\end{align}
Observe that
\[
\begin{aligned}
&~~~~ p_1 \lambda(\alpha -1) -  \frac{1-p_1}{1-q_1} (p_1 - q_1)(\alpha - 1) \\
&\le p_1 \lambda(\alpha -1) -  (1 - \frac{p_1 - q_1}{1-q_1}) (p_1 - q_1)(\alpha - 1) \\
&\le p_1 \lambda(\alpha - 1) -  (p_1 - q_1)(\alpha - 1) + \frac{p_1 - q_1}{1-q_1} (p_1 - q_1)(\alpha - 1) \\
&\le \lambda^2 p_1 (\alpha - 1) + \frac{p_1^2}{1-q_1} \lambda^2  (\alpha - 1) & \text{($p_1-q_1\in (p_1\lambda - p_1\lambda^2, p_1\lambda)$)}\\
&\le p_1 \lambda^2 (\alpha - 1) + 2 p_1 \lambda^2  (\alpha - 1), & \text{($\frac{p_1}{1-q_1}\le 2$)}
\end{aligned}
\]
and
\[
\begin{aligned}
&~~~~ p_1 \lambda^2 (\alpha-1)^2 + \frac{(1-p_1)}{(1-q_1)^2}(p_1-q_1)^2(\alpha - 1)^2 \\
&\le p_1 \lambda^2 (\alpha - 1)^2 + 2 p_1 \lambda^2 (\alpha - 1)^2. & \text{($\frac{p_1}{1-q_1}\le 2$ and $\frac{1-p_1}{1-q_1}\le 1$)}
\end{aligned}
\]
Plug them back in \eqref{eq:claim-step-8}. We conclude that
\begin{align}
& ~~~~ p_1 \lambda(\alpha -1) + p_1 \lambda^2 (\alpha-1)^2 - \frac{1-p_1}{1-q_1} (p_1 - q_1)(\alpha - 1) + \frac{(1-p_1)}{(1-q_1)^2}(p_1-q_1)^2(\alpha - 1)^2 \notag \\
&\le 3 p_1 \lambda^2 (\alpha - 1) + 3 p_1 \lambda^2 (\alpha - 1)^2 \notag \\
&\le 3p_1 \lambda^2 (\alpha - 1) \alpha. \label{eq:cliam-step-9}
\end{align}
Hence, by a series of deductions (\eqref{eq:claim-step-7}, \eqref{eq:claim-step-8}, and \eqref{eq:cliam-step-9}), we have finally justified \eqref{eq:claim-step-6}, which completes the proof.
\end{proofof}

\end{document}